\documentclass[11pt]{article}
\usepackage{fullpage}
\usepackage{ifthen}
\usepackage{xspace}
\usepackage{amsmath, amssymb, amsfonts}
\usepackage{amsthm}
\usepackage{bm}
\usepackage[pdftex]{graphicx}
\usepackage[english]{babel}
\usepackage[showdeletions]{color-edits}
\usepackage{algorithm}
\usepackage{algorithmic}

\begin{document}

\newcommand{\tr}{\textsf{triangle}}
\newcommand{\li}{\textsf{line}}
\newcommand{\xx}[3]{{#1}^{#2}_{#3}}
\newcommand{\mc}[1]{\mathcal{#1}}
\newcommand{\BRi}[1]{\mathcal{B}^{R_i}(#1)}
\newcommand{\bK}{\mc{\bar K}}
\newcommand{\bq}{\bar q}
\newcommand{\Ex}[1]{\mathop{\mathrm{E}}\left[{#1}\right]}
\newcommand{\ex}{\mathop{\mathrm{E}}}
\newcommand{\pr}[1]{\Pr\left[{#1}\right]}

\newtheorem{theorem}{Theorem}
\newtheorem{claim}{Claim}
\newtheorem{corollary}[theorem]{Corollary}
\newtheorem{proposition}[theorem]{Proposition}
\newtheorem{lemma}[theorem]{Lemma}
\newtheorem{property}{Property}
\newtheorem{fact}[theorem]{Fact}

\newtheorem{definition}{Definition}
\newtheorem{example}{Example}
\newtheorem{assumption}[theorem]{Assumption}

\newtheorem{remark}{Remark}

\addauthor{pc}{red}

\title{Generalized Mirror Descents in Congestion Games\thanks{%
Part of the results in this paper have appeared in preliminary form
in the proceedings of AAMAS 2014 \cite{CL14} and as an extended abstract in AAMAS 2015 \cite{CL15}.}}

\author{%
  Po-An Chen\thanks{%
    Institute of Information Management, National Chiao Tung University, Taiwan.
    Email: poanchen@nctu.edu.tw.
    Supported in part by NSC 102-2221-E-009-061-MY2.
  }
  \and
  Chi-Jen Lu\thanks{%
    Institute of Information Science, Academia Sinica, Taiwan.
    Email: cjlu@iis.sinica.edu.tw.
  }
}
\date{}

\begin{titlepage}
\maketitle

\begin{abstract}
Different types of dynamics have been studied in repeated game play, and one of them which has received much attention recently consists of those based on ``no-regret" algorithms from the area of machine learning. It is known that dynamics based on generic no-regret algorithms may not converge to Nash equilibria in general, but to a larger set of outcomes, namely coarse correlated equilibria. Moreover, convergence results based on generic no-regret algorithms typically use a weaker notion of convergence: the convergence of the average plays instead of the actual plays. Some work has been done showing that when using a specific no-regret algorithm, the well-known multiplicative updates algorithm, convergence of actual plays to equilibria can be shown and better quality of outcomes in terms of the price of anarchy can be reached for atomic congestion games and load balancing games. Are there more cases of natural no-regret dynamics that perform well in suitable classes of games in terms of convergence and quality of outcomes that the dynamics converge to?

We answer this question positively in the bulletin-board model by showing that when employing the \emph{mirror-descent} algorithm, a well-known generic no-regret algorithm, the \emph{actual} plays converge quickly to equilibria in nonatomic congestion games. This gives rise to a family of algorithms, including the multiplicative updates algorithm and the gradient descent algorithm as well as many others. Furthermore, we show that our dynamics achieves good bounds on the outcome quality in terms of the price-of-anarchy type of measures with two different social costs: the average individual cost and the maximum individual cost.

Finally, the bandit model considers a probably more realistic and
prevalent setting with only partial information, in which at each time step each player only knows the cost of her own
currently played strategy, but not any costs of unplayed strategies.
For the class of atomic congestion games, we propose a family of bandit algorithms based on the mirror-descent algorithms
previously presented, and show that when each player individually adopts such a bandit
algorithm, their joint (mixed) strategy profile quickly converges with implications. 
\end{abstract}


\end{titlepage}

\section{Introduction}
Nash equilibrium is a widely-adopted solution concept in game theory, which is used for predicting the outcomes of systems consisting of self-interested players. We are interested in repeated game play, and a Nash equilibrium describes a steady state in which the system would stay once it is reached. However, this raises the issue of how such a state can be reached. In fact, for a general game, computing a Nash equilibrium is believed to be hard (according to the PPAD-hardness results \cite{chen:deng:settling}), so an equilibrium may not be reached in a reasonable amount of time in general, and the outcomes that we have observed may all be far out of any equilibrium, which would render the study on equilibria meaningless.
To address this issue, a line of research is to consider natural efficient dynamics which players have incentive to follow, and study how the system evolves according to such dynamics.
\begin{itemize}
\item \textbf{Best or better response dynamics.} One natural dynamics is the best or better response dynamics, in which a deviating player at each time makes a best or better change in his/her strategy to improve his/her payoff given the current choice of the other players. This means that, for a player to deviate, there must be enough information regarding the current choice that the others had made.
    It is well-known that such dynamics leads to pure Nash equilibria in congestion games. However, a player may not have incentive to play this way because making such deviations may not be beneficial if other players also deviate at the same time.

\item \textbf{Generic no-regret dynamics.} One may argue that a plausible incentive for a player is to maximize his/her average payoff through time, and dynamics based on ``no-regret" algorithms from the area of online learning (e.g., \cite[Chapter 4]{NRTV07}) have thus been proposed in the study.
    The no-regret property is preserved in full or partial information models of feedback by a variety of algorithms.
    For a nonatomic routing game, it is known that if each infinitesimal player plays any arbitrary no-regret algorithm, the ``time-averaged" flow and flows at \emph{most} time steps would be at some type of approximate Nash equilibrium \cite{blum:even-dar:ligett:routing}.
    For a ``socially concave" game, a similar time-averaged convergence result is also known \cite{even-dar:mansour:nadav}.\footnote{Note that the games that we consider here are not socially concave.}
Convergence to a Nash or approximate Nash equilibrium is not always the case in general, and playing arbitrary no-regret algorithms can result in a larger set of outcomes than Nash equilibria, namely \emph{coarse correlated equilibria}.\footnote{See, for example, \cite[Proposition~3.1]{roughgarden:lecture} among others that discuss this.} Nevertheless, if one only cares about the outcome quality and the quality is measured by the price of anarchy \cite{koutsoupias:papadimitriou:anarchy} with the average individual cost, it is known that
the price of total anarchy achieved by such no-regret algorithms can still match the price of anarchy 
at Nash equilibrium in special games, such as atomic congestion games \cite{blum:hajiaghayi} or even a wider class of smooth games \cite{roughgarden}.
On the other hand, there are broad classes of games and natural measures of outcome quality for which large gaps are known between no-regret outcomes and Nash equilibria.

Notice that the convergence results mentioned above are, instead of the convergence of the actual strategy, about the convergence of the time-averaged strategy \cite{blum:even-dar:ligett:routing,even-dar:mansour:nadav} or flows at most time steps being close to equilibria \cite{blum:even-dar:ligett:routing}.
Even in the latter case, those time steps where flows are close to equilibria are arbitrarily distributed over time (not guaranteed to gather toward the end in time),
which means that a flow at some very late point in time can still be far away from any equilibria and flows may not stabilize.
The guarantees are still \emph{not} on the convergence of the actual plays.
Such results are useful if the goal is to solve the computational problem of computing an approximate Nash equilibrium,
but they may not tell us much about how the system actually evolves.
In particular, even though the time-averaged play converges to an equilibrium or most plays are close to equilibria,
the actual strategy may \emph{not} converge and may be far away from an equilibrium.
For many applications, for example, that require the system to stabilize, the time-averaged play convergence or most plays being close to equilibria may not be enough.



\item \textbf{Multiplicative updates dynamics with full information.} Although it is nice to be able to have general positive results on what generic no-regret algorithms can achieve, one may wonder if going from generic no-regret algorithms to specific ones could yield stronger results, in terms of convergence or quality of outcomes that the algorithms converge to. One of the best known no-regret algorithms is the \emph{Multiplicative Updates} (MU) algorithm \cite{littlestone:warmuth,freund:schapire}. Kleinberg et al. \cite{kleinberg:piliouras:tardos:multiplicative} studied this for atomic congestion games in the full information setting, in which players have full information about the cost functions so that they can determine the cost of every other strategy they could have used given other players strategies at the current round. It was shown that if each player employs such an MU algorithm, the actual joint mixed strategy profile of players converges to a pure Nash equilibrium with high probability for most games. Note that here it is the actual joint strategy profile, instead of the time-averaged one, which converges. Furthermore, since the set of pure Nash equilibria can be a very small subset of correlated equilibria, the price of total anarchy achieved this way can be much smaller than that by a generic no-regret algorithm.

\item \textbf{Multiplicative updates dynamics with bulletin-board posting.} In another work \cite{kleinberg:piliouras:tardos:load}, Kleinberg et al. studied the smaller class of load balancing games, but in the more stringent partial-information setting of the ``bulletin-board" model, in which players only know the actual cost value of each edge according to the actual strategies played at the current round. They showed that if all the players play according to a common distribution (i.e., mixed strategy) and update the distribution using such an MU algorithm, the common distribution converges to some symmetric equilibrium of the \emph{nonatomic} version of the game. As a result, the price of total anarchy achieved this way is also considerably smaller than that by a generic no-regret one.
However, their analysis relies crucially on the assumption that all the players at each round play according to the same distribution. This assumption may not be reasonable in other settings or in other games, which makes the applicability of their analysis somewhat limited. On the other hand, the analysis in \cite{kleinberg:piliouras:tardos:multiplicative} can do without the assumption and deal with general asymmetry in players' probability distributions, but it only works in the full information model.
\end{itemize}

Note that there is no equilibrium selection for the best (or better) response dynamics and generic no-regret dynamics,
and they could converge to the worst corresponding equilibrium.
Nonetheless, the results of multiplicative updates dynamics suggest that the dynamics converge to a subset of mixed outcomes, namely,
pure Nash equilibria with high probability in \cite{kleinberg:piliouras:tardos:multiplicative} and quite uniformly distributed mixed Nash equilibria in the case of bulletin-board load balancing \cite[Lemma~6]{kleinberg:piliouras:tardos:load};
the price-of-anarchy type of efficiency gets better since the worst (mixed) Nash equilibrium in the induced subset could be better than the worst the worst coarse correlated equilibrium, meaning equilibrium is selected by the dynamics.

These results of multiplicative updates, which form a good comparison and complement to each other, along with the results on generic no-regret plays motivate our quest for other classes of learning dynamics in suitable classes of games and settings.
\emph{Are there more cases of natural no-regret dynamics that perform well in suitable classes of games in terms of convergence time and quality of outcomes that the dynamics converge to?}
We first answer this question positively by providing a family of such dynamics in the \emph{bulletin-board} model for the class of nonatomic congestion games with cost functions of bounded slopes. More precisely, we show that in such a game, if each infinitesimal player individually plays some type of the \emph{mirror-descent} algorithm \cite{BT03}, a well-known general no-regret algorithm, then their joint strategy profile quickly converges to an approximate notion of Wardrop equilibrium.\footnote{Wardrop equilibrium can be seen as Nash equilibrium specialized for games with infinitely many agents \cite{wardrop} such as nonatomic congestion games.}
We also show that our dynamics achieves good bounds on the quality of outcomes in terms of the price-of-anarchy type of measures with two different social costs: the average individual cost and the maximum individual cost.

All the previously mentioned results are based on somewhat generous information models. For instance of congestion games, edge cost functions are assumed common knowledge in the full information model of \cite{kleinberg:piliouras:tardos:multiplicative} so that players can determine the costs of currently unplayed strategies if they were used. A bit more stringent information model than full information was considered for the load-balancing games in \cite{kleinberg:piliouras:tardos:load} and for the general congestion games in our results previously mentioned,
in which the edge cost functions are not common knowledge anymore, but still the cost values of all paths at each step are assumed available through ``bulletin-board" posting. With such global information, players can get a grasp of the costs corresponding to played and even unplayed strategies, which allows players to update their strategies better and makes convergence of the whole system potentially easier. However, such an assumption on the information availability may not always be realistic and may limit the applicability of these results.

The ``bandit" model in online learning on the other hand considers a probably more realistic and prevalent setting, in which at each time step each player only knows the cost of her (or his) own currently played strategy, but not any costs of unplayed strategies. This gives rise to the dilemma between exploration and exploitation which players have to face. In the area of online learning, many bandit algorithms with no-regret guarantee have been developed, including for example those based on the multiplicative updates algorithm for the experts problem \cite{auer:cesa-bianchi} and those based on the gradient-descent algorithm for online linear or convex optimization \cite{abernethy:haza:rakhlin,flaxman:kalai}. However, not much is known in the area of game theory for playing repeated games in the bandit model. Although similar convergence results of \emph{average} plays can be established immediately for bandit algorithms with no-regret guarantee, we are not aware of any previous result establishing convergence of actual plays in the bandit setting. In fact, it is not clear how to design bandit algorithms with such convergence guarantee.

A natural attempt, following the standard approach for designing bandit algorithms such as those in \cite{auer:cesa-bianchi,abernethy:haza:rakhlin,flaxman:kalai}, is to come up with estimates of the true cost values and feed these estimates to a full information algorithm, to replace the true cost values that are available in the full information setting. For this approach to work for most problems in online learning, it simply suffices to guarantee these estimates being ``unbiased", in the sense that their \emph{expected} values equal the true cost values.
However, in the setting of repeated game playing, using such unbiased estimates does \emph{not} seem to ensure convergence of actual plays in general. This is because these estimates, even with guarantee on expected values, can still have high variance and thus have actual values very different from the true cost values.
Unfortunately, this is indeed the case for adapting most existing bandit algorithms, as their estimates actually can take very different values from the true cost values with some probability (for example, see the one-point gradient estimate in \cite{flaxman:kalai}) although these estimates are enough for their goal of just achieving no regret therein. In particular, when using such estimates in, for example, mirror descents, each update step may go in a very different (possibly in almost opposite) direction from the desired one according to the true cost values, which does not seem likely to result in convergence to equilibrium. This motivates us to ask the question: are there natural classes of \emph{bandit} algorithms which selfish players individually have incentive to adopt (by the no-regret property) and the whole system will quickly converge to an approximate Nash equilibrium (even just for some classes of instances) with social cost guarantees?

We then answer this more challenging question affirmatively as well. For the class of atomic congestion games, we propose a family of bandit algorithms based on the mirror-descent algorithms presented in the first-half part, and we mainly show that when each player individually adopts such a bandit algorithm, their joint strategy profile quickly converges.
The reasons why we focus on atomic congestion games instead of nonatomic ones are two-folded:
the bandit algorithm for atomic congestion games can be applied for nonatomic congestion games,
by treating the joint (mixed) strategy profile of an atomic game directly as the joint (pure) strategy profile (i.e., flow distribution) of a nonatomic game;
the bandit informational setting is not well defined for nonatomic congestion game since players can split their flows to literally all the allowed paths and get all the path costs.

In the bandit model, each player can only update her own strategy according to very limited and local information about the whole system,
but we show that when each player individually adopts any such bandit algorithm, the whole system still quickly converges, and is to an approximate mixed-strategy equilibrium that has a small approximation error in many natural cases (but has a large approximation error in general) and is beneficial to the society as a whole.
This is not only reminiscent of the result of convergence to some specific mixed Nash equilibria in load balancing with bulletin posting \cite{kleinberg:piliouras:tardos:load} (so a better price-of-anarchy type of efficiency is possible), but also more generally for atomic congestion games like in \cite{kleinberg:piliouras:tardos:multiplicative}.
This may appear even less expected than that in the bulletin-board or full-information model,
where each player at least has more abundant and more global information available.

We provide definitions and some preliminaries in Section~\ref{sec:preliminaries}. First, the generalized mirror-descent algorithm and convergence result in the bulletin-board model are presented in Section~\ref{sec:bulletin}. The bandit algorithm and convergence result are then presented in Section~\ref{sec:bandit}.
Approximate equilibria and the outcome quality bounds in terms of the price-of-anarchy type of measures are discussed along with the convergence results.
We summarize with conclusions and future work in Section~\ref{sec:conclusions}.

\subsection{Discussion of Our Results and Techniques}
\subsubsection*{Bulletin-Board Model}
The mirror-descent algorithm in fact can be seen as a family of algorithms. By instantiating it properly, one can recover the MU algorithm, the gradient-descent algorithm, as well as many others, and our result establishes the fast convergence of all these algorithms at once.
Let us stress that as in \cite{kleinberg:piliouras:tardos:multiplicative,kleinberg:piliouras:tardos:load}, our notion of convergence is the stronger one: what converges is the actual joint strategy profile. 
Note that in the congestion game, different players naturally have different sets of strategies, so it is no longer reasonable to assume that all the players use the same distribution to play as in \cite{kleinberg:piliouras:tardos:load}. Therefore, we allow players to use different distributions and moreover, we allow players to update according to different learning rates. Still, we manage to prove the convergence, just as \cite{kleinberg:piliouras:tardos:multiplicative} but in the more difficult bulletin model and with a concrete bound on convergence time.

Furthermore, we provide bounds on the price-of-anarchy type of measures achieved by our dynamics, in terms of the average individual cost and the maximum individual cost. Using the average individual cost as the social cost, we show that the ratio between the social cost achieved by our dynamics and the optimal one approaches some constant, which depends on the slopes of the cost functions. Using the maximum individual cost as the social cost, we show that the ratio between the social cost achieved by our dynamics and the optimal one also approaches the same constant in symmetric games. In each case, there is a tradeoff between the ratio we can achieve and the time it takes: by letting the system evolve for a longer time, it will get closer to an equilibrium, and the resulting ratio will approach closer to that constant.

Our main technical contribution is the convergence of our dynamics to an approximate equilibrium. To show this, we consider a smooth \emph{convex} potential function of the game which has the joint strategy profile of players as its input. The interesting observation is that although each player individually applies the mirror descent algorithm to his/her own strategy using costs related only to him/her, we show that the updates performed by all the players collectively can be seen as following some generalized mirror descent process on the potential function. The generalized mirror descent allows different step sizes in different dimensions,\footnote{This is similar to adaptive optimization methods such as in \cite{Duchi10adaptivesubgradient}.} and we need this generalization because we allow different learning rates for different players. The standard mirror descent, on the other hand, has the same step size across all the dimensions, so that it moves at each time in exactly the opposite direction of the gradient vector. It is known that doing the standard mirror descent on a smooth convex function leads to a fast convergence to its minimum \cite{BDX11,nesterov}. However, our generalized mirror descent no longer moves in the opposite direction of the gradient vector as different step sizes have different scaling effects in different dimensions, and therefore it is not clear if the process would still converge. Interestingly, we show that a similar convergence result can also be achieved, which may be of independent interest (since this works for all games with a smooth convex potential function satisfying some properties). Finally, let us remark that the standard mirror descent algorithm, instead of the generalized one, has also been used for different problems in game theory: for finding market equilibria in Fisher markets \cite{BDX11} and convex potential markets \cite{CCD13}. Our convergence result for the generalized mirror descent algorithm is an extension of that for the standard one. 

\subsubsection*{Bandit Model}
To be able to achieve convergence in the bandit model, the first hurdle we have to overcome is for each player to have good enough estimates for the true costs of all her allowed paths. As discussed before, we would like these estimates to have actual values (rather than expected values) close to the true cost values, but the estimation methods used in \cite{auer:cesa-bianchi,abernethy:haza:rakhlin,flaxman:kalai} do not work. In fact, an apparent difficulty is that a player can only learn the cost of one single path at each time, but the cost of each path actually depends on how other players choose their paths at that time, which may be very different at different times.
Then how can a player possibly obtain good estimates for the true costs of those unchosen paths at that time? Inspired by the bandit algorithm in \cite[Chapter 4.6]{NRTV07}, we consider dividing the time steps into episodes and letting each player play the same (mixed) strategy at each step during an episode. The expected cost of each path then becomes the same for each step in an episode, and this allows a player to obtain a good estimate for the expected cost of each allowed path, simply by choosing each path an enough number of steps and averaging the costs. With such good estimates, each player can then update her strategy for the next episode by feeding these estimates to the bulletin-board mirror-descent algorithm, to replace the true path costs it needs.

To prove our convergence results, we would like to follow the approach used in the bulletin-board model by showing that the collective update of all the players together corresponds to doing some generalized mirror descent on a convex function. However, as we consider the atomic version of the congestion game instead of the nonatomic version, there are more hurdles that we need to clear.
The first is the choice of the function for analyzing the convergence of our dynamics,
as the potential function $\Phi$ used in the bulletin-board model is defined over nonatomic flows of players.
Actually, the function used to show the convergence of dynamics does not have to be the potential function (for ensuring existence of equilibria) of the game under study as long as the converged outcomes via such used function can be interpreted in the game under study.
We find the same function $\Phi$ still suitable for us, by seeing each player's mixed strategy as a nonatomic flow.
Note that this has been similarly used in atomic load balancing \cite{kleinberg:piliouras:tardos:load} where their mixed strategies are like converging to equilibrium flows of the nonatomic version of load-balancing games, translated into quite uniformly distributed mixed Nash equilibrium in the atomic version.
Yet, this causes a subtle problem. Namely, the gradient of $\Phi$ actually corresponds to path costs according to nonatomic flows rather than those according to atomic flows that players in our atomic game have access to.\footnote{Note that exactly the same problem was faced in studying multiplicative updates in atomic load balancing under the bulletin-board model \cite{kleinberg:piliouras:tardos:load},
where they eventually showed the convergence of mixed strategies using the nonatomic version of load-balancing games and its potential function.
Even without a bandit problem, an error due to this occurred there, too.} This results in a non-negligible amount of error in the estimation of the gradient vector, which seems unavoidable for nonlinear cost functions, and we can at best do an ``approximate" mirror descent with such an approximate gradient vector.

However, the convergence analysis in the bulletin-board model relies crucially on being able to move (in the mirror space) precisely in the oppositive direction of the gradient vector, in order to guarantee that the $\Phi$ value decreases, while it is not hard to find cases with increased $\Phi$ value when moving in a slightly different direction. We bypass this difficulty by showing that as long as the current $\Phi$ value compared to the minimum one is still considerably large, relative to the error of the approximate gradient vector, the next $\Phi$ value will decrease from the current one by some large amount. This provides us a way to bound the number of steps needed to reach a $\Phi$ value within some distance of the minimum one, with the distance dictated by the errors of the approximate gradient vectors.

The convergence also implies an approximate equilibrium in mixed strategies\footnote{There can be alternative definitions of approximate equilibrium.} with a caveat that the approximation can be bad in general to make such equilibrium meaningless.
Nevertheless, there are broad classes of instances when such approximate equilibrium is indeed meaningful:
the approximation error can be small for some classes of cost functions,
for example, linear cost functions and even more generally, when bounds (constant with respect to the amount of flow, not necessarily to the inputs of an instance) on second derivatives are small enough;
for some classes of structures of allowed paths, the approximation error can still be small, for example, in the load-balancing setting like in \cite{kleinberg:piliouras:tardos:load}.

\subsection{Related Work}
There are numerous works regarding reaching various notions of equilibria in congestion/potential games in general: PLS-completeness \cite{fabrikant} and inapproximability \cite{skopalik} of computing Pure Nash equilibria, best-response types of dynamics for converging to approximate equilibria \cite{chien,awerbuch} or connected-component-based ``sink equilibria" \cite{goemans}.
There are studies of no-regret algorithms in zero-sum game play \cite{daskalakis},
game play by selecting strategy profiles to query the corresponding payoffs \cite{fearnley}, etc.
Playing arbitrary no-``swap"-regret algorithms converges to correlated equilibria \cite[Chapter 4.4.3]{NRTV07}
while playing arbitrary no-regret algorithms results in coarse correlated equilibria \cite[Proposition~3.1]{roughgarden:lecture}.
That is to say that the no-regret property is enough to guarantee (coarse) correlated equilibria in general.
On the other hand, our focus is to propose specific classes of no-regret algorithms for players to have incentives to adopt,
with actual convergence guarantee even with bandit feedback.

Our modeling framework in this article is most similar to that in \cite{kleinberg:piliouras:tardos:multiplicative}.
A vector of the probabilities for the all available actions for each player is maintained.
Players sample an action according to this distribution at each time.
Initially, the probabilities can be all equal, i.e., uniform distribution.
Every time each player updates the weights multiplicatively preferring actions of low cost,
which generalizes the weighted majority algorithm introduced by Littlestone and Warmuth \cite{littlestone:warmuth}
and the Hedge algorithm of Freund and Schapire \cite{freund:schapire}.
Kleinberg et al. showed that if players use such dynamics to adjust their strategies in atomic congestion games,
then game play converges to a subset of mixed Nash equilibria, so-called weakly stable equilibria.
Pure Nash equilibria are weakly stable by definition, and the converse was
shown true with probability 1 when congestion costs are selected at random independently
on each edge.

Kleinberg et al. \cite{kleinberg:piliouras:tardos:load} also studied the performance of
learning algorithms in load-balancing games, i.e., congestion games on parallel
links, under the ``bulletin board model" in which players assess edge costs according
to the actual cost incurred on that edge, and not the hypothetical cost if the
player had used it.
This algorithm for specifying a mixed strategy at each time step is a
version of the Hedge algorithm \cite{freund:schapire}, modified so that players assess edge costs
according to the actual cost incurred on that edge, and not the hypothetical
cost if the player had used it for players that do not use the edge at this time step.
It was shown that the bulletin board variant of Hedge is also a no-regret learning algorithm.
Their main result is that the expected makespan of the outcome is bounded
by $O(\log n)$ where $n$ is the number of links/players,
exponentially better than the known lower bounds for arbitrary no-regret algorithms.
Many of our assumptions regarding atomic splittable congestion games
follow \cite{kleinberg:piliouras:tardos:load},
and the analyses for convergence share some high-level intuitions.
 	
Even-Dar et al. studied a subclass of concave games called socially concave games in \cite{even-dar:mansour:nadav}.
They showed that if each player follows any no-external regret minimization procedure,
then the dynamics will converge in the sense that the average action vector will converge to a Nash equilibrium.
Even if we change convexity to concavity and costs to utilities in our paper,
potential games that we consider here are not socially concave games.
Thus, their results do not directly apply.

Besides the dynamics based on no-regret learning algorithms, some other works design \emph{Markovian} rerouting policies in congestion games \cite{fisher,ackermann},
where an agent's behavior only depends on the outcome of the immediately previous round and not on all the previous rounds.
In nonatomic congestion games, inspired by so-called \emph{replicator dynamics} \cite{weibull} the algorithms of Fisher at al. \cite{fisher}, for an agent, adaptively sample paths with a probability proportional to the fraction of agents using this path and reroute with a chosen probability if the latency of the sampled path is smaller than that of the current one. They showed a bicriteria result, an upper bound on the total number of rounds in which it does not hold that almost all agents have a latency close to the average latency. Note that such ``equilibria" are transient in the sense that these equilibria can be left  again even after they are reached. This is more like in \cite{blum:even-dar:ligett:routing}, but \emph{not} the actual convergence in our stronger sense.
Nevertheless, with exploration uniformly at random on unused paths, convergence to a Wardrop equilibrium can now be achieved,
and for symmetric games they gave a polynomial bound on the number of rounds taken to get close to the optimal potential value.

Ackermann et al. \cite{ackermann} used what they called ``concurrent imitation dynamics", which is very similar to the rerouting policies of \cite{fisher}, in atomic congestion games and mainly gave similar results for symmetric atomic congestion games where the analysis needs to take probabilistic effects into account. They first showed a convergence to a local minimum of the potential in pseudopolynomial time.
Their main result is a stronger bound on the expected time to reach an approximate stable state in which at most a very small fraction of the agents deviate by more than a very small fraction from the average latency. Finally, with a suitable combination of imitation with exploration that samples other strategies directly, they guaranteed convergence to Nash equilibria and gave the expected convergence time.

For a generic two-player coordination game, Mehta et al. \cite{mehta} showed that, starting from all but a zero measure of initial probability distributions, a discrete multiplicative weight update algorithm known as \emph{discrete replicator dynamics} converges to \emph{pure} Nash equilibria.
This is not like the randomized results in \cite{kleinberg:piliouras:tardos:multiplicative,kleinberg:piliouras:tardos:load}.
Their results only require that any row/column of the payoff matrix consist of distinct entries, and hold even if the game has uncountably many Nash equilibria.

There is still a variety of different dynamics in repeated games.
Auletta et al. \cite{auletta:ferraioli} presented general bounds on the mixing time of ``logit" dynamics for classes of strategic games.
In the logit dynamics, individual participants act selfishly and keep responding according to some partial noisy knowledge in the complex system.
In particular, they proved nearly tight bounds for potential games and games with dominant strategies.
Kleinberg et al. \cite{kleinberg:ligett} analyzed a game with a unique Nash equilibrium,
but where natural learning dynamics only cycles, not converging to this equilibrium.
They showed that the outcome of this learning process is optimal and has much better social welfare than the unique Nash equilibrium.
Balcan et al. \cite{balcan:blum:mansour} showed that convergence may not lead to any meaningful notions of equilibria,
but may result in good efficiency in terms of some objectives.

\section{Preliminaries} \label{sec:preliminaries}
\paragraph{Nonatomic Congestion Games.}
In this article, we first consider the nonatomic congestion game described by $(N,E,(\mathcal{S}_i)_{i\in N},(c_e)_{e\in E})$,
where $N$ is the set of commodities, $E$ is the set of edges (resources),
$\mathcal{S}_i \subseteq 2^{E}$ is the collection of the allowed paths (the allowed subsets of resources) for commodity~$i$,
and $c_e$ is the cost function of edge~$e$, which is a nondecreasing function of the amount of load on it.
A commodity is a ``pseudo-player", which we simply call a ``player" throughout the discussion of nonatomic congestion games,
meaning that a player herself does not act as a selfish party, but the infinitesimal parties that a player is composed of do.
We describe it in more detail in the following.

Let us assume that $N = \{1,\dots,n\}$, $|E|=m$, and each player has a load of $1/n$ (so the total load is $1$).
Each player consists of a huge infinite number of selfish agents (or see each player as a group of infinitesimal players of the same type).
The agents of player $i$ split the load of player $i$ so that each agent has a small (infinitesimal) amount $\Delta$ of load, i.e., $\Delta\rightarrow 0$.
Each agent of player $i$ must choose one single path $s$ from $\mc{S}_i$ and put that $\Delta$ amount of load all on $s$.
We call the ``aggregated" result of such choices of agents of player~$i$ \emph{the strategy of player $i$},
and it can be represented by a $|\mc{S}_i|$-dimensional vector $x_i=(x_{i,s})_{s\in\mathcal{S}_i}$,
where $x_{i,s} \in [0,1]$ is the amount of the load that the agents of player $i$ puts on the path $s$.
Note that $\sum_{s\in\mathcal{S}_i}x_{i,s} =1/n$ and let $\mc{K}_i$ be the feasible set for all such vectors $x_i$.
Then the strategies of all players can be jointly represented by a vector
$$x=(x_1,...,x_n) = ((x_{1,s})_{s\in\mathcal{S}_1},...,(x_{n,s})_{s\in\mathcal{S}_n}) \in \mathbb{R}^d,$$
where $d=\sum_{i\in N}|\mathcal{S}_i|$, and let $\mc{K}=\mc{K}_1 \times \cdots \times \mc{K}_n$ be the feasible set for all such vectors $x$.
Each allowed path of a player intersects at most $k$ allowed paths (including that path itself) of that player.
We call $x_i$ the flow of player $i$ and $x$ the flow of the system.\footnote{Although we borrow the terms such as edge, path, and flow from routing games, the congestion games are more general as there are no underlying graphs and a path can be just any arbitrary subset of edges.}
Note that an edge $e \in E$ can be shared by different paths, and the aggregated load on $e$, denoted by $\ell_e(x)$, is $\sum_{s:e\in s}\sum_{i\in N}x_{i,s}$. The cost of a path $s$ is defined as $c_s(x)=\sum_{e\in s}c_e(\ell_e(x))$,
and the individual cost of player~$i$ is defined as $C_i(x)=\sum_{s\in\mathcal{S}_i}x_{i,s} c_s(x)$.
Each agent of player $i$ chooses a path $s$ from $\mc{S}_i$ that minimizes $c_s(x)$ in nonatomic congestion games.\footnote{One can show that our dynamics are no-regret algorithms for each agent of each player $i$, and this provides an incentive for the
agents to adopt the dynamics. If, instead of each agent choosing a path $s$ to minimize $c_s(x)$,
the agents of player $i$ cooperate to minimize the cost of player $i$, $C_i(x)$, we have an \emph{atomic splittable congestion game}.}


Such a game admits the following potential function (e.g., see \cite[Eq. (1)]{roughgarden2}, and it can be found as early as in \cite{beckmann:mcguire:winsten}):\footnote{Note that our convergence result will be proved more generally for any convex potential function satisfying certain properties.}
\begin{equation}\label{eq:po}
\Phi(x)=\sum_e\int_0^{\ell_e(x)}c_e(y)dy.
\end{equation}
To see that this is indeed a potential function, note that if some player deviates an infinitesimal fraction of load from $s$ to $s'$ (where $x_{i,s}>0$) such that $c_s(x)>c_{s'}(x')$ (where $x$ is almost the same as $x'$ except for the small fraction of moved load), then $\partial\Phi(x)/\partial x_{i,s}>\partial\Phi(x')/\partial x_{i,s'}$, which means that the rate of decrease in $\Phi$ is larger than the rate of increase in $\Phi$ and thus the resulting $\Phi$ decreases.
We will need the following, which we prove in Appendix~\ref{app:conv}.
\begin{proposition} \label{pro:conv}
The function $\Phi$ defined in (\ref{eq:po}) is convex.
\end{proposition}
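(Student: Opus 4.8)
The plan is to exhibit $\Phi$ as a finite sum of one-dimensional convex functions precomposed with affine maps, two operations that each preserve convexity. Observe first that for every edge $e$ the load $\ell_e(x)=\sum_{s:e\in s}\sum_{i\in N}x_{i,s}$ is a \emph{linear} (hence affine) function of the joint flow $x$. Define the single-variable function $g_e(t)=\int_0^t c_e(y)\,dy$ for $t\ge 0$, so that $\Phi(x)=\sum_e g_e(\ell_e(x))$. Since a finite sum of convex functions is convex, and the composition of a convex function with an affine map is convex, the entire claim reduces to showing that each $g_e$ is convex on $[0,\infty)$.

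To prove $g_e$ convex I would use the fact that $c_e$ is nondecreasing, which is precisely what makes the derivative of $g_e$ nondecreasing. Because $c_e$ is only assumed monotone (and may well be discontinuous), rather than invoking a second-derivative test I would argue directly via secant slopes: it suffices to show that for any $0\le a<c<b$ the secant slope on $[a,c]$ does not exceed the one on $[c,b]$. Using the integral representation together with monotonicity, $\frac{g_e(c)-g_e(a)}{c-a}=\frac{1}{c-a}\int_a^c c_e(y)\,dy\le c_e(c)$ because $c_e(y)\le c_e(c)$ for $y\le c$, while $\frac{g_e(b)-g_e(c)}{b-c}=\frac{1}{b-c}\int_c^b c_e(y)\,dy\ge c_e(c)$ because $c_e(y)\ge c_e(c)$ for $y\ge c$. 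Chaining these two inequalities yields the nondecreasing-secant-slope characterization of convexity for $g_e$.

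Putting the pieces together, each $g_e\circ\ell_e$ is convex as the composition of the convex $g_e$ with the affine $\ell_e$, and $\Phi=\sum_e g_e\circ\ell_e$ is convex as a finite sum of convex functions, which is the desired conclusion. The only genuine subtlety—and the step I would be most careful about—is the regularity of $c_e$: since it is merely nondecreasing, I avoid assuming differentiability or even continuity and instead lean on the integral (secant-slope) argument above, where the monotonicity of $c_e$ is exactly what delivers the convexity of $g_e$. Everything else is the standard closure of the class of convex functions under affine precomposition and summation.
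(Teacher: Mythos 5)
Your proof is correct and follows essentially the same route as the paper's: both decompose $\Phi$ as $\sum_e \psi_e(\ell_e(x))$ with $\ell_e$ linear and $\psi_e(v)=\int_0^v c_e(y)\,dy$ convex because $c_e$ is nondecreasing. The only difference is one of emphasis: the paper asserts the convexity of $\psi_e$ without proof and instead writes out the convexity inequality for $\Phi$ explicitly, whereas you take the closure properties for granted and supply the secant-slope argument for $\psi_e$, which is a reasonable piece of added care given that $c_e$ is only assumed monotone.
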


\paragraph{Atomic Congestion Games.}
We later consider the \emph{atomic congestion game}, also described by $(N,E,(\mathcal{S}_i)_{i\in N},(c_e)_{e\in E})$.
Let us assume that there are $n$ players, each player has at most $d$ allowed paths, and each path has length at most $m$;
let $k$ be the maximum number of the allowed paths (including that path itself) of a player that each allowed path of that player intersects, and each player has a flow of amount $1/n$ to route.
The strategy of each player $i$ is to send her \emph{entire} flow on a single path, chosen randomly according to some distribution over her allowed paths, which can be represented by a $|\mc{S}_i|$-dimensional vector $\pi_i=(\pi_{i,s})_{s\in\mathcal{S}_i}$, where $\pi_{i,s} \in [0,1]$ is the probability of choosing path $s$.
It turns out to be more convenient for us to represent each player's strategy $\pi_i$ by an equivalent form $x_i = (1/n)\pi_i$, where $1/n$ is the amount of flow each player has. That is, for every $i \in N$ and $s \in \mc{S}_i$, $x_{i,s} = (1/n)\pi_{i,s} \in [0,1/n]$ and  $\sum_{s \in \mc{S}_i} x_{i,s} = 1/n$. Let $\mc{K}_i$ denote the feasible set of all such $x_i \in [0,1/n]^{|\mc{S}_i|}$ for player $i$, and let $\mc{K}=\mc{K}_1 \times \cdots \times \mc{K}_n$, which is the feasible set of all such joint strategy profiles $x=(x_1, \dots, x_n)$ of the $n$ players.

We will still be using the same function as in (\ref{eq:po}) for convergence analysis.
We are aware of the potential function of Rosenthal for atomic congestion games (see \cite{kleinberg:piliouras:tardos:multiplicative} for the potential function there), typically used for showing existence of pure Nash equilibria.
Actually, different functions could be used for different purposes.
For our purpose of doing mirror-gradient descents on some convex function to show actual convergence,
the function that we are using guarantees convexity (while the Rosenthal one does not) and thereby other convenience for analysis.

To represent which path a player $i$ actually chooses, we use another vector $X_i = (X_{i,s})_{s \in \mc{S}_i} \in \{0,1/n\}^{|\mc{S}_i|}$, where
\begin{equation}\label{eq:X}
X_{i,s} = \left\{
            \begin{array}{ll}
              1/n  & \hbox{if player $i$ chooses path $s$,} \\
              0 & \hbox{otherwise.}
            \end{array}
          \right.
\end{equation}
We call $X=(X_1, \dots, X_n)$ the \emph{choice vector} of the $n$ players. Then the cost of a path $s$ with respect to $X$ is defined as $$c_s(X) = \sum_{e \in s_i} c_e(\ell_e(X)),$$
where $\ell_e(X)$ is the amount of flow passing through edge $e$, defined as
\begin{equation}\label{eq:l_e}
\ell_e(X) = \sum_{j\in N} \sum_{r\in \mc{S}_j:e\in r} X_{j,r}.
\end{equation}
A useful property of the choice vector $X$ is that as each player $i$ chooses path $s$ with probability $\pi_{i,s} = x_{i,s} n$, the expected value of each $X_{i,s}$ is exactly $x_{i,s}$, which implies that $\Ex{X_i} = x_i$ for each $i$ and $\Ex{X} = x$.

\paragraph{Properties and Social Costs.}
As in \cite{kleinberg:piliouras:tardos:load}, we assume that the cost functions satisfy the property that for any $y\in [0,1]$ and any $e \in E$, $c_e(0)=0$, $c_e(1) \le 1$, $c'_e(y)\geq A>0$ and $0\leq c''_e(y)\leq B$, where $A,B$ are positive constants.\footnote{They are constants with respect to the amount of flow, not necessarily to $m$ or $n$.} By Lemma~4 of \cite{kleinberg:piliouras:tardos:load}, for constants $a=A$ and $b=B+1$ defined accordingly, the cost functions satisfy the condition that
\begin{equation}\label{eq:b}
a y\leq c_e(y)\leq  b y, \mbox{ for any }y\in[0,1].
\end{equation}

Then the function $\Phi$ defined in terms of such cost functions is smooth in the following sense.

\begin{definition} \label{def:smooth}
A function $\Phi$ over $\mc{K}$ is called $(\alpha,\beta,\lambda)$-smooth if for any $x \in \mc{K}$,
$$\Phi(x) \le \alpha, \|\nabla \Phi(x)\|_\infty \le \beta, \mbox{ and } \nabla^2 \Phi(x) \preceq \lambda I.$$
\end{definition}

\begin{proposition} \label{pro:para}
The function $\Phi$ defined in (\ref{eq:po}) with cost functions satisfying condition (\ref{eq:b}) is $(\alpha, \beta, \lambda)$-smooth, for
$$\alpha = bm /2, \beta = b m, \mbox{ and } \lambda = b m k.$$
\end{proposition}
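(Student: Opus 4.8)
The plan is to verify the three defining inequalities of Definition~\ref{def:smooth} separately, after first recording two facts that drive all of them: that every edge load satisfies $\ell_e(x)\in[0,1]$, and that the cost derivatives are bounded by $b$ on $[0,1]$. The load bound is immediate, since the total flow equals $1$ and each path meets an edge at most once, so $\ell_e(x)=\sum_{(i,s):e\in s}x_{i,s}\le\sum_{i,s}x_{i,s}=1$. The derivative bound does not follow verbatim from~(\ref{eq:b}), so I would derive it from the underlying assumptions: since $0\le c''_e\le B$ the function $c_e$ is convex and $c'_e$ is nondecreasing, and by the mean value theorem there is $\xi\in(0,1)$ with $c'_e(\xi)=c_e(1)-c_e(0)=c_e(1)\le 1$; hence for any $y\in[0,1]$, $c'_e(y)\le c'_e(1)=c'_e(\xi)+\int_\xi^1 c''_e(t)\,dt\le 1+B=b$.

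For the value bound, I would use $c_e(y)\le by$ from~(\ref{eq:b}) to get $\Phi(x)=\sum_e\int_0^{\ell_e(x)}c_e(y)\,dy\le\sum_e\tfrac{b}{2}\ell_e(x)^2$, and then $\ell_e(x)\le 1$ together with $|E|=m$ yields $\Phi(x)\le bm/2=\alpha$. For the gradient bound, the first step is the clean observation that the gradient components are exactly the path costs: differentiating~(\ref{eq:po}) and using $\partial\ell_e(x)/\partial x_{i,s}=1$ when $e\in s$ and $0$ otherwise gives $\partial\Phi(x)/\partial x_{i,s}=\sum_{e\in s}c_e(\ell_e(x))=c_s(x)$. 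Then $c_s(x)\le\sum_{e\in s}b\,\ell_e(x)\le b|s|\le bm$, since each path has length at most $m$; taking the maximum over coordinates gives $\|\nabla\Phi(x)\|_\infty\le bm=\beta$.

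The main work, and the step I expect to be the real obstacle, is the curvature bound $\nabla^2\Phi(x)\preceq\lambda I$. Differentiating the gradient once more gives Hessian entries $H_{(i,s),(j,r)}=\sum_{e\in s\cap r}c'_e(\ell_e(x))$, which are nonnegative and symmetric (consistent with Proposition~\ref{pro:conv}) and vanish unless the paths $s$ and $r$ share an edge. To bound the largest eigenvalue I would invoke the standard fact that for a symmetric matrix the spectral radius is at most the maximum absolute row sum. Fixing $(i,s)$, the corresponding row sum is $\sum_{(j,r)}\sum_{e\in s\cap r}c'_e(\ell_e(x))\le b\sum_{(j,r):\,s\cap r\neq\emptyset}|s\cap r|$, using $c'_e\le b$; here the two structural parameters enter, as each intersection contributes $|s\cap r|\le m$ and at most $k$ allowed paths intersect $s$, so the product bounds the row sum by $bmk=\lambda$. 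The delicate point I would flag and handle carefully is precisely how this counting of intersecting paths interacts with the block structure of $H$ across players, since $k$ is a within-player intersection count; ensuring that the contribution of the cross terms is controlled so that the full spectral radius (and not merely a per-player diagonal block) obeys the $bmk$ bound is where the argument must be made rigorous.
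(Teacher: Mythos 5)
Your bounds on $\Phi(x)$ and $\|\nabla\Phi(x)\|_\infty$ are correct and essentially identical to the paper's (the paper bounds $\sum_e \ell_e(x)^2$ by $\sum_e\ell_e(x)\le m$ rather than by $|E|\cdot 1$, but this is immaterial), and your explicit derivation of $c'_e(y)\le 1+B=b$ via the mean value theorem is a worthwhile addition, since the paper uses this derivative bound without justifying it from the stated hypotheses. The problem is the third inequality, which you yourself flag and then do not close. Your row-sum (Gershgorin) argument needs, for a fixed $(i,s)$, the bound $\sum_{(j,r)}\sum_{e\in s\cap r}c'_e(\ell_e(x))\le b\,m\cdot\#\{(j,r): s\cap r\ne\emptyset\}$, and then you would need the number of intersecting paths \emph{over all players $j$} to be at most $k$. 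But $k$ is defined as a within-player intersection count: it bounds $|\Gamma_i(s)|$ for $r\in\mc{S}_i$ only. Across $n$ players the number of paths meeting $s$ can be as large as $nk$ or more (e.g., in a load-balancing instance every player's copy of the path $\{e\}$ intersects $s=\{e\}$, yet $k=1$), so the row sum is not bounded by $bmk$ and the argument as proposed does not yield $\nabla^2\Phi(x)\preceq bmk\,I$. Acknowledging the difficulty is not the same as resolving it, so the proposal does not establish the $\lambda$ bound.

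For comparison, the paper does not use row sums: it bounds the quadratic form $z^\top(\nabla^2\Phi(x))z$ directly, applying Cauchy--Schwarz twice within each player's block and the identity $\sum_s\sum_{r\in\Gamma_i(s)}|z_{i,r}|^2\le k\sum_r|z_{i,r}|^2$ to get $z^\top(\nabla^2\Phi(x))z\le bmk\|z\|_2^2$. However, you should note that the paper's displayed chain passes from $\sum_{(i,s),(j,r)}H_{(i,s),(j,r)}|z_{i,s}||z_{i,r}|$ to a sum over $r\in\Gamma_i(s)\subseteq\mc{S}_i$ only, i.e., it silently retains only the within-player entries of the Hessian even though $\partial^2\Phi/\partial x_{i,s}\partial x_{j,r}=\sum_{e\in s\cap r}c'_e(\ell_e(x))$ is generally nonzero for $j\ne i$ whenever $s$ and $r$ share an edge. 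So the cross-player contribution you correctly identified as the delicate point is exactly what the paper's own proof also leaves unaccounted for; your proposal surfaces a genuine weak spot, but neither your route nor a faithful transcription of the paper's bounds the full spectral radius by $bmk$ using only the within-player parameter $k$. To complete your argument you would need either a global intersection parameter (the maximum, over paths $s$, of the number of paths of \emph{all} players meeting $s$) in place of $k$, or to restrict attention, as the paper implicitly does, to the block-diagonal part of the Hessian.
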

We will prove the proposition in Appendix~\ref{app:para2}.

We consider two types of social cost functions. The first is the average individual cost function, defined as $$C_A(x)=\sum_e \ell_e(x) c_e(\ell_e(x)),$$
and the second is the maximum individual cost function, defined as
$$C_M(x)=\max_{s\in\mathcal{S}}\sum_{e\in s}c_e(\ell_e(x)), \mbox{ where } \mc{S}=\bigcup_{i \in N} \mc{S}_i.$$
 Using them, we measure the quality of outcome for a flow $x \in \mc{K}$ in the following two ways.
 The first is the ratio $C_A(x)/C_A(x^*)$, where $x^*=\arg\min_{z \in \mc{K}} C_A(z)$,
and the second is the ratio $C_M(x)/C_M(\hat{x})$, where $\hat{x}=\arg\min_{z \in \mc{K}} C_M(z)$.


\section{The Bulletin-Board Model} \label{sec:bulletin}
\subsection{Dynamics} \label{sec:con}


We consider the setting in which the players play the game iteratively in the following way. At step $t$, each player $i$ plays the strategy $\xx{x}{t}{i}$ by sending the amount $\xx{x}{t}{i,s}$ of load on path $s$ for each $s \in \mc{S}_i$. After that, she gets to know the vector $\xx{\hat c}{t}{i}=(c_s(\xx{x}{t}{}))_{s \in \mc{S}_i}$ of cost values, where $c_s(\xx{x}{t}{}) = \sum_{e \in s} c_e(\ell_e(\xx{x}{t}{}))$ is the cost value on the path $s$ at that step. With this, she updates her next strategy $\xx{x}{t+1}{i}$ in some way and then proceeds to the next iteration. In the alternative definition of the game, the corresponding setting is that at step $t$, each agent of player $i$ sends its load of $\Delta$ all on some path $s \in \mc{S}_i$, which is chosen according to some distribution. We assume that all agents of player $i$ start with the same initial distribution and update their distributions at each step $t$ using the same algorithm according to the same information $\xx{\hat c}{t}{i}$. Then we can conclude that their distributions at step $t$ are all the same,\footnote{The distributions of agents from different players are still different in general.} which basically can be described by the flow $\xx{x}{t}{i}$ of player $i$, due to the law of large number as the number of agents is huge. Thus, the settings for the two definitions of the game also match.

We have not specified how the players or agents of players update their next strategies. Different update algorithms may make the whole system evolve in rather different ways, and we would like to understand if there are update algorithms which players or agents of players have incentive to adopt that can lead to desirable outcomes for the whole system. One can argue that a plausible incentive for a player is to minimize her regret. Two well-known no-regret algorithms are the gradient descent algorithm and the multiplicative update algorithm, both of which can be seen as special cases of a more general algorithm called mirror descent algorithm (see e.g. \cite{BT03} for more detail). Inspired by this,
we consider the following update rule for player $i$ or agents of player $i$:
\begin{eqnarray}
\xx{x}{t+1}{i} &=& \arg\min_{z_i\in\mc{K}_i} \left\{\eta_i \langle \xx{\hat c}{t}{i}, z_i\rangle + \BRi{z_i,\xx{x}{t}{i}}\right\} \label{eq:update1}\\
&=& \arg\min_{z_i\in\mc{K}_i} \BRi{z_i,\xx{x}{t}{i}-\eta_i \xx{\hat c}{t}{i}}. \label{eq:update2}
\end{eqnarray}
Here, $\eta_i >0$ is some learning rate, $R_i:\mc{K}_i \to \mathbb{R}$ is some regularization function, and $\BRi{\cdot, \cdot}$ is the Bregman divergence with respect to $R_i$ defined as
$$\BRi{u_i,v_i} = R_i(u_i) - R_i(v_i) - \langle \nabla R_i(v_i), u_i-v_i\rangle$$
for $u_i,v_i \in \mc{K}_i$. This gives rise to a family of update rules for different choices of the function $R_i$. For example, it is well-known that by choosing $R_i(u_i) = \|u_i\|_2^2 /2$, one recovers the gradient descent algorithm, while by choosing $R_i(u_i) = \sum_s (u_{i,s} \ln u_{i,s} -u_{i,s})$, one recovers the multiplicative update algorithm. Using a similar argument as in \cite{kleinberg:piliouras:tardos:load}, one can show that this algorithm, with a properly chosen $R_i$, is indeed a no-regret algorithm for each agent of player $i$ (see Appendix~\ref{app:no-regret} for a proof sketch), and this provides an incentive for the agents to use the algorithm.
We need these $R_i$'s (and $\BRi{\cdot,\cdot}$'s) to satisfy the following.
The choices of $R_i$'s that satisfy this assumption will be discussed in Section~\ref{sec:conv}.

\begin{assumption} \label{as:cond}
For any $i \in N$ and any $x_i,y_i \in \mc{K}_i$,
$$\|x_i-y_i\|_2^2 \le 2 \cdot \BRi{x_i,y_i}.$$
\end{assumption}

Then the function $\Phi$ is ``smooth" with respect to these $R_i$'s in the following sense by Definition~\ref{def:smooth} and Assumption~\ref{as:cond}.
\begin{definition}
We say that $\Phi$ is $\lambda$-smooth with respect to $(R_1, \dots, R_n)$ if for any two inputs $x=(x_1, \dots, x_n)$ and $x'=(x'_1, \dots, x'_n)$ in $\mc{K}$,
\begin{equation}\label{eq:smooth}
\Phi(x') \le \Phi(x) + \langle \nabla \Phi(x), x'-x\rangle + \lambda \sum_{i=1}^n \BRi{x'_i,x_i}.
\end{equation}
\end{definition}


\subsection{Convergence Results and Equilibria}
Our main result in this section is the following, which shows that if each player (or agent of a player) uses such an update algorithm, the system quickly converges, in the sense that the value of the potential function $\Phi(\xx{x}{t}{})$ quickly approaches the minimum $\Phi(q)$, where $q=\arg\min_{z\in\mathcal{K}} \Phi(z)$.

\begin{theorem} \label{thm:Nash}
Consider any nonatomic congestion game of $n$ players,
with a potential function $\Phi$ which is $\lambda$-smooth with respect to some $(R_1,\dots,R_n)$. Let $q=(q_1, \dots, q_n) = \arg\min_{z\in\mathcal{K}} \Phi(z)$. Now suppose that each player $i$ starts from some initial strategy $\xx{x}{0}{i}$, with $\BRi{q_i,\xx{x}{0}{i}} \le \gamma$, and updates her strategy according to the rule in (\ref{eq:update1}), with $\eta_i \in[\eta, 1/\lambda]$ for some $\eta$. Then for 
any $\varepsilon \in (0,1)$ there exists some $T_{\varepsilon} \le n\gamma/(\eta \varepsilon)$ such that for any $t \ge T_{\varepsilon}$, $\Phi(\xx{x}{t}{}) \le \Phi(q) + \varepsilon$.
\end{theorem}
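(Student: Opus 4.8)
The plan is to recognize the collective update (\ref{eq:update1}) as a single \emph{blockwise} mirror-descent step on $\Phi$, and then run the standard mirror-descent potential argument, taking care that each block $i$ carries its own step size $\eta_i$. The starting point is the observation that $\partial\Phi(x)/\partial x_{i,s}=c_s(x)$, so the cost vector $\xx{\hat c}{t}{i}$ fed to player $i$ is exactly the $i$-th block $\nabla_i\Phi(\xx{x}{t}{})$ of the gradient. Writing $g_i^t=\nabla_i\Phi(\xx{x}{t}{})$, the minimizer $\xx{x}{t+1}{i}$ over the convex set $\mc{K}_i$ satisfies the first-order (variational) inequality $\langle \eta_i g_i^t+\nabla R_i(\xx{x}{t+1}{i})-\nabla R_i(\xx{x}{t}{i}),\,u_i-\xx{x}{t+1}{i}\rangle\ge 0$ for every $u_i\in\mc{K}_i$. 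Rearranging and applying the three-point identity $\langle \nabla R_i(\xx{x}{t+1}{i})-\nabla R_i(\xx{x}{t}{i}),\,u_i-\xx{x}{t+1}{i}\rangle=\BRi{u_i,\xx{x}{t}{i}}-\BRi{u_i,\xx{x}{t+1}{i}}-\BRi{\xx{x}{t+1}{i},\xx{x}{t}{i}}$ turns each step into a relation among Bregman divergences.

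First I would take $u_i=q_i$. Combining convexity of $\Phi$ (Proposition~\ref{pro:conv}), which gives $\Phi(\xx{x}{t}{})-\Phi(q)\le\langle\nabla\Phi(\xx{x}{t}{}),\xx{x}{t}{}-q\rangle$, with the $\lambda$-smoothness inequality (\ref{eq:smooth}) applied to $\xx{x}{t+1}{}$ and $\xx{x}{t}{}$, the linear terms recombine into $\sum_i\langle g_i^t,\xx{x}{t+1}{i}-q_i\rangle$, which the variational inequality bounds. After collecting terms this yields
\begin{equation*}
\Phi(\xx{x}{t+1}{})-\Phi(q)\le\sum_i\tfrac{1}{\eta_i}\bigl(\BRi{q_i,\xx{x}{t}{i}}-\BRi{q_i,\xx{x}{t+1}{i}}\bigr)+\sum_i\bigl(\lambda-\tfrac{1}{\eta_i}\bigr)\BRi{\xx{x}{t+1}{i},\xx{x}{t}{i}}.
\end{equation*}
Here is where the constraint $\eta_i\le 1/\lambda$ enters: it forces $\lambda-1/\eta_i\le 0$, so the last (error) sum is nonpositive and can be dropped. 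A parallel computation with $u_i=\xx{x}{t}{i}$ shows $\Phi(\xx{x}{t+1}{})\le\Phi(\xx{x}{t}{})$, i.e.\ the potential is monotonically nonincreasing.

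Next I would telescope. Summing the displayed inequality over $t=0,\dots,T-1$ collapses the first sum to $\sum_i\tfrac1{\eta_i}\bigl(\BRi{q_i,\xx{x}{0}{i}}-\BRi{q_i,\xx{x}{T}{i}}\bigr)$, and dropping the nonnegative final divergences together with $\BRi{q_i,\xx{x}{0}{i}}\le\gamma$ and $\eta_i\ge\eta$ gives $\sum_{t=1}^{T}\bigl(\Phi(\xx{x}{t}{})-\Phi(q)\bigr)\le n\gamma/\eta$ for every $T$. Finally I would extract the convergence time: if $\Phi(\xx{x}{t}{})-\Phi(q)>\varepsilon$ held for all integer $t\le n\gamma/(\eta\varepsilon)$, the left side would strictly exceed $n\gamma/\eta$, a contradiction; hence some $T_\varepsilon\le n\gamma/(\eta\varepsilon)$ achieves $\Phi(\xx{x}{T_\varepsilon}{})\le\Phi(q)+\varepsilon$, and monotonicity guarantees $\Phi(\xx{x}{t}{})\le\Phi(q)+\varepsilon$ for all $t\ge T_\varepsilon$.

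The main obstacle is the mismatch created by the per-player step sizes $\eta_i$: unlike ordinary mirror descent, the joint update does not move opposite the full gradient $\nabla\Phi$, so one cannot simply invoke the textbook single-step-size guarantee. The resolution is that the variational inequality and the three-point identity hold \emph{blockwise}, the smoothness inequality (\ref{eq:smooth}) is already stated as a sum of per-block divergences $\sum_i\BRi{\cdot,\cdot}$, and the resulting cross terms aggregate into $\sum_i(\lambda-1/\eta_i)\BRi{\xx{x}{t+1}{i},\xx{x}{t}{i}}$, which the uniform bound $\eta_i\le 1/\lambda$ renders harmless. Care is also needed to use $\eta_i\ge\eta$ in the telescoped divergence sum (to control the coefficient $1/\eta_i$) and $\eta_i\le 1/\lambda$ in the error sum simultaneously, which is precisely why the hypothesis confines each $\eta_i$ to the window $[\eta,1/\lambda]$.
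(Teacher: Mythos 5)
Your proposal is correct and follows essentially the same route as the paper: it identifies the joint update as a generalized (blockwise) mirror-descent step on $\Phi$, uses the variational inequality plus the three-point identity for the Bregman divergences, drops the error term via $\eta_i\le 1/\lambda$, establishes monotonicity of $\Phi(\xx{x}{t}{})$, and telescopes to get $T_\varepsilon\le n\gamma/(\eta\varepsilon)$. The only cosmetic differences are that the paper proves monotonicity directly from the minimizer property (comparing the objective at $\xx{x}{t+1}{i}$ with its value at $z_i=\xx{x}{t}{i}$) and extracts the rate as $T(\Phi(\xx{x}{T}{})-\Phi(q))\le n\gamma/\eta$ rather than by contradiction, but these are equivalent.
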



We will prove this main result right after presenting its derived result.
From Theorem~\ref{thm:Nash}, we have the following, which we will prove in Section~\ref{sec:conv}.

\begin{corollary} \label{cor:conv}
Consider any nonatomic congestion game of $n$ players with parameters given in Section~\ref{sec:preliminaries}, and let $\lambda=mb k$. Now if each player $i$ plays the gradient descent algorithm by starting from any $\xx{x}{0}{i} \in \mc{K}_i$ and using any $\eta_i\in [\eta, 1/\lambda]$, then $T_{\varepsilon} \le 2/(n\eta \varepsilon)$. Furthermore, if each player $i$ plays the multiplicative update algorithm by starting from a uniform $\xx{x}{0}{i}$ (same load on each allowed path) and using any $\eta_i\in [\eta, 1/\lambda]$, then $T_{\varepsilon} \le (n \ln (dn))/(\eta \varepsilon)$.
\end{corollary}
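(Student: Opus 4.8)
My plan is to treat Corollary~\ref{cor:conv} as two applications of Theorem~\ref{thm:Nash}, one per regularizer, so that the only work is (i) to check that each chosen $R_i$ satisfies Assumption~\ref{as:cond}, which promotes the Hessian bound of Proposition~\ref{pro:para} to the smoothness of $\Phi$ with respect to $(R_1,\dots,R_n)$ demanded by the theorem, and (ii) to find a uniform bound $\gamma$ on the initial divergences $\BRi{q_i,\xx{x}{0}{i}}$, after which $T_\varepsilon\le n\gamma/(\eta\varepsilon)$ is immediate. For the common smoothness step, Proposition~\ref{pro:para} gives $\nabla^2\Phi(x)\preceq\lambda I$ with $\lambda=mbk$, so a second-order expansion yields $\Phi(x')\le\Phi(x)+\langle\nabla\Phi(x),x'-x\rangle+\tfrac{\lambda}{2}\|x'-x\|_2^2$; writing $\|x'-x\|_2^2=\sum_i\|x'_i-x_i\|_2^2$ and applying Assumption~\ref{as:cond} in each block converts the last term into $\lambda\sum_i\BRi{x'_i,x_i}$, which is exactly the required $\lambda$-smoothness. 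Thus everything reduces to the two regularizer-specific computations.

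For the gradient-descent choice $R_i(u_i)=\|u_i\|_2^2/2$, a direct calculation gives $\BRi{u_i,v_i}=\tfrac12\|u_i-v_i\|_2^2$, so Assumption~\ref{as:cond} holds with equality. Since $q_i,\xx{x}{0}{i}\in\mc{K}_i$ each have $\ell_1$-mass $1/n$, the triangle inequality together with $\|\cdot\|_2\le\|\cdot\|_1$ gives $\|q_i-\xx{x}{0}{i}\|_2\le 2/n$, hence $\BRi{q_i,\xx{x}{0}{i}}\le 2/n^2$; taking $\gamma=2/n^2$ in Theorem~\ref{thm:Nash} gives $T_\varepsilon\le n\gamma/(\eta\varepsilon)=2/(n\eta\varepsilon)$.

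For the multiplicative-update choice $R_i(u_i)=\sum_s(u_{i,s}\ln u_{i,s}-u_{i,s})$, the Bregman divergence is the generalized relative entropy, which on $\mc{K}_i$ (where $u_i$ and $v_i$ carry equal mass) collapses to $\sum_s u_{i,s}\ln(u_{i,s}/v_{i,s})$. Here the one nontrivial check is Assumption~\ref{as:cond}, for which I would invoke Pinsker's inequality: it lower-bounds this divergence by $\tfrac{n}{2}\|u_i-v_i\|_1^2\ge\tfrac12\|u_i-v_i\|_2^2$, so the condition holds (the $1/n$ mass only helps). For the uniform start $\xx{x}{0}{i,s}=1/(n|\mc{S}_i|)\ge 1/(dn)$, discarding the nonpositive term $\sum_s q_{i,s}\ln q_{i,s}$ leaves $\BRi{q_i,\xx{x}{0}{i}}\le(1/n)\ln(dn)\le\ln(dn)$, so $\gamma=\ln(dn)$ and Theorem~\ref{thm:Nash} gives $T_\varepsilon\le n\ln(dn)/(\eta\varepsilon)$.

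Since Theorem~\ref{thm:Nash} already carries the analytic load, the corollary is essentially verification, and the only genuinely nonroutine point is the multiplicative-update instance of Assumption~\ref{as:cond}, which rests on Pinsker's inequality; the gradient-descent instance is an identity and the divergence estimates are elementary. I would also note that the bound $\gamma$ must hold simultaneously for all players, which it does because every estimate used only $|\mc{S}_i|\le d$ and $\|q_i\|_1=\|\xx{x}{0}{i}\|_1=1/n$, quantities uniform across $i$.
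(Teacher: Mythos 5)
Your proposal is correct and follows essentially the same route as the paper: invoke Theorem~\ref{thm:Nash} after (a) upgrading the Hessian bound of Proposition~\ref{pro:para} to $\lambda$-smoothness with respect to the $R_i$'s via Assumption~\ref{as:cond}, and (b) computing $\gamma=2/n^2$ for the quadratic regularizer and $\gamma=\ln(dn)$ for the entropic one. Your explicit rescaling in Pinsker's inequality to account for the mass-$1/n$ vectors is a slightly more careful rendering of the step the paper states tersely, but it is the same argument.
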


\begin{remark} \label{rem:eta}
According to Corollary~\ref{cor:conv}, playing the gradient descent algorithm guarantees a faster convergence time. In particular, if each player $i$ uses $\eta_i=1/\lambda$, then adopting the gradient descent algorithm leads to a convergence time $T_{\varepsilon} \le 2m b k /(n\varepsilon)$, while adopting the multiplicative update algorithm leads to $T_{\varepsilon} \le (m b k n \ln (dn)) /\varepsilon$.
\end{remark}

Note that the given bound on $T_{\varepsilon}$ is proportional to $\gamma$ by Theorem~\ref{thm:Nash},
which is an upper bound on $\BRi{q_i,\xx{x}{0}{i}}$.
In Section~\ref{sec:conv} of Proof of Corollary~\ref{cor:conv}, we can see that $\gamma$ is smaller in the gradient descent algorithm than in the multiplicative update algorithm.
Thus, the corresponding different bounds on $T_{\varepsilon}$.

Implications of $\Phi(\xx{x}{t}{})$ being close to $\Phi(q)$ include $\xx{x}{t}{}$ being an approximate equilibrium,
which will be proved in Section~\ref{sec:approximate}, and achieving social efficiency, which will be given in Section~\ref{sec:splittable}.
We say that a flow $x \in \mc{K}$ is an $\delta$-equilibrium if for any player $i \in N$ and any paths $s, s' \in \mc{S}_i$ with $ x_{i,s}>0$, $c_s(x) \le c_{s'}(x) + \delta$. Note that with $\delta=0$, we recover the standard definition of equilibrium for nonatomic games. The following shows that after the convergence time, the system playing our algorithm will stay in an $\delta$-equilibrium for a small $\delta$.
\begin{theorem} \label{thm:diff}
Any $x \in \mc{K}$ such that $\Phi(x)\leq\Phi(q)+\varepsilon$ for any $\varepsilon$ must be a $\delta$-equilibrium for some $\delta \le \sqrt{8 b m\varepsilon}$.
\end{theorem}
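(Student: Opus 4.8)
The plan is to prove the contrapositive by a single-commodity deviation argument, using the fact that for the potential in (\ref{eq:po}) one has $\partial\Phi(x)/\partial x_{i,s}=c_s(x)$, so that path-cost differences are exactly directional derivatives of $\Phi$. Since $q=\arg\min_{z\in\mc{K}}\Phi(z)$, every feasible point satisfies $\Phi\ge\Phi(q)$. Suppose $x\in\mc{K}$ is \emph{not} a $\delta$-equilibrium, so there are a player $i$ and paths $s,s'\in\mc{S}_i$ with $x_{i,s}>0$ and $c_s(x)-c_{s'}(x)>\delta$. Let $v$ be the direction that shifts load from $s$ to $s'$ within player $i$ (i.e.\ $v_{i,s}=-1$, $v_{i,s'}=+1$, and $v=0$ elsewhere); then $\langle\nabla\Phi(x),v\rangle=c_{s'}(x)-c_s(x)<-\delta$, so moving a small amount of flow along $v$ strictly decreases $\Phi$ to first order.

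Next I would control the second-order term to turn this first-order decrease into a quantitative gap. Writing $g(\theta)=\Phi(x+\theta v)$, a direct computation gives $g''(\theta)=\sum_{e\in s\triangle s'}c'_e(\ell_e(x+\theta v))$, since the edges in $s\cap s'$ see no net change of load. From the cost-function assumptions behind (\ref{eq:b}) one checks $c'_e(y)\le b$ on $[0,1]$ (using $c'_e(0)\le c_e(1)\le 1$ and $c''_e\le B$ with $b=B+1$), and $|s\triangle s'|\le 2m$, so $g''\le 2bm$ uniformly. Taylor's theorem then yields $\Phi(x+\theta v)\le \Phi(x)-\delta\theta+bm\,\theta^2$. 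Optimizing the right-hand side over $\theta$ gives its minimum at $\theta^\star=\delta/(2bm)$ with value $\Phi(x)-\delta^2/(4bm)$; keeping the constants loose (and leaving room for the feasibility restriction below) this gives a decrease of at least $\delta^2/(8bm)$.

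Combining the two steps finishes the argument: for the feasible point $x+\theta^\star v$ we have $\Phi(q)\le\Phi(x+\theta^\star v)\le\Phi(x)-\delta^2/(8bm)$, hence $\Phi(x)-\Phi(q)\ge\delta^2/(8bm)$. Thus if $\delta>\sqrt{8bm\,\varepsilon}$ then $\Phi(x)-\Phi(q)>\varepsilon$, contradicting the hypothesis $\Phi(x)\le\Phi(q)+\varepsilon$; so $x$ must be a $\delta$-equilibrium with $\delta\le\sqrt{8bm\,\varepsilon}$.

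The main obstacle is the \emph{feasibility} of the deviation: the step $x+\theta v$ stays in $\mc{K}$ only for $\theta\le x_{i,s}$, whereas the optimal step $\theta^\star=\delta/(2bm)$ can exceed $x_{i,s}$ when the violating path carries very little flow. In that regime one can only realize a decrease of order $\delta\,x_{i,s}$, which need not exceed $\varepsilon$, so the clean per-path bound is delicate exactly there. I would address this by taking $\theta=\min\{x_{i,s},\delta/(2bm)\}$ and analyzing the small-flow case separately, and---if a purely per-path statement proves too strong---falling back on the robust aggregate consequence $\langle\nabla\Phi(x),x-q\rangle\le\varepsilon$ (immediate from convexity of $\Phi$, Proposition~\ref{pro:conv}), which bounds the flow-weighted total violation $\sum_{i,s}x_{i,s}\bigl(c_s(x)-\min_{s'}c_{s'}(x)\bigr)$ and is what the deviation argument really controls.
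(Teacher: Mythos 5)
Your argument is essentially the paper's: the paper also moves load from the maximum-cost positive-flow path $s_1$ to the minimum-cost path $s_0$ of the same player and lower-bounds the resulting drop in $\Phi$, except that it takes a fixed step $\Delta=\delta/(4bm)$ and bounds $\Phi(x)-\Phi(z)\ge \Delta\bigl(c_{s_1}(z)-c_{s_0}(z)\bigr)\ge\Delta\delta/2=\delta^2/(8bm)$ directly by comparing the integrals edge by edge and using monotonicity of the $c_e$'s, rather than via your second-order Taylor bound on $g(\theta)=\Phi(x+\theta v)$; the constants come out the same. The one substantive point is the feasibility obstacle you flag at the end: the deviation is admissible only if $x_{i,s_1}\ge\delta/(4bm)$, and when the violating path carries very little flow the realizable decrease is only of order $\delta\,x_{i,s_1}$, which need not exceed $\varepsilon$. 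You are right that the per-path bound is delicate exactly there, and you should know that the paper's own proof does not address this either --- it simply moves $\Delta$ units off $s_1$ and asserts that the resulting $z$ lies in $\mc{K}$. So your proof is no less complete than the published one, and your closing observation locates a genuine unhandled case in the paper's argument rather than a defect specific to your route; neither your fallback to the aggregate bound $\langle\nabla\Phi(x),x-q\rangle\le\varepsilon$ nor the paper's proof actually establishes the stated per-path guarantee when $x_{i,s_1}$ is small.
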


\subsubsection*{Analysis}
To prove Theorem~\ref{thm:Nash}, the key observation is that the updates by all players collectively can be seen as doing a generalized version of the mirror descent, with different step sizes in different dimensions, on the potential function $\Phi$ defined in (\ref{eq:po}). To see this, note that for any $i \in N$ and $s \in \mc{S}_i$, the $s$'th entry of $\xx{\hat c}{t}{i}$ is
$$c_s(\xx{x}{t}{}) = \sum_{e \in s} c_e(\ell_e(\xx{x}{t}{})) = \frac{\partial \Phi(\xx{x}{t}{})}{\partial x_{i,s}},$$
which means that the $d$-dimensional vector $(\xx{\hat c}{t}{i})_{i \in N}$ is in fact equal to $\nabla\Phi(\xx{x}{t}{})$, the gradient of $\Phi$ at $\xx{x}{t}{}$.
That is, if we write $\nabla\Phi(\xx{x}{t}{})=(\nabla_1\Phi(\xx{x}{t}{}), \dots, \nabla_n\Phi(\xx{x}{t}{}))$, with $\nabla_i \Phi(\xx{x}{t}{})$ being the portion of $\nabla \Phi(\xx{x}{t}{})$ corresponding to player $i$, then the update rule of (\ref{eq:update1}) and (\ref{eq:update2}) becomes the following:
\begin{eqnarray}
\xx{x}{t+1}{i} &=& \arg\min_{z_i\in\mc{K}_i} \left\{\eta_i \langle \nabla_i \Phi(\xx{x}{t}{}), z_i\rangle + \BRi{z_i,\xx{x}{t}{i}}\right\} \label{eq:GD}\\
&=& \arg\min_{z_i\in\mc{K}_i} \BRi{z_i,\xx{x}{t}{i}-\eta_i \nabla_i \Phi(\xx{x}{t}{})}.
\end{eqnarray}
Observe that when all the $\eta_i$'s are identical, the collective update of all players moves the whole system exactly in the direction of $-\nabla \Phi(\xx{x}{t}{})$, and this becomes the standard mirror descent algorithm which has the same step size across all dimensions. It is known that doing such a mirror descent on a smooth convex function leads to a fast convergence to its minimum \cite{BDX11,nesterov}. On the other hand, we consider the more general case in which different players can have different learning rates, and this corresponds to a more general mirror descent algorithm which allows different step sizes in different dimensions. Because the different step sizes have different scaling effects in different dimensions, the collective update now no longer moves the whole system in the direction of $-\nabla \Phi(\xx{x}{t}{})$, and it is not clear if a similar convergence result can be obtained. Interestingly, the following theorem shows that doing such a generalized mirror descent algorithm on a general smooth convex function still gives us a fast convergence to its minimum.

\begin{theorem} \label{thm:GD}
Suppose $\mc{K}=\mc{K}_1 \times \cdots \times \mc{K}_n$, with each $\mc{K}_i$ being a convex set. Let $\Phi: \mc{K}\rightarrow\mathbb{R}$ be any convex function which is $\lambda$-smooth with respect to some $(R_1,\dots,R_n)$ and let $q=(q_1, \dots, q_n)= \arg\min_{z\in\mathcal{K}} \Phi(z)$. Suppose we start from some $\xx{x}{0}{}=(\xx{x}{0}{1},\dots,\xx{x}{0}{n})$, with each $\BRi{q_i,\xx{x}{0}{i}} \le \gamma$, and then use the update rule in (\ref{eq:GD}), with each $\eta_i \in [\eta, 1/\lambda]$ for some $\eta$. Then for any $\varepsilon \in (0,1)$, there exists some $T_{\varepsilon} \le n\gamma/(\eta \varepsilon)$ such that for any $t \ge T_{\varepsilon}$, $\Phi(\xx{x}{t}{}) \le \Phi(q) + \varepsilon$.
\end{theorem}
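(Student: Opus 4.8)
The plan is to recognize the collective update (\ref{eq:GD}) as a \emph{block-coordinate} mirror-descent step on $\Phi$ and to adapt the standard mirror-descent convergence analysis so that it tolerates a \emph{different} step size $\eta_i$ in each block $\mc{K}_i$. Throughout I would write $\nabla_i \Phi(\xx{x}{t}{})$ for the block of the gradient belonging to player $i$ and fix the comparison point to the minimizer $q=(q_1,\dots,q_n)$.

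First I would extract the first-order optimality (variational) inequality characterizing each update. Since $\xx{x}{t+1}{i}$ minimizes the convex objective $\eta_i \langle \nabla_i\Phi(\xx{x}{t}{}), z_i\rangle + \BRi{z_i,\xx{x}{t}{i}}$ over the convex set $\mc{K}_i$, its gradient $\eta_i \nabla_i\Phi(\xx{x}{t}{}) + \nabla R_i(\xx{x}{t+1}{i}) - \nabla R_i(\xx{x}{t}{i})$ must have nonnegative inner product with $u_i - \xx{x}{t+1}{i}$ for every $u_i\in\mc{K}_i$. Feeding this into the three-point Bregman identity
\[
\langle \nabla R_i(\xx{x}{t+1}{i}) - \nabla R_i(\xx{x}{t}{i}),\, u_i-\xx{x}{t+1}{i}\rangle = \BRi{u_i,\xx{x}{t}{i}} - \BRi{u_i,\xx{x}{t+1}{i}} - \BRi{\xx{x}{t+1}{i},\xx{x}{t}{i}}
\]
would yield, for every $u_i\in\mc{K}_i$,
\[
\eta_i\langle \nabla_i\Phi(\xx{x}{t}{}),\, \xx{x}{t+1}{i}-u_i\rangle \le \BRi{u_i,\xx{x}{t}{i}} - \BRi{u_i,\xx{x}{t+1}{i}} - \BRi{\xx{x}{t+1}{i},\xx{x}{t}{i}}.
\]

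Next I would combine three facts at a single step $t$: (i) the $\lambda$-smoothness inequality (\ref{eq:smooth}) applied with $x'=\xx{x}{t+1}{}$ and $x=\xx{x}{t}{}$; (ii) convexity of $\Phi$, which gives $\Phi(\xx{x}{t}{})-\Phi(q)\le\langle\nabla\Phi(\xx{x}{t}{}),\xx{x}{t}{}-q\rangle$; and (iii) the optimality inequality above with $u_i=q_i$, summed over $i$. After the intermediate $\xx{x}{t}{}$ cancels between the smoothness and convexity terms, this should collapse to
\[
\Phi(\xx{x}{t+1}{})-\Phi(q) \le \sum_i \frac{1}{\eta_i}\Big(\BRi{q_i,\xx{x}{t}{i}} - \BRi{q_i,\xx{x}{t+1}{i}}\Big) + \sum_i\Big(\lambda-\tfrac{1}{\eta_i}\Big)\BRi{\xx{x}{t+1}{i},\xx{x}{t}{i}}.
\]
The hypothesis $\eta_i\le 1/\lambda$ makes every coefficient $\lambda-1/\eta_i\le 0$, so the second sum is nonpositive (Bregman divergences are nonnegative) and can be discarded. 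Setting $u_i=\xx{x}{t}{i}$ instead in the optimality inequality yields the same sign conclusion and shows that $\Phi(\xx{x}{t}{})$ is \emph{nonincreasing} in $t$, a fact I will use at the end.

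The point I expect to be the main obstacle is precisely the \emph{heterogeneity} of the step sizes: unlike the standard analysis, the per-block telescoping coefficients $1/\eta_i$ are not all equal, so one cannot factor a single constant out of the sum over $i$. The resolution is that each $\eta_i$ is fixed across time, so for each fixed player $i$ the difference $\BRi{q_i,\xx{x}{t}{i}}-\BRi{q_i,\xx{x}{t+1}{i}}$ telescopes cleanly over $t=0,\dots,T-1$ to $\BRi{q_i,\xx{x}{0}{i}}-\BRi{q_i,\xx{x}{T}{i}}\le\gamma$. Summing the per-step bound over $t$ and using $1/\eta_i\le 1/\eta$ then gives $\sum_{t=0}^{T-1}(\Phi(\xx{x}{t+1}{})-\Phi(q)) \le n\gamma/\eta$. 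Finally, because $\Phi(\xx{x}{t}{})$ is nonincreasing, an averaging argument shows that the first time the gap drops to $\varepsilon$ occurs within $T_\varepsilon\le n\gamma/(\eta\varepsilon)$ steps, and monotonicity propagates $\Phi(\xx{x}{t}{})\le\Phi(q)+\varepsilon$ to all later $t$, as claimed.
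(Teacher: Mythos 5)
Your proposal is correct and follows essentially the same route as the paper's proof: the variational inequality for the minimizer combined with the three-point Bregman identity, the smoothness-plus-convexity per-step bound with the $(\lambda - 1/\eta_i)\,\BRi{\xx{x}{t+1}{i},\xx{x}{t}{i}}$ term discarded via $\eta_i \le 1/\lambda$, per-player telescoping, and monotonicity of $\Phi(\xx{x}{t}{})$ to convert the averaged bound into a last-iterate guarantee. The only cosmetic difference is that the paper establishes monotonicity by directly comparing the update objective at $\xx{x}{t+1}{i}$ against its value at $z_i=\xx{x}{t}{i}$, whereas you reuse the variational inequality with $u_i=\xx{x}{t}{i}$; both are valid.
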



We will prove Theorem~\ref{thm:GD} in Section~\ref{sec:GD}. Now note that Theorem~\ref{thm:Nash} follows immediately from Theorem~\ref{thm:GD} since our potential function $\Phi$ is convex by Proposition~\ref{pro:conv}. On the other hand, Theorem~\ref{thm:GD} works for a general convex function (not restricted to the specific potential function given in (\ref{eq:po})), which may have independent interest of its own.


\subsubsection{Proof of Theorem~\ref{thm:GD}} \label{sec:GD}

Our proof follows closely that in \cite{BDX11} for the special case in which all the $\eta_i$'s are identical. To simplify our notation, let us denote the gradient vector $\nabla \Phi(\xx{x}{t}{})$ by $\xx{g}{t}{}=(\xx{g}{t}{1}, \dots, \xx{g}{t}{n})$, with $\xx{g}{t}{i}=\nabla_i \Phi(\xx{x}{t}{})$.

Using the assumption that for each $i$, $\eta_i \le 1/\lambda$ and thus $\lambda \le 1/\eta_i$, the $\lambda$-smoothness condition~(\ref{eq:smooth}) implies that
\begin{equation}\label{eq:eta}
\Phi(\xx{x}{t+1}{}) \le \Phi(\xx{x}{t}{}) + \langle \xx{g}{t}{}, \xx{x}{t+1}{}-\xx{x}{t}{}\rangle + \sum_{i=1}^n \frac{1}{\eta_i} \BRi{\xx{x}{t+1}{i},\xx{x}{t}{i}},
\end{equation}
because each $\BRi{\xx{x_i}{t+1}{},\xx{x_i}{t}{}}$ is nonnegative.
Then we need the following two lemmas, which we will prove later.
\begin{lemma} \label{lem:less}
For any integer $t \ge 0$, $\Phi(\xx{x}{t+1}{}) \le \Phi(\xx{x}{t}{})$.
\end{lemma}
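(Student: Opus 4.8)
The plan is to combine the smoothness inequality (\ref{eq:eta}), which is already established, with the optimality property built into the update rule (\ref{eq:GD}). Recall from the preamble to the lemma that $\xx{g}{t}{i}=\nabla_i\Phi(\xx{x}{t}{})$. Starting from (\ref{eq:eta}), it suffices to show that $\langle \xx{g}{t}{}, \xx{x}{t+1}{}-\xx{x}{t}{}\rangle + \sum_{i=1}^n \frac{1}{\eta_i} \BRi{\xx{x}{t+1}{i},\xx{x}{t}{i}} \le 0$. Since both the inner product and the sum decompose over players, this quantity equals $\sum_{i=1}^n\bigl(\langle \xx{g}{t}{i}, \xx{x}{t+1}{i}-\xx{x}{t}{i}\rangle + \frac{1}{\eta_i}\BRi{\xx{x}{t+1}{i},\xx{x}{t}{i}}\bigr)$, so I would prove the stronger per-player claim that each summand is nonpositive. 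Multiplying through by $\eta_i>0$, this is equivalent to $\eta_i\langle \xx{g}{t}{i}, \xx{x}{t+1}{i}-\xx{x}{t}{i}\rangle + \BRi{\xx{x}{t+1}{i},\xx{x}{t}{i}} \le 0$.

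The key step is to exploit that $\xx{x}{t+1}{i}$ is, by the definition in (\ref{eq:GD}), the minimizer over $\mc{K}_i$ of the objective $z_i \mapsto \eta_i\langle \xx{g}{t}{i}, z_i\rangle + \BRi{z_i,\xx{x}{t}{i}}$. Since $\xx{x}{t}{i}\in\mc{K}_i$ is itself a feasible comparison point, evaluating this objective at $z_i=\xx{x}{t}{i}$ and using that no feasible point can beat the minimizer gives $\eta_i\langle \xx{g}{t}{i}, \xx{x}{t+1}{i}\rangle + \BRi{\xx{x}{t+1}{i},\xx{x}{t}{i}} \le \eta_i\langle \xx{g}{t}{i}, \xx{x}{t}{i}\rangle + \BRi{\xx{x}{t}{i},\xx{x}{t}{i}}$. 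Because the Bregman divergence of a point with itself vanishes, $\BRi{\xx{x}{t}{i},\xx{x}{t}{i}}=0$, and rearranging the terms yields exactly the per-player inequality from the first paragraph. Summing over $i$ and substituting back into (\ref{eq:eta}) then gives $\Phi(\xx{x}{t+1}{}) \le \Phi(\xx{x}{t}{})$, completing the proof.

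I do not expect a serious obstacle here; the subtlety, such as it is, lies in how one passes from the global minimality of $\xx{x}{t+1}{i}$ to a usable inequality. The clean route is to use $z_i=\xx{x}{t}{i}$ as the test point, which invokes only feasibility of $\xx{x}{t}{i}$ together with the minimizing property of $\xx{x}{t+1}{i}$, and therefore needs neither a first-order variational inequality nor differentiability of $R_i$ beyond what is already assumed for the Bregman divergence to be defined. I would present the per-player bound as the crux and treat the decomposition over players and the appeal to (\ref{eq:eta}) as routine bookkeeping.
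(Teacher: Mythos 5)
Your proof is correct and follows essentially the same route as the paper's: start from the smoothness bound (\ref{eq:eta}), decompose over players, and use the minimality of $\xx{x}{t+1}{i}$ in (\ref{eq:GD}) against the feasible test point $z_i=\xx{x}{t}{i}$ together with $\BRi{\xx{x}{t}{i},\xx{x}{t}{i}}=0$ to show each per-player term is nonpositive. No issues.
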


\begin{lemma} \label{lem:sum}
For any integer $T \ge 1$,
$$\sum_{t=0}^{T-1} \left(\Phi(\xx{x}{t+1}{}) - \Phi(q)\right) \le \sum_{i=1}^n  \frac{1}{\eta_i} \BRi{q_i,\xx{x}{0}{i}}.$$
\end{lemma}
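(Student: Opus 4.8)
The plan is to follow the standard primal--dual telescoping argument for mirror descent, adapted to the per-player step sizes $\eta_i$. The starting point is the first-order optimality (variational inequality) characterization of the update in (\ref{eq:GD}): since $\xx{x}{t+1}{i}$ minimizes the convex objective $\eta_i \langle \xx{g}{t}{i}, z_i\rangle + \BRi{z_i,\xx{x}{t}{i}}$ over the convex set $\mc{K}_i$, the gradient of this objective at the minimizer, which equals $\eta_i \xx{g}{t}{i} + \nabla R_i(\xx{x}{t+1}{i}) - \nabla R_i(\xx{x}{t}{i})$, must make a nonnegative inner product with every feasible direction $z_i - \xx{x}{t+1}{i}$. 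This gives, for all $z_i \in \mc{K}_i$,
$$\langle \eta_i \xx{g}{t}{i}, z_i - \xx{x}{t+1}{i}\rangle \ge \langle \nabla R_i(\xx{x}{t}{i}) - \nabla R_i(\xx{x}{t+1}{i}), z_i - \xx{x}{t+1}{i}\rangle.$$
I would then rewrite the right-hand side with the three-point identity for Bregman divergences, $\langle \nabla R_i(\xx{x}{t}{i}) - \nabla R_i(\xx{x}{t+1}{i}), z_i - \xx{x}{t+1}{i}\rangle = \BRi{z_i,\xx{x}{t+1}{i}} - \BRi{z_i,\xx{x}{t}{i}} + \BRi{\xx{x}{t+1}{i},\xx{x}{t}{i}}$, specialize to $z_i = q_i$, divide by $\eta_i$, and rearrange to obtain
$$\langle \xx{g}{t}{i}, \xx{x}{t+1}{i} - q_i\rangle \le \frac{1}{\eta_i}\left(\BRi{q_i,\xx{x}{t}{i}} - \BRi{q_i,\xx{x}{t+1}{i}} - \BRi{\xx{x}{t+1}{i},\xx{x}{t}{i}}\right).$$

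Next I would combine two complementary bounds on $\Phi(\xx{x}{t+1}{}) - \Phi(q)$. Convexity of $\Phi$ gives $\Phi(\xx{x}{t}{}) - \Phi(q) \le \langle \xx{g}{t}{}, \xx{x}{t}{} - q\rangle$ (recall $\xx{g}{t}{} = \nabla \Phi(\xx{x}{t}{})$), while adding this to the smoothness estimate (\ref{eq:eta}) collapses the two inner products into a single one:
$$\Phi(\xx{x}{t+1}{}) - \Phi(q) \le \langle \xx{g}{t}{}, \xx{x}{t+1}{} - q\rangle + \sum_{i=1}^n \frac{1}{\eta_i}\BRi{\xx{x}{t+1}{i},\xx{x}{t}{i}}.$$
Splitting $\langle \xx{g}{t}{}, \xx{x}{t+1}{} - q\rangle = \sum_{i=1}^n \langle \xx{g}{t}{i}, \xx{x}{t+1}{i} - q_i\rangle$ and inserting the per-player bound from the first paragraph, the two copies of $\sum_{i=1}^n (1/\eta_i)\BRi{\xx{x}{t+1}{i},\xx{x}{t}{i}}$ appear with opposite signs and cancel exactly, leaving the per-step estimate
$$\Phi(\xx{x}{t+1}{}) - \Phi(q) \le \sum_{i=1}^n \frac{1}{\eta_i}\left(\BRi{q_i,\xx{x}{t}{i}} - \BRi{q_i,\xx{x}{t+1}{i}}\right).$$
Finally I would sum over $t = 0, \dots, T-1$; for each fixed $i$ the right-hand side telescopes to $(1/\eta_i)\left(\BRi{q_i,\xx{x}{0}{i}} - \BRi{q_i,\xx{x}{T}{i}}\right)$, and dropping the nonnegative term $\BRi{q_i,\xx{x}{T}{i}}$ yields exactly the claimed bound $\sum_{i=1}^n (1/\eta_i)\BRi{q_i,\xx{x}{0}{i}}$.

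The step I expect to be the crux is the exact cancellation of the Bregman terms $\BRi{\xx{x}{t+1}{i},\xx{x}{t}{i}}$, since it is the only place the whole argument could break. This is precisely where the hypothesis $\eta_i \le 1/\lambda$ enters: it is what lets the $\lambda$-smoothness coefficient in (\ref{eq:smooth}) be replaced by $1/\eta_i$ in (\ref{eq:eta}), so that the positive Bregman term produced by smoothness matches, coefficient for coefficient, the negative one produced by the variational inequality. Care is needed to carry the per-player factors $1/\eta_i$ consistently through the sum over $i$, since, unlike the identical-step-size case of \cite{BDX11}, these coefficients differ across dimensions and cannot be factored out of $\sum_{i=1}^n$.
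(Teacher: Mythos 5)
Your proposal is correct and follows essentially the same route as the paper: the same variational-inequality characterization of the update (the paper proves your per-player inequality by writing out the three Bregman divergences explicitly, which is just the three-point identity unrolled), the same use of convexity to bring in $\Phi(q)$, the same cancellation of the $\BRi{\xx{x}{t+1}{i},\xx{x}{t}{i}}$ terms enabled by $\eta_i \le 1/\lambda$, and the same telescoping sum. Your identification of the crux and of where the step-size hypothesis enters matches the paper's argument exactly.
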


Combining these two lemmas together, we obtain
\begin{eqnarray*}
T \left(\Phi(\xx{x}{T}{}) - \Phi(q)\right) &\le& \sum_{t=0}^{T-1} \left(\Phi(\xx{x}{t+1}{}) - \Phi(q)\right)\\
&\le& \sum_{i=1}^n  \frac{1}{\eta_i} \BRi{q_i,\xx{x}{0}{i}}\\
&\le& \frac{n \gamma}{\eta}.
\end{eqnarray*}
Dividing both sides by $T$ gives us
$$\Phi(\xx{x}{T}{}) - \Phi(q) \le \frac{n \gamma}{\eta T} \le \varepsilon,$$
when $T \ge n \gamma/(\eta \varepsilon)$, and we have the theorem. It remains to prove the two lemmas, which we do next.


\begin{proof}[Proof of Lemma~\ref{lem:less}]
We know from (\ref{eq:eta}) that
$$\Phi(\xx{x}{t+1}{}) \le \Phi(\xx{x}{t}{}) + \sum_{i=1}^n  \left(\langle \xx{g}{t}{i}, \xx{x}{t+1}{i}-\xx{x}{t}{i}\rangle + \frac{1}{\eta_i} \BRi{\xx{x}{t+1}{i},\xx{x}{t}{i}}\right).$$
To bound the sum above, note that according to the definition of $\xx{x}{t+1}{i}$ in (\ref{eq:GD}), we have
\begin{eqnarray*}
\lefteqn{\langle \xx{g}{t}{i}, \xx{x}{t+1}{i}-\xx{x}{t}{i}\rangle + \frac{1}{\eta_i} \BRi{\xx{x}{t+1}{i},\xx{x}{t}{i}}}\\
&\le& \langle \xx{g}{t}{i}, \xx{x}{t}{i}-\xx{x}{t}{i}\rangle + \frac{1}{\eta_i} \BRi{\xx{x}{t}{i},\xx{x}{t}{i}}\\
&=& 0.
\end{eqnarray*}
Applying this to the above bound on $\Phi(\xx{x}{t+1}{})$, Lemma~\ref{lem:less} follows.
\end{proof}

\begin{proof}[Proof of Lemma~\ref{lem:sum}]
We know from (\ref{eq:eta}) that for any $t \ge 0$, $\Phi(\xx{x}{t+1}{})$ is at most
$$\Phi(\xx{x}{t}{}) + \langle \xx{g}{t}{}, \xx{x}{t+1}{}-\xx{x}{t}{}\rangle + \sum_{i=1}^n \frac{1}{\eta_i} \BRi{\xx{x}{t+1}{i},\xx{x}{t}{i}},$$
where the second term above can be expressed as
\begin{eqnarray*}
\langle \xx{g}{t}{}, \xx{x}{t+1}{}-\xx{x}{t}{}\rangle &=& \langle \xx{g}{t}{}, q-\xx{x}{t}{}\rangle + \langle \xx{g}{t}{}, \xx{x}{t+1}{}-q\rangle\\
&=& \langle \xx{g}{t}{}, q-\xx{x}{t}{}\rangle + \sum_{i=1}^n \langle \xx{g}{t}{i}, \xx{x}{t+1}{i}-q_i\rangle.
\end{eqnarray*}
Since $\Phi(\xx{x}{t}{}) + \langle \xx{g}{t}{}, q-\xx{x}{t}{}\rangle \le \Phi(q)$ for a convex $\Phi$, we thus know that $\Phi(\xx{x}{t+1}{})$ is at most
\begin{equation}\label{eq:up}
\Phi(q) + \sum_{i=1}^n  \left(\langle \xx{g}{t}{i}, \xx{x}{t+1}{i}-q_i\rangle + \frac{1}{\eta_i} \BRi{\xx{x}{t+1}{},\xx{x}{t}{}}\right).
\end{equation}
To bound the sum above, we rely on the following.
\begin{proposition}
For each $i$,
$\langle \xx{g}{t}{i}, \xx{x}{t+1}{i}-q_i\rangle$ is at most
$$\frac{1}{\eta_i} \left(\BRi{q_i,\xx{x}{t}{i}} - \BRi{q_i,\xx{x}{t+1}{i}} - \BRi{\xx{x}{t+1}{i},\xx{x}{t}{i}}\right).$$
\end{proposition}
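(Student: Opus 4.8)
The plan is to exploit the fact that $\xx{x}{t+1}{i}$ is the \emph{minimizer} in (\ref{eq:GD}) of the convex objective
$$F_i(z_i) = \eta_i \langle \xx{g}{t}{i}, z_i\rangle + \BRi{z_i, \xx{x}{t}{i}}$$
over the convex set $\mc{K}_i$. First I would write the first-order optimality (variational inequality) condition for this constrained minimization: since $q_i \in \mc{K}_i$ and $F_i$ is convex and differentiable in $z_i$ on the relevant domain, the minimizer satisfies $\langle \nabla F_i(\xx{x}{t+1}{i}), q_i - \xx{x}{t+1}{i}\rangle \ge 0$. Using $\nabla_{z_i} \BRi{z_i, \xx{x}{t}{i}} = \nabla R_i(z_i) - \nabla R_i(\xx{x}{t}{i})$, the gradient is $\nabla F_i(z_i) = \eta_i \xx{g}{t}{i} + \nabla R_i(z_i) - \nabla R_i(\xx{x}{t}{i})$, so the condition reads
$$\langle \eta_i \xx{g}{t}{i} + \nabla R_i(\xx{x}{t+1}{i}) - \nabla R_i(\xx{x}{t}{i}),\, q_i - \xx{x}{t+1}{i}\rangle \ge 0,$$
which rearranges to $\eta_i \langle \xx{g}{t}{i}, \xx{x}{t+1}{i} - q_i\rangle \le \langle \nabla R_i(\xx{x}{t+1}{i}) - \nabla R_i(\xx{x}{t}{i}),\, q_i - \xx{x}{t+1}{i}\rangle$.

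The key step is then a purely algebraic ``three-point'' identity for the Bregman divergence: expanding the definition $\BRi{u,v} = R_i(u) - R_i(v) - \langle \nabla R_i(v), u-v\rangle$ three times and collecting terms gives, for any $a,b,c \in \mc{K}_i$,
$$\langle \nabla R_i(b) - \nabla R_i(c),\, a - b\rangle = \BRi{a, c} - \BRi{a, b} - \BRi{b, c}.$$
Applying this with $a = q_i$, $b = \xx{x}{t+1}{i}$, $c = \xx{x}{t}{i}$ turns the right-hand side of the rearranged optimality condition into exactly $\BRi{q_i, \xx{x}{t}{i}} - \BRi{q_i, \xx{x}{t+1}{i}} - \BRi{\xx{x}{t+1}{i}, \xx{x}{t}{i}}$. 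Dividing through by $\eta_i > 0$ then yields the claimed bound. I would verify the identity once by the substitution just described; this is routine bookkeeping.

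The main obstacle is the technical justification of the optimality condition rather than the algebra. When $\xx{x}{t+1}{i}$ lands on the boundary of $\mc{K}_i$ — as happens, for instance, with the entropy regularizer underlying the multiplicative update, where $\nabla R_i$ blows up near the boundary — one must argue that the stationarity inequality still holds in the stated form. The cleanest way is to read the mirror step in its dual form (\ref{eq:update2}), so that the minimizer is well defined and lies in the effective domain of $R_i$ where $\nabla R_i$ is finite, and then apply the standard variational inequality for minimizing a differentiable convex function over a convex set. Once this is granted, the remainder is just the identity above followed by a single division.
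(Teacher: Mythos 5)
Your proposal is correct and follows essentially the same route as the paper's proof: the first-order optimality (variational inequality) condition for the mirror-descent step, followed by the three-point identity obtained by expanding the Bregman divergence definition three times, and a division by $\eta_i$. The only cosmetic difference is that the paper shifts the objective by the constant $-\langle \xx{g}{t}{i}, q_i\rangle$ before invoking optimality, which changes nothing; your added caution about the boundary behaviour of $\nabla R_i$ for the entropy regularizer is a reasonable technical point that the paper glosses over.
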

\begin{proof}
According to the definition of $\xx{x}{t+1}{i}$ in (\ref{eq:GD}), it is also the minimizer of the function
$$L(z)= \eta_i \langle \xx{g}{t}{i}, z-q_i\rangle + \BRi{z,\xx{x}{t}{i}}$$
over $z \in \mc{K}_i$, since $\langle \xx{g}{t}{i}, -q_i\rangle$ is a constant independent of $z$. Then from a well-known fact in convex optimization \cite[p.139-140]{boyd:vandenberghe}, we know that
$$\langle \nabla L(\xx{x}{t+1}{i}), q_i - \xx{x}{t+1}{i} \rangle \ge 0.$$
Since $\nabla L(\xx{x}{t+1}{i}) = \eta_i \xx{g}{t}{i} + \nabla R_i(\xx{x}{t+1}{i}) - \nabla R_i(\xx{x}{t}{i})$, we have
\begin{equation}\label{eq:BR}
\eta_i \langle \xx{g}{t}{i}, \xx{x}{t+1}{i}-q_i \rangle \le \left\langle \nabla R_i(\xx{x}{t+1}{i}) - \nabla R_i(\xx{x}{t}{i}), q_i - \xx{x}{t+1}{i} \right\rangle.
\end{equation}
Then according to the definition of $\BRi{\cdot}$, we have
\begin{eqnarray*}
\lefteqn{\BRi{q_i,\xx{x}{t}{i}}}\\ &=& R_i(q_i) - R_i(\xx{x}{t}{i}) - \langle \nabla R_i(\xx{x}{t}{i}), q_i - \xx{x}{t}{i}\rangle,\\
\lefteqn{\BRi{q_i,\xx{x}{t+1}{i}}}\\ &=& R_i(q_i) - R_i(\xx{x}{t+1}{i}) - \langle \nabla R_i(\xx{x}{t+1}{i}), q_i - \xx{x}{t+1}{i}\rangle, \mbox{ and}\\
\lefteqn{\BRi{\xx{x}{t+1}{i},\xx{x}{t}{i}}}\\ &=& R_i(\xx{x}{t+1}{i}) - R_i(\xx{x}{t}{i}) - \langle \nabla R_i(\xx{x}{t}{i}), \xx{x}{t+1}{i} - \xx{x}{t}{i}\rangle.
\end{eqnarray*}
By subtracting the second and the third equalities from the first, we obtain
\begin{eqnarray*}
\lefteqn{\BRi{q_i,\xx{x}{t}{i}} - \BRi{q_i,\xx{x}{t+1}{i}} - \BRi{\xx{x}{t+1}{i},\xx{x}{t}{i}}}\\
&=& \left\langle \nabla R_i(\xx{x}{t+1}{i}) - \nabla R_i(\xx{x}{t}{i}), q_i - \xx{x}{t+1}{i} \right\rangle.
\end{eqnarray*}
Substituting this into (\ref{eq:BR}) proves the proposition.
\end{proof}

Combining the bound from this proposition with the upper bound on $\Phi(\xx{x}{t+1}{})$ in (\ref{eq:up}), we obtain
$$\Phi(\xx{x}{t+1}{}) \le \Phi(q) + \sum_{i=1}^n \frac{1}{\eta_i} \left(\BRi{q_i,\xx{x}{t}{i}} - \BRi{q_i,\xx{x}{t+1}{i}}\right).$$
This implies that
\begin{eqnarray*}
\lefteqn{\sum_{t=0}^{T-1} \left(\Phi(\xx{x}{t+1}{}) - \Phi(q)\right)}\\
&\le& \sum_{i=1}^n  \frac{1}{\eta_i} \sum_{t=0}^{T-1} \left(\BRi{q,\xx{x}{t}{i}} - \BRi{q,\xx{x}{t+1}{i}}\right)\\
&\le& \sum_{i=1}^n  \frac{1}{\eta_i} \BRi{q_i,\xx{x}{0}{i}},
\end{eqnarray*}
which proves Lemma~\ref{lem:sum}.
\end{proof}

\subsubsection{Proof of Corollary~\ref{cor:conv}} \label{sec:conv}

Let us first consider the case that each player plays the gradient descent algorithm. Note that this corresponds to choosing $R_i(u_i) = \| u_i \|_2^2 /2$ for each $i$, and one can show that $\BRi{u_i,v_i} = \|u_i - v_i\|^2_2 /2$, for $u_i,v_i \in \mc{K}_i$. Then, we have
$$\BRi{q_i,\xx{x}{0}{i}} = \|q_i - \xx{x}{0}{i}\|^2_2 /2 \le \|q_i - \xx{x}{0}{i}\|_1^2 /2$$
which is at most $$\left(\|q_i\|_1 + \|\xx{x}{0}{i}\|_1\right)^2 /2 \le 2/n^2.$$
Therefore, we can choose $\gamma=2/n^2$ to have $\BRi{q_i,\xx{x}{0}{i}} \le \gamma$. Furthermore, using the Taylor expansion together with Proposition~\ref{pro:para}, we know that for any $x,x' \in \mc{K}$,
$$\Phi(x') \le \Phi(x)+\langle\nabla \Phi(x), x'-x\rangle + \lambda \|x'-x\|_2^2 / 2,$$
with $\lambda=mb k$. Since
$$\|x'-x\|_2^2 /2 = \sum_i \|x'_i-x_i\|_2^2 /2= \sum_i \BRi{x'_i,x_i},$$
we can guarantee that $\Phi$ is $\lambda$-smooth with this choice of $R_i$'s.

Next, let us consider the case that each player plays the multiplicative update algorithm.
Note that this corresponds to choosing $R_i(u_i) = \sum_s (u_{i,s} \ln u_{i,s} - u_{i,s})$ for each $i$,
and one can show that $\BRi{u_i,v_i} = \sum_s u_{i,s} \ln (u_{i,s}/v_{i,s})$, for $u_i,v_i \in \mc{K}_i$. Then, we have
$$\BRi{q_i,\xx{x}{0}{i}} \le \sum_s q_{i,s} \ln (|\mc{S}_i|n) \le \ln (dn).$$
Therefore, we can choose $\gamma=\ln (dn)$ to have $\BRi{q_i,\xx{x}{0}{i}} \le \gamma$. Furthermore, we know that $$\|x'_i-x_i\|_2^2 /2 \le \|x'_i - x_i\|_1^2 /2 \le \BRi{x'_i,x_i},$$
by Pinsker's inequality.\footnote{Pinsker's inequality states that the total variance is upper bounded by the KL-divergence between two probability distributions where the total variance can be defined as half the 1-norm between these two distributions. The Bregman divergence is KL-divergence here.} Therefore, we can again guarantee that $\Phi$ is $\lambda$-smooth with this choice of $R_i$'s.

Substituting these bounds of $\gamma$ and $\lambda$ into Theorem~\ref{thm:Nash}, Corollary~\ref{cor:conv} then follows.

\subsubsection{Proof of Theorem~\ref{thm:diff}} \label{sec:approximate}

Consider any $x \in \mc{K}$ such that $\Phi(x)\leq\Phi(q)+\varepsilon$ and any $i \in N$. Let $s_0$ be the path in $\mc{S}_i$ which minimizes $c_s(x)$ among $s \in \mc{S}_i$, and let $s_1$ be the path which maximizes $c_s(x)$ among $s \in \mc{S}_i$ with $ x_{i,s}>0$. Let $\delta = c_{s_1}(x) - c_{s_0}(x)$ and our goal is to show that $\delta$ is small by bounding it in terms of $\varepsilon$. The idea is that if $\delta$ were large, we could move a significant amount of load from $s_1$ to $s_0$ and decrease the $\Phi$ value substantially, which is impossible as $\Phi(x)$ is close to the minimum value $\Phi(q)$.

Formally, let us move some $\Delta$ amount, which we will set later, of load from $s_1$ to $s_0$, and let $z$ denote the new flow.
Note that the cost increase on $s_0$ and the cost decrease on $s_1$ are both at most $m b \Delta$, since $c_e'(y) \le  b $ for any $y$ according to the condition (\ref{eq:b}).
Thus, with $\Delta=\delta/(4 b m)$ (since $\delta=4 m b \Delta$, i.e., the path cost difference can be expressed by multiplying the maximum number of edges in a path, the upper bound on an edge cost changing rate, the amount of load moved, and some proper constant scaling), we can have $c_{s_1}(z) \ge c_{s_1}(x) - \delta/4$ and $c_{s_0}(z) \le c_{s_0}(x) + \delta/4$, so that
\begin{eqnarray} \label{eqn:3}
c_{s_1}(z) - c_{s_0}(z) \ge
c_{s_1}(x) - c_{s_0}(x) - \delta/2 = \delta/2,
\end{eqnarray}
which can be used to bound the decrease of the $\Phi$ value.

Moving the load decreases the $\Phi$ value by the amount
\begin{eqnarray*}
\lefteqn{\Phi(x)-\Phi(z)}\\
&=& \sum_{e \in s_1 \setminus s_0} \int_{\ell_e(x)-\Delta}^{\ell_e(x)} c_e(y) dy - \sum_{e \in s_0 \setminus s_1} \int_{\ell_e(x)}^{\ell_e(x)+\Delta} c_e(y) dy\\
&\ge& \sum_{e \in s_1 \setminus s_0} \Delta c_e(\ell_e(x)-\Delta) - \sum_{e \in s_0 \setminus s_1} \Delta c_e(\ell_e(x)+\Delta)\\
&=& \Delta \sum_{e\in s_1} c_e(\ell_e(z)) - \Delta \sum_{e\in s_0} c_e(\ell_e(z))\\
&=& \Delta \left(c_{s_1}(z) - c_{s_0}(z)\right)\\
&\ge& \Delta\delta/2,
\end{eqnarray*}
where the first inequality holds as the function $c_e$ is nondecreasing and the last inequality holds by (\ref{eqn:3}). Since $z$ is still a feasible flow in $\mc{K}$, its $\Phi$ value cannot be smaller than that of $q$ and we must have $\Phi(x)-\Phi(z) \le \Phi(x)-\Phi(q) \le \varepsilon$, which implies that $\Delta \delta/2 \le \varepsilon$. With $\Delta = \delta/(4 b m)$, we have $\delta \le \sqrt{8 b m\varepsilon}$ (and $\Delta\leq\sqrt{\frac{\varepsilon}{2bm}}$). Since this holds for any $i \in N$, we have the theorem.

\subsection{Social Costs} \label{sec:splittable}

According to Theorem~\ref{thm:Nash}, the flow $\xx{x}{t}{}$ at step $t \ge T_{\varepsilon}$ enjoys the nice property that $\Phi(\xx{x}{t}{}) \le \Phi(q) + \varepsilon$. In this section, we show the implication of this property on the social costs.

\subsubsection{Average Individual Cost}

We show that after the convergence time, the average individual cost achieved by our algorithm is only within a constant factor from the optimum one.

\begin{theorem} \label{thm:average}
Any $x \in \mc{K}$ such that $\Phi(x)\leq\Phi(q)+\varepsilon$ must have $\frac{C_A(x)}{C_A(x^*)}\leq\frac{ b }{a}\left(1+\frac{2m\varepsilon}{a}\right).$
\end{theorem}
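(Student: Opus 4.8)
The plan is to route the comparison through the potential $\Phi$, exploiting the linear sandwich $a y \le c_e(y) \le b y$ from (\ref{eq:b}) to pass between $C_A$ and $\Phi$ while losing only a single factor $b/a$. First I would upper bound $C_A(x)$: since $c_e(\ell_e(x)) \le b\,\ell_e(x)$, we get $C_A(x) \le b \sum_e \ell_e(x)^2$, and since $\Phi(x) = \sum_e \int_0^{\ell_e(x)} c_e(y)\,dy \ge \frac{a}{2}\sum_e \ell_e(x)^2$ (using $c_e(y)\ge ay$), this yields $C_A(x) \le \frac{2b}{a}\Phi(x)$. Now I would invoke the hypothesis $\Phi(x) \le \Phi(q)+\varepsilon$ together with $\Phi(q) \le \Phi(x^*)$ (as $q$ minimizes $\Phi$ over $\mc{K}$) to obtain $C_A(x) \le \frac{2b}{a}\left(\Phi(x^*)+\varepsilon\right)$.

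The crucial second ingredient is a \emph{tight} upper bound on $\Phi(x^*)$ in terms of $C_A(x^*)$. Here I would use that each $c_e$ is convex with $c_e(0)=0$ (which holds since $c_e''\ge 0$): the chord from the origin to $(\ell, c_e(\ell))$ lies above the graph, so $\int_0^{\ell} c_e(y)\,dy \le \frac{1}{2}\ell\,c_e(\ell)$, and summing over edges gives $\Phi(x^*) \le \frac{1}{2}C_A(x^*)$. Substituting, $C_A(x) \le \frac{2b}{a}\left(\frac{1}{2}C_A(x^*)+\varepsilon\right) = \frac{b}{a}C_A(x^*) + \frac{2b}{a}\varepsilon$, which already delivers the leading constant $b/a$. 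It then remains to absorb the additive $\frac{2b}{a}\varepsilon$ into the stated relative error $\frac{2bm\varepsilon}{a^2}C_A(x^*)$, for which I would lower bound $C_A(x^*) \ge a/m$: since every unit of flow traverses a path of length at least one, $\sum_e \ell_e(x^*) \ge 1$, and with $|E|=m$ edges Cauchy--Schwarz gives $\sum_e \ell_e(x^*)^2 \ge 1/m$, whence $C_A(x^*) \ge a\sum_e\ell_e(x^*)^2 \ge a/m$. Dividing through then produces exactly $\frac{C_A(x)}{C_A(x^*)} \le \frac{b}{a}\left(1+\frac{2m\varepsilon}{a}\right)$.

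I expect the main obstacle to be obtaining the leading constant $b/a$ rather than the naive $b^2/a^2$. A symmetric application of the sandwich (lower bounding $C_A(x^*)$ crudely by $\frac{2a}{b}\Phi(x^*)$ and then relating $\Phi(x^*)$ to $\Phi(q)$) loses a factor $b/a$ on both sides and yields only $b^2/a^2$; the point is that the convexity estimate $\Phi(z)\le\frac12 C_A(z)$ is the \emph{exact} bound (an equality for linear costs) and must be used precisely on the $x^*$ side, while $C_A(x)\le\frac{2b}{a}\Phi(x)$ is also tight for linear costs, so no spurious slack accumulates. The secondary subtlety is the appearance of $m$, which enters solely through the lower bound $C_A(x^*)\ge a/m$; I would therefore take care to combine the flow-conservation estimate $\sum_e\ell_e(x^*)\ge 1$ with the edge-count $|E|=m$ correctly via Cauchy--Schwarz.
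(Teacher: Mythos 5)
Your proof is correct and follows the same overall architecture as the paper's --- sandwich $C_A$ between multiples of $\Phi$ with constants whose ratio is exactly $b/a$, pass through $\Phi(x)\le\Phi(q)+\varepsilon\le\Phi(x^*)+\varepsilon$, and absorb the additive $\varepsilon$ via an $\Omega(a/m)$ lower bound --- but the two key sandwich inequalities are obtained differently. The paper rewrites $C_A(z)=\sum_e\int_0^{\ell_e(z)}\bigl(c_e(y)+y\,c_e'(y)\bigr)\,dy$ and bounds $y\,c_e'(y)$ between $\frac{a}{b}c_e(y)$ and $\frac{b}{a}c_e(y)$, yielding $\frac{a+b}{b}\Phi(z)\le C_A(z)\le\frac{a+b}{a}\Phi(z)$; it then applies the upper bound at $x$ and the lower bound at $x^*$, and finishes with $\Phi(q)\ge\frac{a}{2m}$. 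You instead use $C_A(x)\le b\sum_e\ell_e(x)^2\le\frac{2b}{a}\Phi(x)$ together with the chord bound $\Phi(x^*)\le\frac12 C_A(x^*)$ (convexity of $c_e$ with $c_e(0)=0$), and finish with $C_A(x^*)\ge a/m$; the products of constants agree, so both give $b/a$. Your route has the mild advantage of not needing any bound on $c_e'$ for the upper-bound direction (the paper's step $y\,c_e'(y)\le by$ implicitly uses $c_e'\le b$, which requires a short extra argument from $c_e(1)\le 1$ and $c_e''\le B$), and your inequality $C_A\ge 2\Phi$ is in fact sharper than the paper's $C_A\ge\frac{a+b}{b}\Phi$ since $a\le b$. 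One cosmetic caveat: your side remark that $C_A(x)\le\frac{2b}{a}\Phi(x)$ is ``tight for linear costs'' only holds when $a=b$ (the two constituent bounds are tight for $c_e(y)=by$ and $c_e(y)=ay$ respectively, never simultaneously unless $a=b$); this does not affect the proof.
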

\begin{proof}
For any $z \in \mc{K}$, we can rewrite $C_A(z)$ as
\begin{eqnarray*}
C_A(z) &=& \sum_e \ell_e(z) c_e(\ell_e(z))\\
&=& \sum_e\int_0^{\ell_e(z)} \left(y c_e(y)\right)' dy \\
&=& \sum_e\int_0^{\ell_e(z)} \left(c_e(y)+yc'_e(y) \right) dy.
\end{eqnarray*}
Under the condition (\ref{eq:b}), we have $yc'_e(y) \le y b  = \frac{ b }{a} b_0 y \le \frac{ b }{a} c_e(y)$ and thus
\begin{eqnarray}
C_A(z) &\le& \sum_e\int_0^{\ell_e(z)} \left(1+\frac{ b }{b_0} \right) c_e(y) dy \nonumber\\
&=& \frac{a+b}{a} \Phi(z).\label{eq:le}
\end{eqnarray}
On the other hand, we also have
$yc'_e(y) \ge y b_0 = \frac{a}{ b }  b y \ge \frac{a}{ b } c_e(y)$ and thus
\begin{eqnarray}
C_A(z) &\ge& \sum_e\int_0^{\ell_e(z)} \left(1+\frac{a}{ b } \right) c_e(y) dy \nonumber\\
&=& \frac{a+b}{b} \Phi(z).\label{eq:ge}
\end{eqnarray}
Replacing $z$ in (\ref{eq:le}) by $x$ with $\Phi(x) \le \Phi(q)+\varepsilon$, and replacing $z$ in (\ref{eq:ge}) by $x^*$, we obtain
$$\frac{C_A(x)}{C_A(x^*)} \le \frac{ b }{a} \frac{\Phi(x)}{\Phi(x^*)} \le \frac{ b }{a} \frac{\Phi(q)+\varepsilon}{\Phi(q)},$$
as $\Phi(x^*) \ge \Phi(q)$, which gives us
\begin{equation}\label{eq:C_A}
\frac{C_A(x)}{C_A(x^*)} \le \frac{ b }{a} \left(1+\frac{\varepsilon}{\Phi(q)}\right).
\end{equation}
Finally, using the condition (\ref{eq:b}), we have for any $z \in \mc{K}$ that
\begin{eqnarray}
\Phi(z) &\ge& \sum_e\int_0^{\ell_e(z)} a y dy = \frac{a}{2} \sum_e (\ell_e(z))^2 \notag\\
 &\ge& \frac{a}{2m} \left(\sum_e \ell_e(z)\right)^2 \ge \frac{a}{2m}, \label{eq:Phi}
\end{eqnarray}
where the second inequality is by Cauchy-Schwarz and the last inequality holds as the total load of players is $1$. Substituting the bound of (\ref{eq:Phi}) into (\ref{eq:C_A}) with $z=q$, we have the theorem.
\end{proof}

\begin{remark}
We can make $\frac{C_A(x)}{C_A(x^*)}\leq\frac{ b }{a}\left(1+\sigma\right)$ for any $\sigma$ we want, by choosing $\varepsilon = a \sigma/(2m)$. Then by Remark~\ref{rem:eta}, one can compute the corresponding convergence time $T_\varepsilon$, which is proportional to $1/\sigma$.
\end{remark}

\subsubsection{Maximum Individual Cost in Symmetric Games}

In a symmetric game, $\mathcal{S}_i=\mathcal{S}$ for every $i\in N$. Taking advantage of this property, we show that after the convergence time the maximum individual cost achieved by our algorithm is again within a constant factor from the optimum one.
\begin{theorem}
Any $x \in \mc{K}$ such that $\Phi(x)\leq\Phi(q)+\varepsilon$ 
must have $\frac{C_M(x)}{C_M(\hat{x})} \leq \frac{b}{a}\left(1+\frac{2m\varepsilon}{a}+\frac{\delta m}{b}\right),$ where $\delta \le \sqrt{8 b m\varepsilon}$.
\end{theorem}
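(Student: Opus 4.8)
The plan is to reduce the maximum-cost bound to the average-cost bound of Theorem~\ref{thm:average} plus an additive correction controlled by the equilibrium gap $\delta$. By Theorem~\ref{thm:diff}, the hypothesis $\Phi(x)\le\Phi(q)+\varepsilon$ already guarantees that $x$ is a $\delta$-equilibrium with $\delta\le\sqrt{8bm\varepsilon}$, so that slack is free to use. First I would record the elementary rewriting $C_A(z)=\sum_{s}f_s(z)\,c_s(z)$, where $f_s(z)=\sum_i z_{i,s}$ is the total flow on path $s$ and $\sum_s f_s(z)=1$; thus $C_A(z)$ is a flow-weighted average of the path costs, giving $c_{\min}(z)\le C_A(z)\le C_M(z)$ for every $z\in\mc{K}$, where $c_{\min}(z)=\min_{s\in\mc{S}}c_s(z)$. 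In particular $C_M(\hat x)\ge C_A(\hat x)\ge C_A(x^*)$, and this is what will feed Theorem~\ref{thm:average} to produce the main term $\frac{b}{a}\bigl(1+\frac{2m\varepsilon}{a}\bigr)$.

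For the numerator the target is $C_M(x)\le C_A(x)+\delta$. Using symmetry ($\mc{S}_i=\mc{S}$ for all $i$) with the $\delta$-equilibrium property, every path $s$ carrying positive flow satisfies $c_s(x)\le c_{\min}(x)+\delta\le C_A(x)+\delta$, since such an $s$ is played by some player, and the equilibrium condition then forces its cost to exceed that of the cheapest path by at most $\delta$. The hard part will be that $C_M$ is the maximum over \emph{all} of $\mc{S}$, including paths carrying no flow, whose cost the $\delta$-equilibrium condition does \emph{not} bound from above: a heavily congested but unused path can be expensive. If one reads $C_M$ as the effective makespan over played paths, the above bound is immediate; otherwise the maximizing path $s^\star$ must be compared directly at $x$ and at $\hat x$, using $ay\le c_e(y)\le by$ together with the closeness of the two edge-load profiles implied by $\Phi(x)\le\Phi(q)+\varepsilon\le\Phi(\hat x)+\varepsilon$. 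This unused-path case is where I expect the real work to be, and it is where the symmetric structure and the smallness of $\varepsilon$ must be exploited to keep its contribution within the same $\delta$-order correction.

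Finally I would supply a second lower bound on the denominator to absorb the additive $\delta$. Since the total load is $1$ and each allowed path uses at least one of the $m=|E|$ edges, $\sum_e\ell_e(\hat x)=\sum_s f_s(\hat x)|s|\ge 1$, so some edge $e_0$ carries load $\ell_{e_0}(\hat x)\ge 1/m$; as $e_0$ lies on some allowed path $s$, we get $C_M(\hat x)\ge c_s(\hat x)\ge c_{e_0}(\ell_{e_0}(\hat x))\ge a/m$. Combining the pieces,
$$\frac{C_M(x)}{C_M(\hat x)}\le\frac{C_A(x)}{C_A(x^*)}+\frac{\delta}{a/m}\le\frac{b}{a}\left(1+\frac{2m\varepsilon}{a}\right)+\frac{\delta m}{a},$$
which equals $\frac{b}{a}\bigl(1+\frac{2m\varepsilon}{a}+\frac{\delta m}{b}\bigr)$ with $\delta\le\sqrt{8bm\varepsilon}$, as claimed. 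Everything except the unused-path step is bookkeeping that splits the ratio into a main term handled by Theorem~\ref{thm:average} and a $\delta$-term handled by the two distinct lower bounds $C_M(\hat x)\ge C_A(x^*)$ and $C_M(\hat x)\ge a/m$.
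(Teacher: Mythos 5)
Your decomposition is the paper's, almost line for line: bound the numerator by $C_A(x)+\delta$ using Theorem~\ref{thm:diff}, bound the denominator below by $C_A(x^*)$ via $C_M(\hat x)\ge C_A(\hat x)\ge C_A(x^*)$, invoke Theorem~\ref{thm:average} for the main term, and absorb the additive $\delta$ with a lower bound of order $a/m$ on the denominator (the paper bounds $C_A(x^*)\ge a/m$ by Cauchy--Schwarz where you bound $C_M(\hat x)\ge a/m$ by a pigeonhole on edges; the two are interchangeable here and yield the same $\delta m/a$ term, which is then rewritten as $\frac{b}{a}\cdot\frac{\delta m}{b}$).

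The one step you leave open --- the possibility that the $\arg\max$ of $c_s(x)$ over $\mc{S}$ is a path carrying no flow, to which the $\delta$-equilibrium condition gives no upper bound --- is dispatched in the paper in a single sentence: a player $i$ with $x_{i,s_1}>0$ ``must exist because otherwise there would be no load on $s_1$ and $c_{s_1}(x)=0$ could not be the highest path cost,'' after which symmetry makes $s_0$ available to player $i$ and Theorem~\ref{thm:diff} applies with $\delta=c_{s_1}(x)-c_{s_0}(x)$, giving $C_M(x)=c_{s_1}(x)=c_{s_0}(x)+\delta\le C_A(x)+\delta$. So the paper's position is that the unused-path case simply does not occur at the maximizer. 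Your instinct that this is the delicate point is sound: a path with no flow of its own can still have positive, even maximal, cost when its edges are shared with used paths, so the paper's one-line justification implicitly assumes the maximizing path's cost is generated by its own flow. If you want a fully self-contained argument you would indeed need to treat that case separately, as you anticipate; but relative to the paper's own proof, that one sentence is the entirety of what your write-up is missing, and everything else matches.
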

\begin{proof}
Consider any $x \in \mc{K}$ with $\Phi(x)\leq\Phi(q)+\varepsilon$. Let $s_0 =\arg\min_{s \in \mc{S}} c_s(x)$ and $s_1 =\arg\max_{s \in \mc{S}} c_s(x)$. To apply Theorem~\ref{thm:diff}, let us choose a player $i$ with $x_{i,s_1}>0$; such a player must exist because otherwise there would be no load on $s_1$ and $c_{s_1}(x)=0$ could not be the highest path cost. Since $\mathcal{S}_i=\mathcal{S}$ in a symmetric game, $s_0$ is also the path of player $i$ with the lowest path cost. Therefore, we can apply Theorem~\ref{thm:diff} and have $\delta = c_{s_1}(x) - c_{s_0}(x) \le \sqrt{8 b m\varepsilon}$. Note that $C_M(x)=c_{s_1}(x)$ by definition.
Thus, we have
$$\frac{C_M(x)}{C_M(\hat{x})} \le \frac{c_{s_1}(x)}{C_A(\hat{x})} = \frac{c_{s_0}(x)+\delta}{C_A(\hat{x})} \le \frac{C_A(x)+\delta}{C_A(x^*)},$$
where the first inequality is by the definitions of $C_M$ and $C_A$, and the second inequality follows from the fact that $c_{s_0}(x)\le C_A(x)$ and $x^*$ minimizes $C_A$. Furthermore,
$$\frac{C_A(x)+\delta}{C_A(x^*)} = \frac{C_A(x)}{C_A(x^*)}+\frac{\delta}{C_A(x^*)} \le \frac{b}{a}\left(1+\frac{2\varepsilon m}{a}\right) +\frac{\delta}{C_A(x^*)}$$
by Theorem~\ref{thm:average}. Finally, using a similar analysis as in the proof of Theorem~\ref{thm:average}, one can show that
$$C_A(x^*) \ge \sum_e\int_0^{\ell_e(x^*)} (a y + a y) dy = a \sum_e (\ell_e(x^*))^2 \ge \frac{a}{m}.$$
Combining all the bounds together, we have the theorem.
\end{proof}

\begin{remark}
We can make $\frac{C_M(x)}{C_M(\hat{x})} \leq \frac{b}{a}(1+\sigma)$ for any $\sigma$ we want, by choosing $\varepsilon = a \sigma^2 / (32m)$. Then according to Remark~\ref{rem:eta}, one can compute the corresponding convergence time $T_\varepsilon$, which is now proportional to $1/\sigma^2$.
\end{remark}

\section{The Bandit Model} \label{sec:bandit}

We consider the setting in which the players play the congestion game iteratively in the following way. In step $t$, each player $i$ plays some strategy $\xx{x}{t}{i}$ by sampling a path $s_i \in \mc{S}_i$ with probability $\xx{\pi}{t}{i,s_i} = \xx{x}{t}{i,s_i} n$ and sending her entire flow of amount $1/n$ through that path $s_i$. After that, she gets to know the cost of that path, which is
$$c_{s_i}(\xx{X}{t}{}) = \sum_{e \in s_i} c_e(\ell_e(\xx{X}{t}{})),$$
where $\xx{X}{t}{}$ is the choice vector of the players in step $t$.
With this feedback, she updates her next strategy $\xx{x}{t+1}{i}$
in some way and then proceeds to the next step. Let us stress that the only information a player has access to in a step is the cost of the path she has just chosen, and she has no information about that of any other path she has not chosen. This is known as the \emph{bandit} model in the area of online learning \cite[Chapter 4]{NRTV07}\cite{auer:cesa-bianchi,flaxman:kalai}. It is a more challenging model compared to the \emph{bulletin-board} model considered in \cite{kleinberg:piliouras:tardos:load} and in Section~\ref{sec:bulletin}, in which each player $i$ can learn the costs of all her allowed (even unchosen) paths.

We consider the same function for the convergence purpose:
\begin{equation}\label{eq:po2}
\Phi(x)=\sum_{e\in E} \int_0^{\ell_e(x)} c_e(y)dy,
\end{equation}
for $x \in \mc{K}$, where
$$\ell_e(x) = \sum_{j\in N} \sum_{r\in \mc{S}_j:e\in r} \xx{x}{t}{j,r},$$
with $\ell(\cdot)$ being the same function defined in (\ref{eq:l_e}).
According to Proposition~\ref{pro:conv}, $\Phi$ is a convex function.
We still assume that the cost functions satisfy the property in (\ref{eq:b}).


Note that this function $\Phi$ was used in Section~\ref{sec:bulletin} for a nonatomic version of the congestion game, with its input corresponding to splittable flows of players, and its minimizer corresponding to a pure Nash equilibrium with infinitely many agents (i.e., a Wardrop equilibrium). As we consider the atomic version of the congestion game, we now interpret its input as the mixed strategies of players. However, its minimizer corresponds no longer to a pure or mixed Nash equilibrium, but instead to an approximate Nash equilibrium, to be discussed later. We will rely on the following proposition, whose proof we omit here as it follows easily from an analysis in \cite[Proof of Lemma 8]{kleinberg:piliouras:tardos:load} using Taylor expansion.
The proposition intuitively means that the difference between the expected cost and the cost of expected flow (proportional to the probability distribution of mixed strategies) on $s$ is bounded.

\begin{proposition} \label{pro:exp}
Let $x \in \mc{K}$ and $X$ the choice vector sampled according to $x$. Then for any $s$, $0 \le \Ex{c_s(X)} - c_s(x) \le \frac{Bm}{8n}$
where $B$ is the bound on the second derivative of cost functions (Section~\ref{sec:preliminaries}).
This implies that classes of cost functions with smaller $B$ give better bounds on the difference.
In particular, with linear cost functions where $B=0$, then $\Ex{c_s(X)} = c_s(x)$ for any $s$.
\end{proposition}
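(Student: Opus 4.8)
The plan is to reduce the statement to a per-edge estimate and then control, for each edge $e \in s$, the gap between $\Ex{c_e(\ell_e(X))}$ and $c_e(\ell_e(x))$. The starting point is the identity $\Ex{\ell_e(X)} = \ell_e(x)$, which holds because $\ell_e$ is linear in the choice vector and $\Ex{X}=x$ (as noted in the preliminaries). Since $c_s(\cdot)=\sum_{e\in s} c_e(\ell_e(\cdot))$, linearity of expectation reduces everything to bounding $\Ex{c_e(\ell_e(X))} - c_e(\Ex{\ell_e(X)})$ for each of the at most $m$ edges on $s$.

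For the lower bound I would invoke convexity: the assumption $c''_e \ge 0$ makes each $c_e$ convex, so Jensen's inequality gives $\Ex{c_e(\ell_e(X))} \ge c_e(\Ex{\ell_e(X)}) = c_e(\ell_e(x))$, and summing over $e \in s$ yields $\Ex{c_s(X)} - c_s(x) \ge 0$. For the upper bound I would use a second-order Taylor expansion of $c_e$ around $\mu_e = \ell_e(x) = \Ex{\ell_e(X)}$, writing $c_e(\ell_e(X)) = c_e(\mu_e) + c'_e(\mu_e)(\ell_e(X)-\mu_e) + \tfrac12 c''_e(\xi)(\ell_e(X)-\mu_e)^2$ for some intermediate $\xi$. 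Taking expectations, the linear term vanishes since $\Ex{\ell_e(X)-\mu_e}=0$, and the remainder is at most $\tfrac{B}{2}\,\mathrm{Var}(\ell_e(X))$ because $c''_e \le B$.

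The crux, and the step I expect to be the main obstacle, is bounding $\mathrm{Var}(\ell_e(X))$. Here I would exploit that the players sample their paths independently. Writing $\ell_e(X) = \sum_{j\in N} Y_j$ with $Y_j = \sum_{r\in\mc{S}_j:\,e\in r} X_{j,r}$, each $Y_j$ takes the value $1/n$ when player $j$ routes through $e$ and $0$ otherwise, so $Y_j$ is a scaled Bernoulli variable with variance $\tfrac{1}{n^2}p_j(1-p_j)\le \tfrac{1}{4n^2}$, where $p_j$ is the probability player $j$ uses $e$. By independence across players, $\mathrm{Var}(\ell_e(X)) = \sum_j \mathrm{Var}(Y_j) \le n\cdot\tfrac{1}{4n^2} = \tfrac{1}{4n}$.

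Plugging this variance bound into the Taylor remainder gives $\Ex{c_e(\ell_e(X))} - c_e(\ell_e(x)) \le \tfrac{B}{8n}$ per edge, and summing over the at most $m$ edges of $s$ yields the claimed bound $\tfrac{Bm}{8n}$. The special case $B=0$ (linear costs) forces equality throughout, recovering $\Ex{c_s(X)} = c_s(x)$. The only delicate points are justifying the independence of the $Y_j$ across players, which follows from each player sampling her path from her own distribution independently, and the uniform control of the Taylor remainder by the global bound $B$ on second derivatives.
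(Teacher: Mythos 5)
Your proof is correct and follows exactly the route the paper intends: the paper omits the proof of Proposition~\ref{pro:exp}, deferring to a Taylor-expansion analysis from the proof of Lemma~8 of \cite{kleinberg:piliouras:tardos:load}, which is precisely your second-order expansion of $c_e$ around the mean load $\Ex{\ell_e(X)}=\ell_e(x)$ together with Jensen's inequality (from $c''_e\ge 0$) for the lower bound. Your variance computation is in fact the step that pins down the stated constant: the per-player Bernoulli bound $p_j(1-p_j)/n^2\le 1/(4n^2)$ and independence give $\mathrm{Var}(\ell_e(X))\le 1/(4n)$, hence $B/(8n)$ per edge and $Bm/(8n)$ over at most $m$ edges, exactly as claimed.
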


\subsection{Bandit Algorithms} \label{sec:alg}

We would like to follow the approach of Section~\ref{sec:con} which showed that when each player plays a mirror-descent algorithm with her own learning rate,
the joint strategy profile of the players converges quickly.
However, there are two key differences in our setting which prevent us from applying their result directly. The first is that here we consider the atomic version of the congestion game, in which each player must send her entire flow on one single path, unlike the non-atomic version considered in Section~\ref{sec:bulletin} in which a player can split her flow across multiple paths in each step. The second difference is that the feedback model considered in Section~\ref{sec:bulletin} is the easier bulletin model, in which each player, after sending her flow in a step, gets to see all the costs of her allowed paths. That is, in step $t$, player $i$ is able to learn the cost
$$c_s(\xx{x}{t}{}) = \sum_{e \in s} c_e(\ell_e(\xx{x}{t}{})), \;\mbox{ with } \ell_e(x) = \sum_{j\in N} \sum_{r\in \mc{S}_j:e\in r} \xx{x}{t}{j,r},$$
for every $s \in \mc{S}_i$, where $\xx{x}{t}{}$ is the joint strategy profile of players, describing how each player splits her flow.
It was shown in Section~\ref{sec:con} that for any $i \in N$ and $s \in \mc{S}_i$,
\begin{equation}\label{eq:gra}
c_s(\xx{x}{t}{}) = \sum_{e \in s} c_e(\ell_e(\xx{x}{t}{})) = \frac{\partial \Phi(\xx{x}{t}{})}{\partial x_{i,s}},
\end{equation}
which means that each player $i$ gets to know the part of the gradient vector $\nabla \Phi(\xx{x}{t}{})$ related to her, denoted as $\nabla_i \Phi(\xx{x}{t}{})$, consisting of $\frac{\partial \Phi(\xx{x}{t}{})}{\partial x_{i,s}}$ for $s \in \mc{S}_i$. This allows each player to update her strategy in a way that the whole system of players can be seen as collectively running a generalized mirror-descent algorithm on the convex potential function $\Phi$. However, in the more challenging bandit model we adopt in this paper, each player $i$ does not know the whole vector $\nabla_i \Phi(\xx{x}{t}{})$.
Instead, the only information player $i$ has after choosing a path $s_i$ is
$$c_{s_i}(\xx{X}{t}{}) = \sum_{e \in s_i} c_e(\ell_e(\xx{X}{t}{})), \;\mbox{ with } \ell_e(\xx{X}{t}{}) = \sum_{j\in N} \sum_{r\in \mc{S}_j:e\in r} \xx{X}{t}{j,r},$$
where $\xx{X}{t}{}$ is the choice vector of players at step $t$. Note that not only does each player $i$ receive less information (one value instead of $|\mc{S}_i|$ values), the information $c_{s_i}(\xx{X}{t}{})$ she receives actually is different from the more useful value $c_{s_i}(\xx{x}{t}{})$ that corresponds to an entry in $\nabla \Phi(\xx{x}{t}{})$.

In order to follow the framework in Section~\ref{sec:bulletin}, each player $i$ needs to have a way to approximate the vector $\nabla_i \Phi(\xx{x}{t}{})$, her portion of the gradient vector $\nabla \Phi(\xx{x}{t}{})$. Our basic idea is for each player to divide the time steps into episodes, each consisting of a consecutive number of steps, and to do the following in each episode. During episode $\tau$, each player $i$ plays some fixed strategy $\xx{x}{\tau}{i}$ for all the steps (instead of playing different strategies in different steps), collects statistics to obtain an estimate $\xx{\hat{g}}{\tau}{i}$ for $\nabla_i \Phi(\xx{x}{\tau}{})$, and at the end of the episode uses $\xx{\hat{g}}{\tau}{i}$ to update her strategy for the next episode. Two keys are: how to come up with the estimate $\xx{\hat{g}}{\tau}{i}$ and how to update the next strategy, which we describe next.

\subsubsection{Updating the Strategies} \label{sec:update}

With a ``good" estimate $\xx{\hat{g}}{\tau}{i}$ that we will define and show how to achieve in Section~\ref{sec:approx}, we can follow Section~\ref{sec:con} and consider the following update rule for each player $i$'s strategy of the next episode:
\begin{eqnarray}
\xx{x}{\tau+1}{i} &=& \arg\min_{z_i\in\bK_i} \left\{\eta_i \langle \xx{\hat g}{\tau}{i}, z_i\rangle + \BRi{z_i,\xx{x}{\tau}{i}}\right\} \label{eq:update1_}
\end{eqnarray}
Here, $\eta_i >0$ is some learning rate, $R_i:\bK_i \to \mathbb{R}$ is some regularization function, and $\BRi{\cdot, \cdot}$ is the Bregman divergence with respect to $R_i$ defined as
$$\BRi{u_i,v_i} = R_i(u_i) - R_i(v_i) - \langle \nabla R_i(v_i), u_i-v_i\rangle$$
for $u_i,v_i \in \bK_i$. This gives rise to a family of update rules for different choices of $R_i$. For example, it is known that by choosing $R_i(u_i) = \|u_i\|_2^2 /2$ to have $\BRi{x_i,y_i} = \|x_i - y_i\|_2^2 /2$, one recovers the gradient descent algorithm, while by choosing $R_i(u_i) = \sum_s (u_{i,s} \ln u_{i,s} -u_{i,s})$ to have $\BRi{x_i,y_i} = \sum_s x_{i,s} \ln (x_{i,s} / y_{i,s})$, one recovers the multiplicative updates algorithm. 
In the bandit atomic model, (with a properly chosen $R_i$) we may need the number of players $n$ to be large to have low regret\footnote{Later in Section~\ref{sec:conv2}, it will be clear that since our goal is to make the convergence close to the minimizer (considering the approximation error of the gradient), the learning rate bound $\eta$ there is set dependent on $n$ in a way such that $\eta$ is smaller when $n$ is larger.
This in turn affects that $n$ has to be large enough to make the dominant term, which is proportional to $\eta$, in the regret small enough.} for each player in order to provide an incentive for the players to use the algorithm.
We will need these $R_i$'s with $\BRi{\cdot,\cdot}$'s to satisfy the following, which is a bit stricter than Assumption~\ref{as:cond}.

\begin{assumption} \label{as:cond_}
There is some parameter $\Gamma$ such that for any $i \in N$ and any $x_i,y_i \in \bK_i$,
$$\Gamma \cdot \BRi{x_i,y_i} \le \|x_i-y_i\|_2^2 \le 2 \cdot \BRi{x_i,y_i}.$$
\end{assumption}
Note that the parameter $\Gamma$ in the assumption is determined by the choice of $R_i$ as well as the set $\bK_i$, which is a subset (defined in Section~\ref{sec:approx}) of $\mc{K}_i$ and in turn depends on the parameter $\Lambda$ introduced in Section~\ref{sec:approx}. It is clear that for $R_i(u_i) = \|u_i\|_2^2 /2$ and $\BRi{x_i,y_i} = \|x_i - y_i\|_2^2 /2$, the assumption holds with $\Gamma = 2$. For $R_i(x_i) = \sum_s (x_{i,s} \ln x_{i,s} -x_{i,s})$ and $\BRi{x_i,y_i} = \sum_s x_{i,s} \ln (x_{i,s} / y_{i,s})$, the following shows that the assumption holds with $\Gamma=\Lambda/n$, which we prove in Appendix~\ref{app:KL}.
\begin{proposition} \label{pro:KL}
With $\BRi{x_i,y_i} = \sum_s x_{i,s} \ln (x_{i,s} / y_{i,s})$, it holds that for any $x_i, y_i \in \bK_i$,
$$\frac{\Lambda}{n} \cdot \BRi{x_i,y_i} \le \|x_i - y_i\|_2^2 \le 2 \cdot \BRi{x_i,y_i}.$$
\end{proposition}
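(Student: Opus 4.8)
The plan is to pass to the probability-distribution picture and recognize $\BRi{\cdot,\cdot}$ as a rescaling of the Kullback--Leibler divergence. Writing $p = n x_i$ and $q = n y_i$, both $p$ and $q$ are probability vectors on $\mc{S}_i$ (each summing to $1$), and a direct computation gives $\BRi{x_i,y_i} = \frac{1}{n}\sum_s p_s \ln(p_s/q_s) = \frac{1}{n}D(p\,\|\,q)$, while $\|x_i-y_i\|_2^2 = \frac{1}{n^2}\|p-q\|_2^2$. Thus both claimed inequalities reduce to a two-sided comparison between $D(p\|q)$ and $\|p-q\|_2^2$, and the factors of $n$ and $\Lambda$ in the statement are exactly bookkeeping from these rescalings and from the set $\bK_i$.

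For the right-hand (upper) inequality I would reuse the argument already invoked in the proof of Corollary~\ref{cor:conv}: Pinsker's inequality gives $\|p-q\|_1^2 \le 2 D(p\|q)$, and since $\|p-q\|_2 \le \|p-q\|_1$ we get $\|p-q\|_2^2 \le 2D(p\|q)$. Translating back, $\|x_i-y_i\|_2^2 = \frac{1}{n^2}\|p-q\|_2^2 \le \frac{2}{n}\BRi{x_i,y_i} \le 2\BRi{x_i,y_i}$ for $n\ge 1$. This half needs nothing about $\bK_i$ and is essentially immediate.

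The left-hand (lower) inequality is the real content and the step I expect to be the main obstacle, because it is a \emph{reverse} Pinsker bound, and no such bound can hold uniformly: $D(p\|q)$ blows up as some reference coordinate $q_s\to 0$ while $\|p-q\|_2^2$ stays bounded. The point is therefore to exploit the defining property of $\bK_i$, namely that it truncates strategies so that every reference probability satisfies $q_s = n y_{i,s} \ge \Lambda$. Setting $\delta_s = p_s - q_s$ (so that $\sum_s \delta_s = 0$), the elementary inequality $\ln(1+u)\le u$ yields $D(p\|q) = \sum_s (q_s+\delta_s)\ln(1+\delta_s/q_s) \le \sum_s (q_s+\delta_s)\frac{\delta_s}{q_s} = \sum_s \delta_s + \sum_s \frac{\delta_s^2}{q_s} = \sum_s \frac{\delta_s^2}{q_s}$; that is, $D(p\|q)$ is at most the $\chi^2$-divergence. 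Using $q_s \ge \Lambda$ bounds this by $\frac{1}{\Lambda}\|p-q\|_2^2$, and unwinding the rescaling gives $\BRi{x_i,y_i} = \frac{1}{n}D(p\|q) \le \frac{n}{\Lambda}\|x_i-y_i\|_2^2$, which rearranges to the claimed $\frac{\Lambda}{n}\BRi{x_i,y_i}\le\|x_i-y_i\|_2^2$. The only calculation to handle with care is the cancellation $\sum_s\delta_s=0$ that kills the cross term, and making sure that the $\Lambda$ appearing here is precisely the lower bound on path probabilities enforced by the definition of $\bK_i$ in Section~\ref{sec:approx}.
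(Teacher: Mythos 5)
Your proposal is correct and is essentially the paper's own argument: the upper bound via Pinsker's inequality together with $\|\cdot\|_2\le\|\cdot\|_1$, and the lower bound via $\ln(1+u)\le u$, the cancellation of the cross term from equal total mass, and the coordinate lower bound $y_{i,s}\ge\Lambda/n$ enforced by $\bK_i$. The only difference is cosmetic --- you rescale to probability vectors $p=nx_i$, $q=ny_i$ and phrase the bound as $D(p\|q)\le\chi^2(p\|q)$, whereas the paper works directly with $x_i,y_i$ --- so the two proofs coincide step for step.
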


Let us remark that we use the $L_2$ norm $\|\cdot\|_2$ in the assumption, instead of a general norm, for the purpose of simplifying our presentation in the next section; one can check that our analysis there also works for a general norm, but the bounds derived would be more complicated.

\subsubsection{Approximating the Gradient} \label{sec:approx}
Now, we define and show how to guarantee a good estimate $\xx{\hat{g}}{\tau}{i}$.
Consider any episode $\tau$ and player $i$. For each path $s \in \mc{S}_i$, she computes the average cost of that path during the episode as the corresponding entry in $\xx{\hat{g}}{\tau}{i}$. That is, if we let $\xx{\hat{g}}{\tau}{i,s}$ denote the $s$'th entry in $\xx{\hat{g}}{\tau}{i}$ and let $\xx{\mc{T}}{\tau}{i,s}$ denote the set of steps in episode $\tau$ that player $i$ chose path $s$, then
\begin{equation}\label{eq:g}
\xx{\hat{g}}{\tau}{i,s} = \frac{1}{|\xx{\mc{T}}{\tau}{i,s}|} \sum_{t \in \xx{\mc{T}}{\tau}{i,s}} c_{s}(\xx{X}{t}{}),
\end{equation}
which we would like to be a good approximation of $c_s(\xx{x}{\tau}{})$. This requires $|\xx{\mc{T}}{\tau}{i,s}|$ to be large enough which in turns needs $\xx{\pi}{\tau}{i,s}$, or equivalently $\xx{x}{\tau}{i,s}$, to be large enough. To guarantee this, we restrict each player $i$ to play strategies in a smaller feasible set $\bK_i \subseteq \mc{K}_i$, such that for any $x_i \in \bK_i$, $x_{i,s} \ge \Lambda (1/n)$, for some parameter $\Lambda$, which implies that each path $s$ is chosen with probability $\pi_{i,s} \ge \Lambda$. One way of choosing such $\bK_i$ is for each $y_i \in \mc{K}_i$ to include a corresponding $x_i$ with $x_{i,s} = (1-\Lambda) y_{i,s} + \Lambda (1/n)$, for each $s$. In this way, one can have the guarantee that every $y_i \in \mc{K}_i$ has some $x_i \in \bK_i$ such that $\|x_i - y_i \|_1 \le d \Lambda (1/n)$. We sum up the procedure for approximating the gradient in the following.
\begin{algorithm}
\caption{Procedure for approximating the gradient}
\begin{algorithmic}[1]
\STATE for each $y_i \in \mc{K}_i$ include a corresponding $x_i \in \bK_i$ with $x_{i,s} = (1-\Lambda) y_{i,s} + \Lambda (1/n)$, for each $s$.
\FOR{any episode $\tau$ and player $i$}
\FOR{each path $s \in \mc{S}_i$}
\STATE Compute $\xx{\hat{g}}{\tau}{i,s} = \frac{1}{|\xx{\mc{T}}{\tau}{i,s}|} \sum_{t \in \xx{\mc{T}}{\tau}{i,s}} c_{s}(\xx{X}{t}{})$.
\ENDFOR
\ENDFOR
\end{algorithmic}
\end{algorithm}

Then by setting the number of steps in each episode $\tau$ to
$$\frac{\nu n^2 \log(nd \tau)}{\Lambda m^2},$$
for a large enough constant $\nu$, we have the following.

\begin{lemma} \label{lem:approx}
With high probability of at least $1-2\kappa$ for a small enough constant $\kappa$ chosen arbitrarily,
each player $i$ in each episode $\tau$ can have $$\|\xx{\hat{g}}{\tau}{i} - \nabla_i \Phi(\xx{x}{\tau}{})\|_\infty \le \frac{4 b m}{n}.$$
\end{lemma}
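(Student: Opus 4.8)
The plan is to fix a player $i$, a path $s \in \mc{S}_i$, and an episode $\tau$, bound the single entry $|\xx{\hat{g}}{\tau}{i,s} - c_s(\xx{x}{\tau}{})|$ by $\tfrac{4bm}{n}$ with high probability, and then union-bound over the $\le nd$ pairs $(i,s)$ and over all episodes $\tau$. I would split the error into a \emph{bias} term and a \emph{concentration} term. Let $\mu_{i,s} = \Ex{c_s(\xx{X}{t}{}) \mid \text{player } i \text{ plays } s}$ be the expected observed cost on a step of $\xx{\mc{T}}{\tau}{i,s}$; because the other players sample independently of player $i$, this conditional mean is the same for every such $t$, and by (\ref{eq:g}) the estimate $\xx{\hat{g}}{\tau}{i,s}$ is the empirical average of copies of $c_s(\xx{X}{t}{})$ that are independent (conditionally on the episode's step set) with common mean $\mu_{i,s}$. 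Then
$$\left|\xx{\hat{g}}{\tau}{i,s} - c_s(\xx{x}{\tau}{})\right| \le \underbrace{\left|\xx{\hat{g}}{\tau}{i,s} - \mu_{i,s}\right|}_{\text{concentration}} + \underbrace{\left|\mu_{i,s} - c_s(\xx{x}{\tau}{})\right|}_{\text{bias}}.$$

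For the bias, let $\tilde x$ be the flow agreeing with $\xx{x}{\tau}{}$ on every player $j \ne i$ but placing all of player $i$'s mass $1/n$ on $s$; then $\mu_{i,s} = \Ex{c_s(\tilde X)}$ for $\tilde X$ sampled from $\tilde x$, so Proposition~\ref{pro:exp} gives $0 \le \mu_{i,s} - c_s(\tilde x) \le \tfrac{Bm}{8n}$. Since $\tilde x$ and $\xx{x}{\tau}{}$ differ only in player $i$'s contribution to each of the $\le m$ edges of $s$, by at most $1/n$ per edge, the bound $c_e' \le b$ from (\ref{eq:b}) yields $|c_s(\tilde x) - c_s(\xx{x}{\tau}{})| \le \tfrac{bm}{n}$. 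Thus the bias is at most $\tfrac{bm}{n} + \tfrac{Bm}{8n} \le \tfrac{9bm}{8n}$ (using $B \le b$). This bias, inherent because a player measuring $s$ contributes her \emph{full} load $1/n$ rather than her fractional load, is the conceptually important part and is exactly the unavoidable ``atomic vs.\ nonatomic'' gap flagged earlier in the paper.

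For the concentration term I would first show $|\xx{\mc{T}}{\tau}{i,s}| \ge \tfrac12 \Lambda L$ with high probability, where $L$ is the episode length: $|\xx{\mc{T}}{\tau}{i,s}|$ is a sum of independent indicators with mean $\xx{\pi}{\tau}{i,s} L \ge \Lambda L$ (as $\xx{x}{\tau}{} \in \bK$ forces $\xx{\pi}{\tau}{i,s} \ge \Lambda$), so a Chernoff bound applies. Conditioning on the realized step set, the observed costs are independent across steps, and the key structural fact is that changing any single player's chosen path changes $c_s(\xx{X}{t}{})$ by at most $\tfrac{bm}{n}$ (each of the $\le m$ edge loads shifts by $\le 1/n$ and $c_e' \le b$). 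Hence, viewing $\xx{\hat{g}}{\tau}{i,s}$ as a function of the $N(n-1)$ independent player choices during the episode (with $N = |\xx{\mc{T}}{\tau}{i,s}|$), each coordinate has influence $\le \tfrac{1}{N}\cdot\tfrac{bm}{n}$, so McDiarmid's bounded-difference inequality gives $\pr{|\xx{\hat{g}}{\tau}{i,s} - \mu_{i,s}| > \tfrac{bm}{n}} \le 2\exp(-\Omega(N/n))$. With $N \ge \tfrac12 \Lambda L = \Omega(\nu n^2 \log(nd\tau)/m^2)$ this is $2\exp(-\Omega(\nu n \log(nd\tau)/m^2))$. Adding the bias $\tfrac{9bm}{8n}$ and the concentration $\tfrac{bm}{n}$ gives $\tfrac{17bm}{8n} \le \tfrac{4bm}{n}$.

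Finally I would union-bound: the $\log(nd\tau)$ factor baked into $L$ makes each of the two failure probabilities at most $(nd\tau)^{-c}$ for a large constant $c$ (achieved by taking $\nu$ large and $n$ large relative to $m$, as the paper's footnote anticipates), so summing over the $nd$ pairs and all $\tau$ keeps the total failure probability at most $2\kappa$, one $\kappa$ for the Chernoff family and one for the McDiarmid family. The main obstacle is the concentration step: $c_s(\xx{X}{t}{})$ ranges over an interval of width $\Theta(m)$, far larger than the target accuracy $\Theta(bm/n)$, so a naive Hoeffding bound with range $m$ would demand an episode length inflated by a factor $m^2$ and would not match the stated $L$. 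The fix, and the crux of the argument, is to exploit that each player's influence on $c_s$ is only $O(bm/n)$, so that via bounded differences (equivalently Bernstein with variance $O(b^2m^2/n)$) the effective deviation scale is $O(bm/\sqrt{n})$; this is precisely what lets the stated episode length succeed once $n$ is large enough relative to $m$.
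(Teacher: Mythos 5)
Your proposal is correct and follows the same skeleton as the paper's proof: fix $(i,s,\tau)$, guarantee $|\xx{\mc{T}}{\tau}{i,s}|$ is large via a Chernoff/Hoeffding bound on player $i$'s own sampling, decompose the per-entry error into a bias term and a concentration term, and union-bound over all $(i,s,\tau)$ with total failure probability $2\kappa$. The two places you genuinely diverge are both worth noting. For the bias, the paper introduces the re-randomized choice vector $\xx{\hat{X}}{t}{}$ (player $i$'s choice left random) and bounds $|c_s(\xx{X}{t}{}) - c_s(\xx{\hat{X}}{t}{})| \le 2bm/n$ pathwise before applying Proposition~\ref{pro:exp}, arriving at $3bm/n$; you instead apply Proposition~\ref{pro:exp} to the deterministic flow $\tilde x$ that puts all of player $i$'s mass on $s$ and compare $c_s(\tilde x)$ to $c_s(\xx{x}{\tau}{})$ directly, arriving at $9bm/(8n)$ --- a mean-level version of the same coupling, equally valid and slightly tighter. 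The substantive difference is the concentration step: the paper simply invokes ``the Hoeffding bound,'' but since $c_s(\xx{X}{t}{})$ ranges over an interval of width $\Theta(m)$, a plain Hoeffding bound at accuracy $bm/n$ gives failure probability $\exp(-\Theta(N/n^2))$, which with the stated $\xx{w}{\tau}{} = \nu n^2\log(nd\tau)/(2m^2)$ only reaches $\kappa/(nd\tau^2)$ if the ``constant'' $\nu$ is allowed to grow like $m^2$. Your McDiarmid/bounded-differences (equivalently Bernstein) argument, exploiting that each of the $N(n-1)$ independent player choices perturbs $c_s$ by only $O(bm/n)$, improves this to $\exp(-\Theta(N/n))$ and recovers the claimed bound with a genuinely constant $\nu$ whenever $n = \Omega(m^2)$ --- consistent with the paper's own caveat that $n$ must be large, and strictly stronger than what the paper's stated argument delivers. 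So your proof is not only correct but quietly repairs a looseness in the original.
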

\begin{proof}
Fix any player $i$, path $s \in \mc{S}_i$, and episode $\tau$. Our goal is to show that
\begin{equation}\label{eq:diff}
\left|\xx{\hat{g}}{\tau}{i,s} - \nabla_{i,s} \Phi(\xx{x}{\tau}{})\right| = \left|\frac{1}{|\xx{\mc{T}}{\tau}{i,s}|} \sum_{t \in \xx{\mc{T}}{\tau}{i,s}} \left(c_{s}(\xx{X}{t}{}) - c_s(\xx{x}{\tau}{})\right) \right|
\end{equation}
is small with high probability. We would like to show that $c_{s}(\xx{X}{t}{})$ has expected value $\ex[c_{s}(\xx{X}{t}{})]$ close to $c_s(\xx{x}{\tau}{})$ so that we can apply a Hoeffding bound on (\ref{eq:diff}), but there is a subtlety here that we need to be careful about. Note that for a given $t$, although the expected value of $\xx{X}{t}{}$ equals $\xx{x}{\tau}{}$ when there is no additional conditioning, this is no longer true under the condition that $t \in \xx{\mc{T}}{\tau}{i,s}$, as player $i$'s choice does not remain random but is fixed to $s$ according to the definition of $\xx{\mc{T}}{\tau}{i,s}$.

Let us consider a related random variable $\xx{\hat{X}}{t}{}$, which is similar to $\xx{X}{t}{}$ but with player $i$'s choice left random, sampled according to $\xx{x}{\tau}{i}$. Then it has the nice property that $\ex[\xx{\hat{X}}{t}{}] = \xx{x}{\tau}{}$, so that
$\ex[\ell_e(\xx{\hat{X}}{t}{})] = \ell_e(\xx{x}{\tau}{})$
for any $e$ and hence
\begin{equation}\label{eq:c_s}
\left|\Ex{c_s(\xx{\hat{X}}{t}{})} - c_s(\xx{x}{\tau}{}) \right| \le \frac{Bm}{8n},
\end{equation}
for any $s$ by Proposition~\ref{pro:exp}. In addition, it has costs close to those associated with $\xx{X}{t}{}$, as
by Taylor's expansion on $c_e(\ell_e(\xx{X}{t}{}))$ at $\ell_e(\xx{\hat{X}}{t}{})$,
$| c_e(\ell_e(\xx{X}{t}{})) - c_e(\ell_e(\xx{\hat{X}}{t}{})) |$ is at most
$$b \left| \ell_e(\xx{X}{t}{}) - \ell_e(\xx{\hat{X}}{t}{}) \right| + \frac{b}{2} \left| \ell_e(\xx{X}{t}{}) - \ell_e(\xx{\hat{X}}{t}{}) \right|^2 \le \frac{b}{n} + \frac{b}{2 n^2} \le \frac{2b}{n}$$
for any $e$, which implies that
\begin{equation}\label{eq:S}
\left| c_s(\xx{X}{t}{}) - c_s(\xx{\hat{X}}{t}{}) \right| \le \sum_{e \in s} \frac{2b}{n} \le \frac{2bm}{n}
\end{equation}
for any $s$. From the bounds in (\ref{eq:S}) and (\ref{eq:c_s}), we have
\begin{eqnarray}
\lefteqn{\left|\Ex{c_{s}(\xx{X}{t}{})} - c_s(\xx{x}{\tau}{}) \right|}\nonumber\\
&\le& \left|\Ex{c_{s}(\xx{\hat{X}}{t}{})} - c_s(\xx{x}{\tau}{}) \right| + \left|\Ex{c_{s}(\xx{X}{t}{})} - \Ex{c_{s}(\xx{\hat{X}}{t}{})} \right|\nonumber\\
&\le& \frac{Bm}{8n} + \frac{2 bm}{n}\nonumber\\
&\le& \frac{3 b m}{n}. \label{eq:exp}
\end{eqnarray}

With this, we can then apply the Hoeffding bound to get
$$\pr{\left|\frac{1}{|\xx{\mc{T}}{\tau}{i,s}|} \sum_{t \in \xx{\mc{T}}{\tau}{i,s}} \left(c_{s}(\xx{X}{t}{}) - c_s(\xx{x}{\tau}{})\right) \right| > \frac{4 b m}{n}} \le \frac{\kappa}{n d \tau^2}$$
for some small enough constant $\kappa$, whenever $|\xx{\mc{T}}{\tau}{i,s}| \ge \xx{w}{\tau}{}$ where $\xx{w}{\tau}{} = \frac{\nu n^2 \log(nd \tau)}{2m^2}$ for a large enough constant $\nu$. Let us define the following good event
\begin{itemize}
  \item $G$: $|\xx{\mc{T}}{\tau}{i,s}| \ge \xx{w}{\tau}{}$ for every $\tau, i, s$.
\end{itemize}
Recall that each episode $\tau$ consists of $\frac{2\xx{w}{\tau}{}}{\Lambda}$ steps. Thus, for any $\tau, i, s$,
$$\Ex{|\xx{\mc{T}}{\tau}{i,s}|} \ge \Lambda \frac{2\xx{w}{\tau}{}}{\Lambda} = 2\xx{w}{\tau}{}$$
and a union bound together with a Hoeffding bound give us
$$\pr{\neg G} \le \sum_{\tau,i,s} \pr{|\xx{\mc{T}}{\tau}{i,s}| < \xx{w}{\tau}{}} \le \sum_{\tau,i,s} \frac{\kappa}{2 n d \tau^2} \le \kappa.$$
As a result, we have
\begin{eqnarray*}
\lefteqn{\pr{\exists \tau,i,s: \left|\xx{\hat{g}}{\tau}{i,s} - \nabla_{i,s} \Phi(\xx{x}{\tau}{})\right| > \frac{4 b m}{n}}}\\
&\le& \sum_{\tau,i,s} \pr{\left|\xx{\hat{g}}{\tau}{i,s} - \nabla_{i,s} \Phi(\xx{x}{\tau}{})\right| > \frac{4 b m}{n} \mid G} + \pr{\neg G}\\
&\le& \sum_{\tau,i,s} \frac{\kappa}{2 n d \tau^2} + \kappa\\
&\le& 2 \kappa,
\end{eqnarray*}
for a small enough constant $\kappa$, which proves the lemma.
\end{proof}

\subsection{Convergence Results} \label{sec:conv_}

In this section, we analyze the behavior of the system of players adopting the algorithm described in Section~\ref{sec:alg}.
Our main result is the following, which shows that the system indeed quickly converges, in the sense that the value of the potential function $\Phi(\xx{x}{\tau}{})$ quickly comes near the minimum value $\Phi(q)$, where $q=\arg\min_{z\in\mathcal{K}} \Phi(z)$, and then stays close afterwards.

\begin{theorem} \label{thm:Nash2}
Consider any atomic congestion game of $n$ players, with a potential function $\Phi$ which is $(\alpha,\beta,\lambda)$-smooth, and let $q = \arg\min_{z\in\mathcal{K}} \Phi(z)$. Suppose each player $i$ updates her strategy according to the rule in (\ref{eq:update1}), with $\eta_i \le 1/\lambda$, using $\xx{\hat{g}}{\tau}{i}$ described in (\ref{eq:g}) with the guarantee that
\begin{equation}\label{eq:err}
\left\|\xx{\hat{g}}{\tau}{i} - \nabla_i \Phi(\xx{x}{\tau}{}) \right\|_\infty \le \epsilon.
\end{equation}
Consider any $\eta$ such that $\eta \le \eta_i$ for any $i$ and $\theta = \sqrt{\eta \Gamma \epsilon n} \le 1$, and let $\delta = 6 \epsilon + \theta \beta d \Lambda$, where $\Gamma$ is the parameter in Assumption~\ref{as:cond_} and $\Lambda$ is the parameter introduced in Section~\ref{sec:approx}. Then for $\tau_0 = \alpha/\delta$ and $\triangle = 3\delta /\theta$, it holds that for any $\tau \ge \tau_0$,
$$\Phi(\xx{x}{\tau}{}) \le \Phi(q) + \triangle$$
or equivalently,
$$\Phi(\xx{x}{\tau}{}) \le \Phi(q) + 3(6\sqrt{\frac{\epsilon}{\eta \Gamma n}}+\beta d \Lambda).$$
\end{theorem}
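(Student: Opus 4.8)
The plan is to reduce the entire statement to a single per-episode recursion of the form
$$\Phi(\xx{x}{\tau+1}{}) \le \Phi(\xx{x}{\tau}{}) - \theta\left(\Phi(\xx{x}{\tau}{}) - \Phi(q)\right) + \delta,$$
that is, writing $a^\tau = \Phi(\xx{x}{\tau}{}) - \Phi(q) \ge 0$, an affine contraction $a^{\tau+1} \le (1-\theta)a^\tau + \delta$. Everything claimed then follows mechanically. If $a^\tau > \triangle = 3\delta/\theta$ then $\theta a^\tau > 3\delta$ forces $a^{\tau+1} < a^\tau - 2\delta$, a guaranteed additive decrease; if instead $a^\tau \le 3\delta/\theta$ then $a^{\tau+1} \le (1-\theta)(3\delta/\theta) + \delta = 3\delta/\theta - 2\delta \le \triangle$, so the sublevel set $\{a \le \triangle\}$ is invariant. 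Since $a^0 \le \alpha$ by $(\alpha,\beta,\lambda)$-smoothness and each ``far'' episode drops $a$ by at least $\delta$, the threshold $\triangle$ is reached within $\alpha/\delta = \tau_0$ episodes and is never left afterwards. The equivalent closed form is just $\triangle = 3\delta/\theta = 3(6\sqrt{\epsilon/(\eta \Gamma n)} + \beta d \Lambda)$.

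The heart of the proof is establishing this recursion with exactly this $\theta$ and $\delta$. First I would turn the hypotheses into a one-step smoothness bound: from $\nabla^2\Phi \preceq \lambda I$ together with Assumption~\ref{as:cond_} (the inequality $\|x_i-y_i\|_2^2 \le 2\BRi{x_i,y_i}$) the function is $\lambda$-smooth with respect to $(R_1,\dots,R_n)$, so with $\eta_i \le 1/\lambda$,
$$\Phi(\xx{x}{\tau+1}{}) \le \Phi(\xx{x}{\tau}{}) + \langle \nabla\Phi(\xx{x}{\tau}{}), \xx{x}{\tau+1}{}-\xx{x}{\tau}{}\rangle + \sum_i \tfrac{1}{\eta_i}\BRi{\xx{x}{\tau+1}{i},\xx{x}{\tau}{i}}.$$
I then split $\nabla\Phi(\xx{x}{\tau}{}) = \xx{\hat g}{\tau}{} + (\nabla\Phi(\xx{x}{\tau}{}) - \xx{\hat g}{\tau}{})$ and control the second piece everywhere by H\"older's inequality, using $\|\nabla_i\Phi(\xx{x}{\tau}{}) - \xx{\hat g}{\tau}{i}\|_\infty \le \epsilon$ against the $\ell_1$-diameter of at most $2/n$ per player, so each such error contributes $O(\epsilon)$.

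The decisive trick is the comparison point fed into the minimizer definition of $\xx{x}{\tau+1}{i}$. Because $q$ minimizes over $\mc{K}$ but play is confined to $\bK \subseteq \mc{K}$, I would first replace $q$ by its nearby relaxation $\bq \in \bK$ from the construction of Section~\ref{sec:approx}, which satisfies $\Phi(\bq) \le \Phi(q) + \beta d \Lambda$ via $\|\nabla\Phi\|_\infty \le \beta$ and $\|\bq - q\|_1 \le d\Lambda$; this is the source of the $\beta d\Lambda$ term. Then, rather than comparing $\xx{x}{\tau+1}{i}$ against $\xx{x}{\tau}{i}$ (which yields only monotonicity), I compare it against the interpolate $w_i = (1-\theta)\xx{x}{\tau}{i} + \theta\bq_i \in \bK_i$, which is admissible since $\bK_i$ is convex and $\theta \le 1$. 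Optimality of $\xx{x}{\tau+1}{i}$ gives $\langle \xx{\hat g}{\tau}{i}, \xx{x}{\tau+1}{i}-\xx{x}{\tau}{i}\rangle + \tfrac{1}{\eta_i}\BRi{\xx{x}{\tau+1}{i},\xx{x}{\tau}{i}} \le \theta\langle \xx{\hat g}{\tau}{i}, \bq_i-\xx{x}{\tau}{i}\rangle + \tfrac{1}{\eta_i}\BRi{w_i,\xx{x}{\tau}{i}}$; the linear term, via convexity, produces the crucial contraction $-\theta(\Phi(\xx{x}{\tau}{})-\Phi(\bq))$, while the Bregman term costs $\tfrac{1}{\eta_i}\BRi{w_i,\xx{x}{\tau}{i}} \le 4\theta^2/(\eta_i \Gamma n^2)$ (using $\BRi{x_i,y_i} \le \|x_i-y_i\|_2^2/\Gamma$ from Assumption~\ref{as:cond_} and the $\ell_1$-diameter bound). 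Summed over players this quadratic movement cost is at most $4\theta^2/(\eta\Gamma n)$, and the calibration $\theta = \sqrt{\eta\Gamma\epsilon n}$ is exactly what turns it into $O(\epsilon)$, balancing step-size progress against estimation error. Collecting all the $O(\epsilon)$ pieces with the $\theta\beta d\Lambda$ term into $\delta = 6\epsilon + \theta\beta d\Lambda$ closes the recursion.

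I expect the delicate bookkeeping in this last step to be the genuine difficulty: one must keep the approximate-gradient errors, the $q$-to-$\bq$ relaxation, and the quadratic movement term all simultaneously at scale $O(\delta)$, and the whole argument hinges on the single balancing choice $\theta = \sqrt{\eta\Gamma\epsilon n}$ being legitimate (in particular $\theta \le 1$, which is assumed). The convexity of $\bK_i$ (so that $w_i$ is feasible) and the uniform $\ell_1$-diameter bound $2/n$ on the feasible sets are the two structural facts that make the constants fall into place; the counting argument and the invariance of the sublevel set in the first paragraph are then entirely routine.
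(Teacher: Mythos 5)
Your proposal is correct and follows essentially the same route as the paper: the same per-episode recursion $\Phi(\xx{x}{\tau+1}{}) \le \Phi(\xx{x}{\tau}{}) - \theta\left(\Phi(\xx{x}{\tau}{})-\Phi(q)\right) + \delta$ (the paper's Lemma~\ref{lem:less2}), proved with the same decomposition of the gradient into estimate plus error, the same interpolated comparison point toward the restricted minimizer $\bq$, the same $q$-to-$\bq$ relaxation costing $\beta d \Lambda$, and the same calibration $\theta=\sqrt{\eta\Gamma\epsilon n}$; your sublevel-set invariance argument for the counting step is an equivalent rephrasing of the paper's first-hitting-time-plus-bounded-overshoot argument. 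The only nit is your per-player quadratic bound $4\theta^2/(\eta_i\Gamma n^2)$, which comes from $\|\bq_i-\xx{x}{\tau}{i}\|_1^2\le 4/n^2$ and would give $8\epsilon$ rather than $6\epsilon$ in $\delta$; using $\|v\|_2^2\le\|v\|_\infty\|v\|_1\le (1/n)(2/n)$ (coordinates lie in $[0,1/n]$), as the paper does, recovers the stated constant.
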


As in Section~\ref{sec:con}, we consider two examples of setting the choices in the following, which will be proved in Section~\ref{sec:conv2}.

\begin{corollary} \label{cor:conv2}
Consider any atomic congestion game of $n$ players with parameters given in Section~\ref{sec:preliminaries}. If each player $i$ plays the gradient descent algorithm over $\bK_i$ with $\eta_i \le 1/\lambda$, then with high probability,
$$\triangle = \mc{O}\left(\frac{k^{1/2} m}{n}\right) \mbox{ and } T_0 = \mc{O}\left(\frac{n^4 d \log(nd)}{m^2 k^{1/2}}\right).$$
Furthermore, if each player $i$ plays the multiplicative updates algorithm over $\bK_i$ with $\eta_i \le 1/\lambda$, then with high probability,
$$\triangle = \mc{O}\left(\left(\frac{kd}{n}\right)^{1/3}m\right) \mbox{ and } T_0 = \mc{O}\left(\frac{n^{10/3} d^{2/3} \log(nd)}{m^2 k^{1/3}}\right).$$
\end{corollary}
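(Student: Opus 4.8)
The plan is to specialize Theorem~\ref{thm:Nash2} by substituting the concrete parameters established earlier and then optimizing the single free parameter $\Lambda$. By Proposition~\ref{pro:para} the potential $\Phi$ is $(\alpha,\beta,\lambda)$-smooth with $\alpha=bm/2$, $\beta=bm$, and $\lambda=bmk$, and by Lemma~\ref{lem:approx} the gradient estimate satisfies the error bound (\ref{eq:err}) with $\epsilon=4bm/n$ on the good event $G$, which holds with probability at least $1-2\kappa$; this is the source of the ``with high probability'' in the statement. Since each player may use any $\eta_i\le 1/\lambda$, I would take $\eta=1/\lambda=1/(bmk)$, the largest admissible value: rewriting the theorem's guarantee as $\triangle=3\delta/\theta=18\epsilon/\theta+3\beta d\Lambda$ with $\theta=\sqrt{\eta\Gamma\epsilon n}$ shows that a larger $\theta$ both shrinks $\triangle$ and enlarges $\delta$, hence shrinks $\tau_0=\alpha/\delta$, so the largest $\eta$ is optimal for both quantities. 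Throughout, one checks the hypothesis $\theta\le 1$.

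For the gradient descent instance, Assumption~\ref{as:cond_} holds with $\Gamma=2$, so $\theta=\sqrt{\eta\Gamma\epsilon n}=\sqrt{8/k}$, which is independent of $\Lambda$. Then $\triangle=18\epsilon/\theta+3\beta d\Lambda=\mc{O}(k^{1/2}m/n)+\mc{O}(md\Lambda)$, and I would pick $\Lambda=\Theta(k^{1/2}/(dn))$, the largest value keeping the second term within the order of the first, yielding $\triangle=\mc{O}(k^{1/2}m/n)$. With this $\Lambda$ one finds $\delta=\Theta(bm/n)$, so $\tau_0=\alpha/\delta=\Theta(n)$ episodes suffice. To pass from episodes to steps I would sum the episode lengths $\nu n^2\log(nd\tau)/(\Lambda m^2)$ over $\tau\le\tau_0$, bounding $T_0\le \tau_0\cdot \nu n^2\log(nd\tau_0)/(\Lambda m^2)$; substituting $\tau_0=\Theta(n)$, $\Lambda=\Theta(k^{1/2}/(dn))$, and $\log(nd\tau_0)=\Theta(\log(nd))$ gives $T_0=\mc{O}(n^4 d\log(nd)/(m^2 k^{1/2}))$.

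For the multiplicative updates instance, Proposition~\ref{pro:KL} gives $\Gamma=\Lambda/n$, so now $\theta=\sqrt{4\Lambda/(kn)}=2\sqrt{\Lambda/(kn)}$ itself depends on $\Lambda$ and $\triangle=18\epsilon/\theta+3\beta d\Lambda=\mc{O}(m\sqrt{k/(\Lambda n)})+\mc{O}(md\Lambda)$. Balancing the two terms, i.e. setting $\sqrt{k/(\Lambda n)}\asymp d\Lambda$, gives $\Lambda=\Theta((k/(nd^2))^{1/3})$ and hence $\triangle=\mc{O}((kd/n)^{1/3}m)$. Again one computes $\delta=\Theta(bm/n)$ and $\tau_0=\Theta(n)$, and substituting this $\Lambda$ (so that $1/\Lambda=\Theta(n^{1/3}d^{2/3}/k^{1/3})$) into $T_0\le \tau_0\cdot \nu n^2\log(nd\tau_0)/(\Lambda m^2)$ yields $T_0=\mc{O}(n^{10/3}d^{2/3}\log(nd)/(m^2 k^{1/3}))$.

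The arithmetic of plugging into $\theta$, $\delta$, $\triangle$, and $\tau_0$ is routine; the step needing most care is the $\Lambda$ optimization together with the episode-to-step conversion. The subtlety is that $\Lambda$ governs a genuine tradeoff: it appears additively in $\triangle$ (larger $\Lambda$ worsens the exploration term) and inversely in each episode length (smaller $\Lambda$ lengthens episodes and inflates $T_0$), so the balancing choice must simultaneously keep $\triangle$ at its minimal order and maintain $\theta\le 1$, while the $\Lambda$-dependence of $\theta$ in the multiplicative case shifts the optimal exponent away from the gradient case. I would finish by confirming $\tau_0=\Theta(n)$ in both cases, which is what collapses the logarithmic factor to $\Theta(\log(nd))$.
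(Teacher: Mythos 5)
Your proposal is correct and follows essentially the same route as the paper: substitute $\alpha=bm/2$, $\beta=bm$, $\lambda=bmk$, $\epsilon=4bm/n$ from Proposition~\ref{pro:para} and Lemma~\ref{lem:approx} into Theorem~\ref{thm:Nash2} with $\eta=\Theta(1/\lambda)$, choose $\Lambda$ to balance the two terms of $\triangle$ (yielding $\Lambda=\Theta(k^{1/2}/(nd))$ for gradient descent and $\Lambda=\Theta((k/(nd^2))^{1/3})$ for multiplicative updates, exactly the paper's choices $\Lambda=\sqrt{\epsilon/(2\eta n)}/(\beta d)$ and $\Lambda=(\epsilon/(\eta\beta^2 d^2))^{1/3}$), and convert episodes to steps via $T_0=\tau_0\cdot\nu n^2\log(nd\tau_0)/(\Lambda m^2)$ with $\tau_0=\alpha/\delta=\Theta(n)$. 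The arithmetic checks out in both cases.
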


From the corollary, we see that playing the gradient descent algorithm guarantees a smaller error $\triangle$, as $k \le d$. Take for example the load balancing game studied in \cite{kleinberg:piliouras:tardos:load}, which has $d=n$ and $k=m=1$. The gradient-descent algorithm can achieve $\triangle = \mc{O}(1/n)$, while the multiplicative updates algorithm only has $\triangle = \mc{O}(1)$. Although the convergence time $T_0$ of the gradient-descent algorithm may look higher in the the corollary, one can in fact make it smaller by choosing a smaller $\Lambda$ (with a slightly increased $\triangle$), so that both of its $T_0$ and $\triangle$ are smaller than those of the multiplicative updates algorithm. Note that to have a small error, both algorithms need $n$, the number of players, to be large. This seems unavoidable in the bandit model considered here, because the amount of flow each player has is as large as $1/n$, which limits the accuracy each player can have on the estimations of the path costs and gradient vectors.

Implication of the convergence, including approximate equilibria and guarantees regarding social costs, will be discussed in Section~\ref{sec:imp}.

\subsubsection*{Analysis}
The proof of the theorem is based on the idea of Section~\ref{sec:con} that if each player $i$ can actually have her portion $\nabla_i\Phi(\xx{x}{\tau}{})$ of the gradient vector $\nabla\Phi(\xx{x}{\tau}{})$ and do the update
$$\xx{x}{\tau+1}{i} = \arg\min_{z_i\in\mc{K}_i} \left\{\eta_i \langle \nabla_i \Phi(\xx{x}{\tau}{}), z_i\rangle + \BRi{z_i,\xx{x}{\tau}{i}}\right\},$$
then the collective update of all players can be seen as doing some generalized mirror descent on the convex function $\Phi$. Then it was shown in Section~\ref{sec:con} that doing such a mirror descent on any convex function will lead to a quick convergence. However, in our setting, each player can only obtain an estimate $\xx{\hat{g}}{\tau}{i}$ of the desired vector $\nabla_i\Phi(\xx{x}{\tau}{})$, 
which only allows players collectively to perform an ``approximate" mirror descent on the convex function.
Unfortunately, the convergence analysis of Section~\ref{sec:con} relies on showing that $\Phi(\xx{x}{\tau+1}{}) \le \Phi(\xx{x}{\tau}{})$ for every $\tau$, which depends crucially on being able to move (in the mirror space) precisely in the opposite direction of the gradient vector. As we now can no longer move in that precise direction, the next $\Phi$ value may increase, and it is not clear if convergence can still be guaranteed. Interestingly, we provide a positive answer, with the following theorem. In fact it works for any general convex function $\Phi$, which is smooth in the sense of Definition~\ref{def:smooth}.

\begin{theorem} \label{thm:GD2}
Let $\Phi: \mc{K}\rightarrow\mathbb{R}$ be any $(\alpha,\beta,\lambda)$-smooth convex function, with $q = \arg\min_{z\in\mathcal{K}} \Phi(z)$, where $\mc{K}=\mc{K}_1 \times \cdots \times \mc{K}_n$, such that $\|x - y\|_1 \le 2$ for any $x,y \in \mc{K}$. Suppose for each $i$, we have a convex $\bK_i \subseteq \mc{K}_i$ with the property that any $x \in \mc{K}$ has some $y \in \bK=\bK_1 \times \cdots \times \bK_n$ such that $\|x-y\|_1 \le d \Lambda$,
and suppose moreover that we have $R_i$'s satisfying Assumption~\ref{as:cond_} so $\Gamma$ is the parameter therein.
Now consider the update rule in (\ref{eq:update1}), with each $\eta_i \le 1/\lambda$, using $\xx{\hat{g}}{\tau}{i}$ such that
$$\|\xx{\hat{g}}{\tau}{i} - \nabla \Phi_i(\xx{x}{\tau}{}) \|_\infty \le \epsilon.$$
Consider any $\eta$ such that $\eta \le \eta_i$ for any $i$ and $\theta = \sqrt{\eta \Gamma \epsilon n} \le 1$, and let $\delta = 6 \epsilon + \theta \beta d \Lambda$. Then for $\tau_0 = \alpha/\delta$ and $\triangle = 3\delta /\theta$, it holds that for any $\tau \ge \tau_0$,
$$\Phi(\xx{x}{\tau}{}) \le \Phi(q) + \triangle$$ or equivalently,
$$\Phi(\xx{x}{\tau}{}) \le \Phi(q) + 3(6\sqrt{\frac{\epsilon}{\eta \Gamma n}}+\beta d \Lambda).$$
\end{theorem}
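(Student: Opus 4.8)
The plan is to reuse the machinery behind Theorem~\ref{thm:GD}, but to replace the monotone-descent step (Lemma~\ref{lem:less}), which is no longer available once we move along $\xx{\hat g}{\tau}{}$ instead of $\nabla\Phi(\xx{x}{\tau}{})$, by a single-step \emph{drift} inequality plus a counting argument. Write $g^\tau=\nabla\Phi(\xx{x}{\tau}{})$ and $a_\tau=\Phi(\xx{x}{\tau}{})-\Phi(q)\ge 0$, and fix once and for all an approximate minimizer $\bq\in\bK$ with $\|q-\bq\|_1\le d\Lambda$ (furnished by the covering hypothesis on $\bK$); since $\|\nabla\Phi\|_\infty\le\beta$ this gives $\Phi(\bq)\le\Phi(q)+\beta d\Lambda$. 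The target is a recursion $a_{\tau+1}\le(1-\theta)a_\tau+\delta$, from which the pointwise bound follows by elementary counting.

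To obtain the drift inequality I would start, as in (\ref{eq:eta}) (combining Definition~\ref{def:smooth} with Assumption~\ref{as:cond_} and $\eta_i\le 1/\lambda$), from
\begin{equation*}
\Phi(\xx{x}{\tau+1}{})\le\Phi(\xx{x}{\tau}{})+\langle g^\tau,\xx{x}{\tau+1}{}-\xx{x}{\tau}{}\rangle+\sum_i\frac{1}{\eta_i}\BRi{\xx{x}{\tau+1}{i},\xx{x}{\tau}{i}}.
\end{equation*}
The new ingredient is to test the optimality of the mirror update (\ref{eq:update1_}) against the \emph{interpolated} point $z_i=(1-\theta)\xx{x}{\tau}{i}+\theta\bq_i\in\bK_i$, for a fraction $\theta\in(0,1]$ to be fixed, rather than against $\xx{x}{\tau}{i}$ (which only recovers monotonicity, and only in the exact case). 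Doing so, writing $g^\tau=\xx{\hat g}{\tau}{}+(g^\tau-\xx{\hat g}{\tau}{})$ everywhere an inner product appears, and using the diameter bound $\|\cdot\|_1\le 2$ to charge each error inner product by at most $2\epsilon$, I would reach
\begin{equation*}
\Phi(\xx{x}{\tau+1}{})\le\Phi(\xx{x}{\tau}{})+\theta\langle g^\tau,\bq-\xx{x}{\tau}{}\rangle+\sum_i\frac{1}{\eta_i}\BRi{z_i,\xx{x}{\tau}{i}}+O(\epsilon).
\end{equation*}
Convexity gives $\theta\langle g^\tau,\bq-\xx{x}{\tau}{}\rangle\le\theta(\Phi(\bq)-\Phi(\xx{x}{\tau}{}))\le\theta(\beta d\Lambda-a_\tau)$, and the left inequality of Assumption~\ref{as:cond_} bounds $\BRi{z_i,\xx{x}{\tau}{i}}\le\frac1\Gamma\|z_i-\xx{x}{\tau}{i}\|_2^2=\frac{\theta^2}{\Gamma}\|\bq_i-\xx{x}{\tau}{i}\|_2^2$; since each player routes flow $1/n$ we have $\sum_i\|\bq_i-\xx{x}{\tau}{i}\|_2^2\le 4/n$, so this quadratic term is at most $4\theta^2/(\eta\Gamma n)$. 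The decisive choice $\theta=\sqrt{\eta\Gamma\epsilon n}$ makes this quadratic penalty exactly $4\epsilon$, collapsing the whole additive error to $\delta=6\epsilon+\theta\beta d\Lambda$ and yielding $a_{\tau+1}\le(1-\theta)a_\tau+\delta$.

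It remains to convert the recursion into the pointwise statement by a two-regime argument (not a telescoping average). Two facts are read off directly: (a) if $a_\tau\ge 2\delta/\theta$ then $a_{\tau+1}\le a_\tau-\delta$, a decrease by the fixed amount $\delta$; and (b) if $a_\tau\le\triangle=3\delta/\theta$ then $a_{\tau+1}\le\triangle$, because when $a_\tau\le 2\delta/\theta$ one has $a_{\tau+1}\le a_\tau+\delta\le3\delta/\theta$ (as $\theta\le 1$ gives $\delta\le\delta/\theta$), while when $2\delta/\theta<a_\tau\le3\delta/\theta$ fact (a) applies and gives $a_{\tau+1}\le a_\tau-\delta\le\triangle$. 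Fact (a) together with $a_0\le\Phi(\xx{x}{0}{})\le\alpha$ forces $a_\tau$ to drop to at most $2\delta/\theta$ at some first step $\tau^\ast\le\alpha/\delta=\tau_0$ (each earlier step subtracts $\delta$ and $a_\tau\ge 0$); fact (b) then keeps $a_\tau\le\triangle$ for all $\tau\ge\tau^\ast$, hence for all $\tau\ge\tau_0$. This is precisely $\Phi(\xx{x}{\tau}{})\le\Phi(q)+\triangle$.

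The main obstacle is the construction that substitutes for monotone descent: choosing both the test point and the scale $\theta$ correctly. Testing against $\bq$ itself ($\theta=1$) inflates the quadratic Bregman term to order $1/(\eta\Gamma n)$, far larger than the per-step progress, while testing against $\xx{x}{\tau}{}$ yields no usable progress once the gradient is corrupted by the $\epsilon$-error; the interpolation at $\theta=\sqrt{\eta\Gamma\epsilon n}$ is exactly the scale that balances the linear progress $\theta a_\tau$ against the quadratic cost, and getting this balance right is the crux. A secondary point is that the comparator must be $\bq\in\bK$ rather than $q\in\mc{K}$, so that the interpolation stays feasible; this is where the restriction of $\mc{K}$ to $\bK$ is paid for, contributing the $\beta d\Lambda$ term to $\delta$ and hence to $\triangle$.
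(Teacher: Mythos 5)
Your proposal follows essentially the same route as the paper: the drift inequality $a_{\tau+1}\le(1-\theta)a_\tau+\delta$ obtained by testing the mirror update against the interpolated comparator $z_i=(1-\theta)\xx{x}{\tau}{i}+\theta\bq_i$ with $\theta=\sqrt{\eta\Gamma\epsilon n}$ is exactly the paper's Lemma~\ref{lem:less2}, and your two-regime counting argument matches the paper's proof of Theorem~\ref{thm:GD2} (organized as an induction rather than via the ``highest point after $\tau_1$''). The only discrepancy is cosmetic: the paper bounds $\sum_i\|\bq_i-\xx{x}{\tau}{i}\|_2^2\le\frac1n\|\bq-\xx{x}{\tau}{}\|_1\le 2/n$ (using the coordinatewise bound $|\bq_{i,s}-\xx{x}{\tau}{i,s}|\le 1/n$), so the quadratic penalty is $2\epsilon$ and the additive error is the stated $6\epsilon+\theta\beta d\Lambda$, whereas your $4/n$ would give $8\epsilon$.
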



Note that Theorem~\ref{thm:Nash2} follows immediately from Theorem~\ref{thm:GD2}. This is because our function $\Phi$ is convex and furthermore, any $x \in \mc{K}$ has some $y \in \bK$ with $\|x - y \|_1 = \sum_i \|x_i - y_i \|_1 \le  \sum_i d\Lambda (1/n) = d \Lambda$. On the other hand, Theorem~\ref{thm:GD} works for a general convex function, which may have independent interest of its own. To prove Theorem~\ref{thm:GD2}, we rely on the following key lemma.

\begin{lemma} \label{lem:less2}
Let $\theta = \sqrt{\eta \Gamma \epsilon n}$ and $\delta = 6 \epsilon + \theta \beta d \Lambda$. Then given the assumption of Theorem~\ref{thm:GD2}, we have
$$\Phi(\xx{x}{\tau+1}{}) \le \Phi(\xx{x}{\tau}{}) - \theta \left(\Phi(\xx{x}{\tau}{}) - \Phi(q)\right) + \delta,$$
for any integer $\tau \ge 0$. This means that
$$\Phi(\xx{x}{\tau+1}{}) \le \Phi(\xx{x}{\tau}{}) - \delta$$
whenever $\Phi(\xx{x}{\tau}{}) - \Phi(q) \ge 2\delta / \theta$.
\end{lemma}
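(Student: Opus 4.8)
The plan is to prove the one-step inequality $\Phi(\xx{x}{\tau+1}{}) \le \Phi(\xx{x}{\tau}{}) - \theta(\Phi(\xx{x}{\tau}{})-\Phi(q)) + \delta$; the ``whenever'' consequence is then immediate, since $\Phi(\xx{x}{\tau}{})-\Phi(q) \ge 2\delta/\theta$ forces $\theta(\Phi(\xx{x}{\tau}{})-\Phi(q)) \ge 2\delta$ and hence $\Phi(\xx{x}{\tau+1}{}) \le \Phi(\xx{x}{\tau}{}) - 2\delta + \delta = \Phi(\xx{x}{\tau}{}) - \delta$. I would start exactly as in (\ref{eq:eta}): writing $g = \nabla\Phi(\xx{x}{\tau}{})$ and using $\nabla^2\Phi \preceq \lambda I$ (from $(\alpha,\beta,\lambda)$-smoothness) together with the right inequality $\|x_i-y_i\|_2^2 \le 2\BRi{x_i,y_i}$ of Assumption~\ref{as:cond_} and $\lambda \le 1/\eta_i$, the second-order Taylor bound gives $\Phi(\xx{x}{\tau+1}{}) \le \Phi(\xx{x}{\tau}{}) + \langle g, \xx{x}{\tau+1}{}-\xx{x}{\tau}{}\rangle + \sum_i \frac{1}{\eta_i}\BRi{\xx{x}{\tau+1}{i},\xx{x}{\tau}{i}}$. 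Replacing $g$ by its estimate costs at most $\langle g - \xx{\hat g}{\tau}{}, \xx{x}{\tau+1}{}-\xx{x}{\tau}{}\rangle \le \epsilon\|\xx{x}{\tau+1}{}-\xx{x}{\tau}{}\|_1 \le 2\epsilon$, using $\|x-y\|_1\le 2$ on $\mc{K}$.

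The crux is to bound $\langle \xx{\hat g}{\tau}{}, \xx{x}{\tau+1}{}-\xx{x}{\tau}{}\rangle + \sum_i \frac{1}{\eta_i}\BRi{\xx{x}{\tau+1}{i},\xx{x}{\tau}{i}}$, where monotone descent is no longer available. Because $\xx{x}{\tau+1}{i}$ is the exact minimizer over $\bK_i$ of $\eta_i\langle \xx{\hat g}{\tau}{i}, z\rangle + \BRi{z,\xx{x}{\tau}{i}}$, I may compare it to any competitor $u_i \in \bK_i$: dividing the resulting optimality inequality by $\eta_i$ yields $\langle \xx{\hat g}{\tau}{i}, \xx{x}{\tau+1}{i}-\xx{x}{\tau}{i}\rangle + \frac{1}{\eta_i}\BRi{\xx{x}{\tau+1}{i},\xx{x}{\tau}{i}} \le \langle \xx{\hat g}{\tau}{i}, u_i-\xx{x}{\tau}{i}\rangle + \frac{1}{\eta_i}\BRi{u_i,\xx{x}{\tau}{i}}$. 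Since $q$ need not lie in $\bK$, I would take the guaranteed nearby point $\bq \in \bK$ with $\|\bq-q\|_1 \le d\Lambda$ and set the competitor to the convex combination $u_i = (1-\theta)\xx{x}{\tau}{i} + \theta\bq_i \in \bK_i$ (using convexity of $\bK_i$ and $\theta\le 1$). This makes $u_i - \xx{x}{\tau}{i} = \theta(\bq_i-\xx{x}{\tau}{i})$, so the linear term becomes $\theta\langle \xx{\hat g}{\tau}{}, \bq-\xx{x}{\tau}{}\rangle$, while the penalty is controlled by the left inequality of Assumption~\ref{as:cond_}: $\BRi{u_i,\xx{x}{\tau}{i}} \le \frac{1}{\Gamma}\|u_i-\xx{x}{\tau}{i}\|_2^2 = \frac{\theta^2}{\Gamma}\|\bq_i-\xx{x}{\tau}{i}\|_2^2$. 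Here the atomic structure is essential: each of $\bq_i,\xx{x}{\tau}{i}$ is nonnegative with total mass $1/n$ and entries at most $1/n$, whence $\|\bq_i-\xx{x}{\tau}{i}\|_2^2 \le \|\bq_i-\xx{x}{\tau}{i}\|_\infty\|\bq_i-\xx{x}{\tau}{i}\|_1 \le (1/n)(2/n)$ and $\sum_i \|\bq_i-\xx{x}{\tau}{i}\|_2^2 \le 2/n$; then with $\eta_i\ge\eta$ and the balancing choice $\theta^2 = \eta\Gamma\epsilon n$ the whole penalty is at most $\frac{\theta^2}{\eta\Gamma}\cdot\frac{2}{n} = \epsilon n\cdot\frac{2}{n} = 2\epsilon$.

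Finally I would convert the surviving linear term into a potential gap. Swapping $\xx{\hat g}{\tau}{}$ back to $g$ costs $\theta\langle \xx{\hat g}{\tau}{}-g, \bq-\xx{x}{\tau}{}\rangle \le 2\theta\epsilon \le 2\epsilon$, and convexity of $\Phi$ gives $\theta\langle g, \bq-\xx{x}{\tau}{}\rangle \le \theta(\Phi(\bq)-\Phi(\xx{x}{\tau}{}))$; since $\|\nabla\Phi\|_\infty\le\beta$ and $\|\bq-q\|_1\le d\Lambda$ we have $\Phi(\bq) \le \Phi(q)+\beta d\Lambda$, turning this into $-\theta(\Phi(\xx{x}{\tau}{})-\Phi(q)) + \theta\beta d\Lambda$. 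Collecting the three $\epsilon$-contributions (two $2\epsilon$ terms from the gradient-to-estimate swaps and $2\epsilon$ from the Bregman penalty) into $6\epsilon$ and adding $\theta\beta d\Lambda$ produces exactly $\delta$, completing the one-step bound. The main obstacle is conceptual rather than computational: the approximate gradient destroys the monotone-descent argument of Section~\ref{sec:con}, so the proof must only insist on a $\theta$-fraction of progress toward the optimum, and the delicate point is to calibrate $\theta$ (through $\theta^2=\eta\Gamma\epsilon n$) so that the Bregman cost of that fractional step is of the same $O(\epsilon)$ order as the gradient-estimation error---while simultaneously paying only the $\theta\beta d\Lambda$ term for the fact that the true minimizer $q$ lies outside the restricted feasible set $\bK$.
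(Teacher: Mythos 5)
Your proposal is correct and follows essentially the same route as the paper's proof: smoothness plus Assumption~\ref{as:cond_} to get the Bregman-penalized descent inequality, the $2\epsilon$ swaps between $\xx{g}{\tau}{}$ and $\xx{\hat g}{\tau}{}$, the competitor $z_i=(1-\theta)\xx{x}{\tau}{i}+\theta\bq_i$, the $\|\bq_i-\xx{x}{\tau}{i}\|_2^2\le\frac{1}{n}\|\bq_i-\xx{x}{\tau}{i}\|_1$ trick, the calibration $\theta^2=\eta\Gamma\epsilon n$, and the $\theta\beta d\Lambda$ penalty for $q\notin\bK$. The only (harmless) deviation is that you take $\bq$ to be the point of $\bK$ near $q$ rather than $\arg\min_{x\in\bK}\Phi(x)$ as the paper does; both choices yield $\Phi(\bq)\le\Phi(q)+\beta d\Lambda$ and the same final bound.
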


This lemma helps us deal with the difficulty we discussed before. That is, although doing mirror descent using the exact gradient vector can guarantee a decreased $\Phi$ value, as shown in Section~\ref{sec:con}, this may no longer hold in general when using an approximate gradient vector. Nevertheless, Lemma~\ref{lem:less2} shows that as long as the current $\Phi$ value has a large enough gap from the minimum one, doing mirror descent using an approximate gradient vector can still guarantee a decreased $\Phi$ value.
We will prove the lemma in Section~\ref{sec:less2}. Assuming the lemma, now we proceed to prove Theorem~\ref{thm:GD2}.

\begin{proof} [Proof of Theorem~\ref{thm:GD2}]
From Lemma~\ref{lem:less2}, we know that as long as $\Phi(\xx{x}{\tau}{}) - \Phi(\xx{q}{}{}) \ge 2\delta / \theta$, the next $\Phi(\xx{x}{\tau+1}{})$ will decrease from $\Phi(\xx{x}{\tau}{})$ by at least the amount $\delta$. Since $\Phi(x) \le \alpha$ for any $x \in \mc{K}$, according to the smoothness assumption, and $\Phi(q) \ge 0$, we know that there must be some episode $\tau_1 \le \tau_0 = \alpha/\delta$
such that $\Phi(\xx{x}{\tau_1}{}) - \Phi(q) < 2\delta / \theta$. After that, the value of $\Phi$ may again exceed the threshold $\Phi(\xx{q}{}{}) + 2\delta / \theta$, but we next argue that it cannot go much higher than that.

Let $\tau_2$ denote the episode after $\tau_1$ such that $\Phi(\xx{x}{\tau_2}{})$ reaches the highest value. Then $\Phi(\xx{x}{\tau_2}{})$ much have just been increased from its previous episode with a smaller $\Phi$ value, which can only happen when
$$\Phi(\xx{x}{\tau_2-1}{}) - \Phi(\xx{q}{}{}) \le 2\delta / \theta.$$
On the other hand, we know from Lemma~\ref{lem:less2} that
$$\Phi(\xx{x}{\tau_2}{}) \le \Phi(\xx{x}{\tau_2-1}{}) + \delta.$$
Consequently, for any $\tau \ge \tau_0 \ge \tau_1$, we must have
$$\Phi(\xx{x}{\tau}{}) \le \Phi(\xx{x}{\tau_2}{}) \le \Phi(\xx{q}{}{}) + 2\delta / \theta + \delta \le \Phi(\xx{q}{}{}) + 3 \delta / \theta,$$
which completes the proof of Theorem~\ref{thm:GD2}.
\end{proof}

\subsubsection{Proof of Lemma~\ref{lem:less2}} \label{sec:less2}

To simplify our notation, let us denote the gradient vector $\nabla \Phi(\xx{x}{\tau}{})$ by $\xx{g}{\tau}{}=(\xx{g}{\tau}{1}, \dots, \xx{g}{\tau}{n})$, with $\xx{g}{\tau}{i}=\nabla_i \Phi(\xx{x}{\tau}{})$.
Then from the assumption that $\Phi$ is $(\alpha,\beta,\lambda)$-smooth, we have
$$\Phi(\xx{x}{\tau+1}{}) \le \Phi(\xx{x}{\tau}{}) + \langle \xx{g}{\tau}{}, \xx{x}{\tau+1}{}-\xx{x}{\tau}{}\rangle + \frac{\lambda}{2} \|\xx{x}{\tau+1}{}-\xx{x}{\tau}{}\|_2^2,$$
with the right-hand side above being
\begin{eqnarray*}
\lefteqn{\Phi(\xx{x}{\tau}{}) + \sum_{i \in N} \left(\langle \xx{g}{\tau}{i}, \xx{x}{\tau+1}{i}-\xx{x}{\tau}{i}\rangle + \frac{\lambda}{2} \|\xx{x}{\tau+1}{i}-\xx{x}{\tau}{i}\|_2^2\right)} \\
&\le& \Phi(\xx{x}{\tau}{}) + \sum_{i \in N} \left(\langle \xx{g}{\tau}{i}, \xx{x}{\tau+1}{i}-\xx{x}{\tau}{i}\rangle + \frac{1}{\eta_i} \BRi{\xx{x}{\tau+1}{i},\xx{x}{\tau}{i}}\right), 
\end{eqnarray*}
according to Assumption~\ref{as:cond_} together with the assumption that $\eta_i \le 1/\lambda$ and hence $\lambda \le 1/\eta_i$, as well as the fact that $\BRi{\xx{x}{\tau+1}{},\xx{x}{\tau}{}}$ is nonnegative. To bound the sum above, let us first change $\langle \xx{g}{\tau}{i}, \xx{x}{\tau+1}{i}-\xx{x}{\tau}{i}\rangle$ into
$$\langle \xx{\hat{g}}{\tau}{i}, \xx{x}{\tau+1}{i}-\xx{x}{\tau}{i}\rangle + \langle \xx{g}{\tau}{i} - \xx{\hat{g}}{\tau}{i}, \xx{x}{\tau+1}{i}-\xx{x}{\tau}{i}\rangle,$$
so that each
$$\langle \xx{g}{\tau}{i}, \xx{x}{\tau+1}{i}-\xx{x}{\tau}{i}\rangle + \frac{1}{\eta_i} \BRi{\xx{x}{\tau+1}{i},\xx{x}{\tau}{i}}$$
now becomes
$$\langle \xx{\hat{g}}{\tau}{i}, \xx{x}{\tau+1}{i}-\xx{x}{\tau}{i}\rangle + \frac{1}{\eta_i} \BRi{\xx{x}{\tau+1}{i},\xx{x}{\tau}{i}} + \langle \xx{g}{\tau}{i}-\xx{\hat{g}}{\tau}{i}, \xx{x}{\tau+1}{i}-\xx{x}{\tau}{i}\rangle.$$
The sum of the first two terms above according to the definition of $\xx{x}{\tau+1}{i}$ in (\ref{eq:update1}) is at most
$$\langle \xx{\hat{g}}{\tau}{i}, z_i-\xx{x}{\tau}{i}\rangle + \frac{1}{\eta_i} \BRi{z_i,\xx{x}{\tau}{i}},$$
for any $z_i \in \bK_i$, while the last term above by the Cauchy-Schwarz inequality is at most
$$\| \xx{g}{\tau}{i}-\xx{\hat{g}}{\tau}{i}\|_\infty \|\xx{x}{\tau+1}{i}-\xx{x}{\tau}{i}\|_1 \le \epsilon \|\xx{x}{\tau+1}{i}-\xx{x}{\tau}{i}\|_1,$$
according to the assumption that $\| \xx{g}{\tau}{i}-\xx{\hat{g}}{\tau}{i}\|_\infty \le \epsilon$. Since
$$\sum_{i \in N} \epsilon\|\xx{x}{\tau+1}{i}-\xx{x}{\tau}{i}\|_1 = \epsilon\|\xx{x}{\tau+1}{}-\xx{x}{\tau}{}\|_1 \le 2 \epsilon$$
from the assumption of the lemma, we have
\begin{eqnarray}
\lefteqn{\sum_{i \in N} \left(\langle \xx{g}{\tau}{i}, \xx{x}{\tau+1}{i}-\xx{x}{\tau}{i}\rangle + \frac{1}{\eta_i} \BRi{\xx{x}{\tau+1}{i},\xx{x}{\tau}{i}}\right)} \nonumber\\
&\le& \sum_{i \in N} \langle \xx{\hat{g}}{\tau}{i}, z_i-\xx{x}{\tau}{i}\rangle + \frac{1}{\eta} \sum_{i \in N} \BRi{z_i,\xx{x}{\tau}{i}} + 2\epsilon, \label{eq:sum}
\end{eqnarray}
for any $z_i \in \bK_i$.

Our goal then is to choose these $z_i$'s to make the above small enough. Our idea is to take $z=(z_1, \dots, z_n)$ as a point that moves from $\xx{x}{\tau}{}$ towards
$$\bq = (\bq_1, \dots, \bq_n) = \arg\min_{x \in \bK} \Phi(x).$$
That is, we consider $$z_i=\xx{x}{\tau}{i} + \theta (\bq_i - \xx{x}{\tau}{i}),$$
for some parameter $\theta \in [0,1]$ to be determined later. Note that $z_i$ does belong to $\bK_i$ because it is a convex combination of $\xx{x}{\tau}{i}$ and $\bq_i$, both of which are in the convex set $\bK_i$. With $z_i-\xx{x}{\tau}{i} = \theta (\bq_i - \xx{x}{\tau}{i})$, we are now able to relate the bound in (\ref{eq:sum}) to $\Phi(\xx{x}{\tau}{}) - \Phi(\bq)$ because the first term there is
\begin{eqnarray}
\lefteqn{\sum_{i \in N} \langle \xx{\hat{g}}{\tau}{i}, z_i-\xx{x}{\tau}{i}\rangle} \nonumber\\
&=& \sum_{i \in N} \langle \xx{g}{\tau}{i}, z_i-\xx{x}{\tau}{i}\rangle + \sum_{i \in N} \langle \xx{\hat{g}}{\tau}{i}-\xx{g}{\tau}{i}, z_i-\xx{x}{\tau}{i}\rangle \nonumber\\
&=& \theta \sum_{i \in N} \langle \xx{g}{\tau}{i}, \bq_i - \xx{x}{\tau}{i} \rangle + \theta \sum_{i \in N} \langle \xx{\hat{g}}{\tau}{i}-\xx{g}{\tau}{i}, \bq_i - \xx{x}{\tau}{i} \rangle, \label{eq:z-x}
\end{eqnarray}
where the first term in (\ref{eq:z-x}) is
$$\theta \langle \xx{g}{\tau}{}, \bq - \xx{x}{\tau}{} \rangle \le - \theta \left(\Phi(\xx{x}{\tau}{}) - \Phi(\bq)\right)$$
by the convexity of $\Phi$, while the second term in (\ref{eq:z-x}) by the Cauchy-Schwarz inequality is at most
$$\theta \sum_{i \in N} \|\xx{\hat{g}}{\tau}{i}-\xx{g}{\tau}{i}\|_\infty \|z_i-\xx{x}{\tau}{i}\|_1 \le \theta \epsilon \|z -\xx{x}{\tau}{}\|_1 \le 2\epsilon,$$
as $\|z -\xx{x}{\tau}{}\|_1 \le 2$ and $\theta \le 1$. It remains to bound the second term in (\ref{eq:sum}). Note that according to Assumption~\ref{as:cond_},
$$\BRi{z_i,\xx{x}{\tau}{i}} \le \frac{1}{\Gamma} \|z_i-\xx{x}{\tau}{i}\|_2^2 = \frac{\theta^2}{\Gamma} \|\bq_i - \xx{x}{\tau}{i}\|_2^2,$$
where
$$\|\bq_i - \xx{x}{\tau}{i}\|_2^2 = \sum_s \left(\bq_{i,s} - \xx{x}{\tau}{i,s}\right)^2 \le \sum_s \frac{1}{n}\left|\bq_{i,s} - \xx{x}{\tau}{i,s}\right|,$$
since $\bq_{i,s},\xx{x}{\tau}{i,s} \in [0,\frac{1}{n}]$ for every $s$. This means that
$$\frac{1}{\eta} \sum_{i \in N} \BRi{z_i,\xx{x}{\tau}{i}} \le \frac{\theta^2}{\eta \Gamma n} \|\xx{x}{\tau}{} - \bq\|_1 \le \frac{2\theta^2}{\eta \Gamma n}.$$
By plugging these bounds into (\ref{eq:sum}), we have
\begin{eqnarray*}
\lefteqn{\sum_{i \in N} \left(\langle \xx{g}{\tau}{i}, \xx{x}{\tau+1}{i}-\xx{x}{\tau}{i}\rangle + \frac{1}{\eta_i} \BRi{\xx{x}{\tau+1}{i},\xx{x}{\tau}{i}}\right)}\\
&\le& - \theta \left(\Phi(\xx{x}{\tau}{}) - \Phi(\bq)\right) + 2\epsilon + \frac{2\theta^2}{\eta \Gamma n} + 2 \epsilon,
\end{eqnarray*}
which then implies that
\begin{equation}\label{eq:bq}
\Phi(\xx{x}{\tau+1}{}) \le \Phi(\xx{x}{\tau}{}) - \theta \left(\Phi(\xx{x}{\tau}{}) - \Phi(\bq)\right) + \frac{2\theta^2}{\eta \Gamma n} + 4 \epsilon,
\end{equation}
for any $\theta \in [0,1]$.

The bound in (\ref{eq:bq}) is still not satisfactory as it involves $\bq$ instead of $q$. To relate $\Phi(\bq)$ to $\Phi(q)$, note that according to the assumption, there exists some $q' \in \bK$ such that $\|q-q' \|_1 \le d \Lambda,$ while as $\bq$ minimizes $\Phi$ over $\bK$, we have
$$\Phi(\bq) \le \Phi(q') \le \Phi(q) + \langle \nabla \Phi(q'), q'-q \rangle$$
by the convexity of $\Phi$. Then, by the Cauchy-Schwarz inequality, we have
$$\Phi(\bq) \le \Phi(q) + \|\nabla \Phi(q')\|_\infty \|q'-q \|_1 \le \Phi(q) + \beta d \Lambda,$$
as $\|\nabla \Phi(q')\|_\infty \le\beta$ by the smoothness assumption on $\Phi$. Finally, by substituting this into (\ref{eq:bq}), we have
$$\Phi(\xx{x}{\tau+1}{}) \le \Phi(\xx{x}{\tau}{}) - \theta \left(\Phi(\xx{x}{\tau}{}) - \Phi(q)\right) + \frac{2\theta^2}{\eta \Gamma n} + 4 \epsilon + \theta \beta d \Lambda.$$

Now let us choose
$\theta = \sqrt{\eta \Gamma n\epsilon}$, and note that $\theta \le 1$ by the assumption on $\eta$. With this choice of $\theta$, we obtain
$$\Phi(\xx{x}{\tau+1}{}) \le \Phi(\xx{x}{\tau}{}) - \theta \left(\Phi(\xx{x}{\tau}{}) - \Phi(q)\right) + \delta,$$
where $$\delta = 6 \epsilon + \theta \beta d \Lambda.$$
This implies that $\Phi(\xx{x}{\tau+1}{}) \le \Phi(\xx{x}{\tau}{}) - \delta$
whenever $\Phi(\xx{x}{\tau}{}) - \Phi(q) \ge 2\delta/\theta$,
which completes the proof of Lemma~\ref{lem:less2}.

\subsubsection{Proof of Corollary~\ref{cor:conv2}} \label{sec:conv2}
Note that the bounds in the theorem depend on the parameters $\Gamma$ and $\Lambda$. In fact, we have the freedom to choose the parameter $\Lambda$ as well as the functions $R_i(\cdot)$'s, while the parameter $\Gamma$ is then determined by them.

The first is to choose $R_i(x_i) = \|x_i\|_2^2 /2$ to have $\BRi{x_i,y_i} = \|x_i - y_i\|_2^2 /2$. In this case, each player's algorithm becomes the gradient-descent algorithm, and we can choose
$$\Gamma = 2\mbox{ and }  \Lambda = \sqrt{\frac{\epsilon}{2\eta n}} \frac{1}{\beta d},$$
which gives us
$$\triangle = 21 \sqrt{\frac{\epsilon}{2\eta n}} \mbox{ and } \tau_0 = \frac{\alpha}{7 \epsilon}.$$
The second example is to choose $R_i(x_i) = \sum_s (x_{i,s} \ln x_{i,s} -x_{i,s})$ to have $\BRi{x_i,y_i} = \sum_s x_{i,s} \ln (x_{i,s} / y_{i,s})$. In this case, each player's algorithm becomes the multiplicative updates algorithm, and according to Proposition~\ref{pro:KL} we can choose
$$\Gamma = \frac{\Lambda}{n} \mbox{ and } \Lambda = \left(\frac{\epsilon}{\eta \beta^2 d^2}\right)^{1/3},$$
which gives us
$$\triangle = 21 \left(\frac{\epsilon \beta d}{\eta} \right)^{1/3} \mbox{ and } \tau_0 = \frac{\alpha}{7 \epsilon}.$$
Although we stated above the convergence bounds in terms of the number of episodes, by incorporating the sizes of the episodes, it follows that the number of steps needed for convergence is at most
$$T_0 = \frac{\nu n^2 \log(nd \tau_0)}{\Lambda m^2} \cdot \tau_0.$$

Finally, let us go back to the congestion game and substitute the parameters $\alpha=bm/2, \beta=bm, \lambda=bmk$ from Proposition~\ref{pro:para} into the bounds above. Let us set the parameter $\epsilon=4bm/n$ according to Lemma~\ref{lem:approx} so that (\ref{eq:err}) holds for every player and episode with high probability. Moreover, let us assume for simplicity that $\eta_i \ge \Omega(1/\lambda)$ so that we have $\eta \ge \Omega(1/\lambda)$. Then we have the corollary.

\subsection{Equilibria and Social Costs} \label{sec:imp}

Now we briefly discuss the implication of the guarantee $\Phi(\xx{x}{\tau}{}) \le \Phi(q)+\triangle$
given by the results in Section~\ref{sec:conv_}.
The first is that such an $\xx{\pi}{\tau}{}$ is an approximate equilibrium in mixed strategies,
where we say that $\pi$ is a $\delta$-equilibrium if for any player $i \in N$ and any paths $s, s' \in \mc{S}_i$ with $x_{i,s}>0$,
$\ex[c_s(X)] \le \ex[c_{s'}(X)] + \delta$, where $X$ is the choice vector sampled according to $x$ and $x_i=\frac{1}{n}\pi_i$ for any $i$.\footnote{There is an alternative way to define an approximate equilibrium.
Let $(y_i(t),X_{-i})$ denote a vector where with no randomness $y_i(t)$ is a vector with $\frac{1}{n}$ in dimension~$t$ and all $0$'s in the rest and still with randomness $X_{-i}$ is the choice vector sampled according to $x$ without dimension $i$ and $x_i=\frac{1}{n}\pi_i$ for any $i$.
If $\pi$ is a $\delta$-equilibrium if for any player $i \in N$ and any paths $s, s' \in \mc{S}_i$ with $x_{i,s}>0$,
$\ex[c_s(y_i(s),X_{-i})] \le \ex[c_{s'}(y_i(s'),X_{-i})] + \delta$, where $(y_i(s),X_{-i}),(y_i(s'),X_{-i})$ are defined above.
We then have that $\ex[c_s(y_i(s),\xx{X}{\tau}{-i})] \le \ex[c_{s'}(y_i(s'),\xx{X}{\tau}{-i})] + \sqrt{8 b m\triangle} + \frac{3bm}{n}$ since any $\xx{x}{\tau}{}$ satisfying the condition $\Phi(\xx{x}{\tau}{}) \le \Phi(q)+\triangle$ must have $c_s(\xx{x}{\tau}{}) \le c_{s'}(\xx{x}{\tau}{}) + \sqrt{8 b m\triangle}$ along with Inequality~(\ref{eq:exp}).
Note that $\frac{bm}{n}$ can be large in general to make such equilibrium less meaningful due to bad approximation,
yet the difference bound is much smaller for some classes of structures of allowed paths.
For example, in load-balancing games, $\ex[c_s(\xx{X}{\tau}{})] \le \ex[c_{s'}(\xx{X}{\tau}{})] + \sqrt{8 b m\triangle}+\frac{3b}{n}$ since any $s$ consists of only one edge,
instead of at most $m$ edges.}
To show this, note that we know from Section~\ref{sec:approximate} that any  $\xx{x}{\tau}{}$ satisfying the condition $\Phi(\xx{x}{\tau}{}) \le \Phi(q)+\triangle$ must have $c_s(\xx{x}{\tau}{}) \le c_{s'}(\xx{x}{\tau}{}) + \sqrt{8 b m\triangle}$.
This and Proposition~\ref{pro:exp} together imply that $\ex[c_s(\xx{X}{\tau}{})]-\frac{Bm}{n} \le \ex[c_{s'}(\xx{X}{\tau}{})] + \sqrt{8 b m\triangle} \le \ex[c_{s'}(\xx{X}{\tau}{})] + \sqrt{8 b m\triangle} $, which gives that $\ex[c_s(\xx{X}{\tau}{})] \le \ex[c_{s'}(\xx{X}{\tau}{})] + \sqrt{8 b m\triangle} + \frac{Bm}{n}$, where $\xx{X}{\tau}{}$ is the choice vector sampled according to $\xx{x}{\tau}{}$.

$\frac{Bm}{n}$ can be large in general to make such equilibrium meaningless due to bad approximation.
Nevertheless, there are natural cases when such approximate equilibrium is meaningful.
As mentioned in Proposition~\ref{pro:exp}, for classes of cost functions with small $B$, small $\frac{Bm}{n}$ can be obtained to give meaningful $\delta$-equilibria.
For example, in particular for linear cost functions where $B=0$, $\ex[c_s(\xx{X}{\tau}{})] \le \ex[c_{s'}(\xx{X}{\tau}{})] + \sqrt{8 b m\triangle}$ since $\ex[c_{s}(\xx{X}{\tau}{})]=c_s(\xx{x}{\tau}{})$ for any $s$;
more generally, even if $B$ is not 0, it could be inherently small enough for some classes of cost functions to make $\frac{Bm}{n}$ small,
recalling that $B$ (defined in Section~\ref{sec:preliminaries}) is a constant with respect to the load, not necessarily to $\frac{m}{n}$.
For some classes of structures of allowed paths, the difference bound is smaller.
For example, in load-balancing games of \cite{kleinberg:piliouras:tardos:load}, it becomes that $\ex[c_s(\xx{X}{\tau}{})] \le \ex[c_{s'}(\xx{X}{\tau}{})] + \sqrt{8 b m\triangle}+\frac{B}{n}$ since any $s$ consists of only one edge,
instead of at most $m$ edges.
Using similar analyses to those in Section~\ref{sec:splittable}, we can also derive bounds on the ratio of a social cost at convergence to that at optimum.

\section{Conclusions and Future Work} \label{sec:conclusions}

We show that the mirror-descent dynamics converges to an approximate equilibrium in nonatomic congestion games. We do this by observing that the dynamics corresponds to a mirror-descent process on a convex potential function of such a game and then proving that the process converges to the minimum of the function. Moreover, we provide bounds on the outcome quality achieved by our dynamics in terms of two social costs: the average individual cost and the maximum individual cost. Finally, we propose a new family of bandit algorithms and show that when each player adopt such an algorithm in an atomic congestion game, their actual joint strategy profile quickly approaches an approximate Nash equilibrium.


There may be other no-regret or even other learning algorithms which could guarantee nice convergence properties or simply good quality of outcomes.
There are more learning algorithms and dynamics to be explored in repeated games, while classes of games are even more numerous.
Beyond learning, there is still a variety of different dynamics to consider in repeated games.

A different line of future work would be to consider appropriate bandit scenarios for market equilibrium problems and to see if generalized mirror-descents with approximate gradients also work there.
Yet another line of work could be extending our framework to other partial-information models by other suitable gradient estimation methods.


\bibliographystyle{plain}


\appendix

\section{Proof of Proposition~\ref{pro:conv}} \label{app:conv}

Recall that
$$\Phi(x)= \sum_{e\in E} \int_0^{\ell_e(x)}c_e(y)dy,$$
where $\ell_e(x) =\sum_{i\in N}\sum_{s:e\in s}x_{i,s}$. Let
$$\psi_e(v)=\int_0^{v}c_e(y)dy$$
so that $\Phi(x)=\sum_{e\in E} \psi_e(\ell_e(x))$. Observe that $\ell_e$ is a linear function of $x \in \mc{K}$, while $\psi_e$ is a convex function of $v \in \mathbb{R}$ as $c_e$ is assumed to be nondecreasing. Then for any $\delta \in [0,1]$ and any $x,x'\in \mc{K}$,
\begin{eqnarray*}
\lefteqn{(1-\delta)\Phi(x)+\delta\Phi(x')}\\
&=& \sum_{e\in E}\left((1-\delta) \psi_e(\ell_e(x))+ \delta \psi_e(\ell_e(x'))\right) \\
&\ge& \sum_{e\in E}\psi_e((1-\delta)\ell_e(x)+\delta \ell_e(x')) \\
&=& \sum_{e\in E}\psi_e(\ell_e((1-\delta)x+\delta x')) \\
&=& \Phi((1-\delta)x+\delta x').
\end{eqnarray*}
This proves that $\Phi$ is convex.


%

\section{Proof of Proposition~\ref{pro:para}} \label{app:para2}
Consider any $x \in \mc{K}$. First, to bound $\Phi(x)$, recall that $c_e(y) \le b y$ for any $y$, so that we have
\begin{eqnarray*}
\Phi(x) &=& \sum_{e \in E} \int_0^{\ell_e(x)} c_e(y) dy\\
&\le& \sum_{e \in E} \int_0^{\ell_e(x)} b y dy\\
&=&  \frac{b}{2} \sum_{e \in E} (\ell_e(x))^2.
\end{eqnarray*}
Since each player has a flow of amount $\frac{1}{n}$, we know that $\ell_e(x) \le 1$ and hence
\begin{eqnarray*}
\sum_{e \in E} (\ell_e(x))^2 &\le& \sum_{e \in E} \ell_e(x)\\
&=& \sum_{e \in E} \sum_{j \in N} \sum_{r \in \mc{S}_j: e \in r} x_{j,r}\\
&=& \sum_{j \in N} \sum_{r \in \mc{S}_j} \sum_{e \in r} x_{j,r}\\
&=& m,
\end{eqnarray*}
which implies that
$\Phi(x) \le \frac{bm}{2}.$

Next, to bound $\|\nabla \Phi(x)\|_\infty$, note that for any $i \in N$ and $s \in \mc{S}_i$, the entry in $\nabla \Phi(x)$ indexed by $(i,s)$ is
$$c_s(x) = \sum_{e \in s} c_e(x) \le bm.$$
Therefore, we have
$\|\nabla \Phi(x)\|_\infty \le b m.$

Finally, let us bound $\nabla^2 \Phi(x)$. Consider any $i,j \in N$, $s \in \mc{S}_i$, and $r \in \mc{S}_j$.
We know that
$$\frac{\partial^2\Phi(x)}{\partial x_{i,s} \partial x_{j,r}} = \sum_{e\in s\cap r} c'_e(\ell_e(x)).$$
From this, we know that each entry of the Hessian matrix $\nabla^2 \Phi(x)$
is either zero or at most $b m$. Consider any $z \in \mathbb{R}^d$. For any $i \in N$ and $s \in \mc{S}_i$, let $\Gamma_i(s)$ denote the set of paths in $\mc{S}_i$ that intersects $s$. Then
\begin{eqnarray*}
z^\top (\nabla^2 \Phi(x)) z &\le& \sum_{(i,s),(j,r)} \frac{\partial^2\Phi(x)}{\partial x_{i,s} \partial x_{j,r}} |z_{i,s}| |z_{i,r}|\\
&\le& b m \sum_i \sum_s |z_{i,s}| \sum_{r \in \Gamma_i(s)} |z_{i,r}|\\
&\le& b m \sum_i \sum_s |z_{i,s}| \sqrt{k \sum_{r \in \Gamma_i(s)} |z_{i,r}|^2}\\
&\le& b m \sum_i \sqrt{\sum_s |z_{i,s}|^2} \sqrt{\sum_s k \sum_{r \in \Gamma_i(s)} |z_{i,r}|^2},
\end{eqnarray*}
where the last two inequalities both use the Cauchy-Schwarz inequality. Now observe that
$$\sum_s \sum_{r \in \Gamma_i(s)} |z_{i,r}|^2 = \sum_r \sum_{s \in \Gamma_i(r)} |z_{i,r}|^2 \le k \sum_r |z_{i,r}|^2$$
and therefore we have
$$z^\top (\nabla^2 \Phi(x)) z \le b m \sum_i k \sum_s |z_{i,s}|^2 = b m k \|z\|_2^2$$
for any $z$. This implies that $\nabla^2 \Phi(x) \preceq \lambda I$ with $\lambda = b m k$, which proves the proposition.

\section{The No-Regret Property of Our Dynamics} \label{app:no-regret}
Mirror descents is known to have the no-regret property even
when the cost functions of the edges can vary with
time. We consider such a setting, where the actual cost of each edge at time step $t$ is $c^t_e(\ell_e(x^t))$,
where $\ell_e(x^t)$ is the load of the edge in question at $t$.
We will define a new cost function $C^t_e(x)$ as follows:
\[C^t_e(x):=\left\{\begin{array}{ll}
c^t_e(x)&\mbox{if $x\leq\ell_e(x^t)$}\\
c^t_e(\ell_e(x^t))&\mbox{otherwise.}
\end{array}\right.\]
Under these new cost functions, the cost of any edge observed at time step $t$, $c^t_e(\ell_e(x^t))$,
is actually the worst possible and any further increase on the load of any edge would have no effect on its cost.
If this optimistic view of the costs were actually true, then the algorithm using bulletin-board posting would perform
exactly as the mirror-descent algorithm and thus preserve the no-regret property.
Yet, the actual cost of any strategy under the real cost
functions $c$, when taking into account the effect of the infinitesimal deviating
agent (of a player), would be at least as bad as that under the optimistic
costs $C$. Thus, the performance of our algorithm is also
of $\epsilon$-regret for $\epsilon\rightarrow 0$ in regards to the best strategy in hindsight under the true costs.


\section{Proof of Proposition~\ref{pro:KL}} \label{app:KL}

According to the definition,
\begin{eqnarray*}
\BRi{x_i,y_i} &=& \sum_s x_{i,s} \ln \left(1+ \frac{x_{i,s} - y_{i,s}}{ y_{i,s}}\right)\\
&\le& \sum_s x_{i,s} \frac{x_{i,s} - y_{i,s}}{y_{i,s}}\\
&=& \sum_s \frac{\left(x_{i,s} - y_{i,s}\right)^2}{y_{i,s}} + \sum_s \left(x_{i,s} - y_{i,s}\right),
\end{eqnarray*}
which using the fact that $\sum_s x_{i,s} = \frac{1}{n} = \sum_s y_{i,s}$ becomes
$$\sum_s \frac{\left(x_{i,s} - y_{i,s}\right)^2}{y_{i,s}} \le \frac{n}{\Lambda} \|x_i - y_i\|_2^2,$$
as we have $y_{i,s} \ge \frac{\Lambda}{n}$ for any $y_i \in \bK_i$. This shows that $\frac{\Lambda}{n} \BRi{x_i,y_i} \le \|x_i - y_i\|_2^2$ for any $x_i, y_i \in \bK_i$. On the other hand, for any $x_i, y_i \in \mc{K}_i$,
$$\|x_i - y_i\|_2^2 \le \|x_i - y_i\|_1^2 \le 2 \BRi{x_i,y_i},$$
by Pinsker's inequality. Thus, we have the proposition.
\end{document}